\documentclass[journal]{IEEEtran}
\IEEEoverridecommandlockouts
\usepackage{cite}
\usepackage{amsmath,amssymb,amsfonts}
\usepackage{algorithmic}
\usepackage{graphicx}
\usepackage{textcomp}
\usepackage{subfigure}
\usepackage{float}
\usepackage{xcolor}
\usepackage{enumerate}
\usepackage{pfnote}
\usepackage{makecell}
\usepackage{amsthm}
\usepackage{multirow,tabularx}
\usepackage{caption}
\usepackage{epstopdf}
\usepackage{geometry}
\UseRawInputEncoding
\hyphenation{op-tical net-works semi-conduc-tor IEEE-Xplore}
\theoremstyle{plain}
\newtheorem{theorem}{Theorem}
\newtheorem{lemma}{Lemma}

\theoremstyle{definition}

\newtheorem{definition}{Definition}
\newtheorem{remark}{Remark}

\def\BibTeX{{\rm B\kern-.05em{\sc i\kern-.025em b}\kern-.08em
    T\kern-.1667em\lower.7ex\hbox{E}\kern-.125emX}}

\geometry{letterpaper, scale=0.85, top=0.7 in, bottom=1.06 in}
\usepackage{amsmath}
\allowdisplaybreaks[3]
\usepackage[colorlinks, linkcolor=blue, anchorcolor=blue, citecolor=blue]{hyperref}

\makeatletter
\renewcommand{\maketag@@@}[1]{\hbox{\m@th\normalsize\normalfont#1}}%
\newcommand{\indicator}[1]{\mathbf{1}\left\{ #1 \right\}}
\makeatother

\begin{document}
\onecolumn

\title{
   Secure Joint Source-Channel Coding for the AWGN Channel with Feedback: A Finite Blocklength Analysis
	}

\author{
\IEEEauthorblockN{
Sheng~Su\IEEEauthorrefmark{1},
Yuhan~Yang\IEEEauthorrefmark{1},
Chao~Qi\IEEEauthorrefmark{2},
Xuan~He\IEEEauthorrefmark{1},
Bin~Dai\IEEEauthorrefmark{1},
Xiaohu~Tang\IEEEauthorrefmark{1}}

\IEEEauthorblockA{
\IEEEauthorrefmark{1}
School of Information Science and Technology, Southwest Jiaotong University, Chengdu, 610031, China.}

\IEEEauthorblockA{
\IEEEauthorrefmark{2}
School of Computer and Software Engineering, Xihua University, Chengdu, 610039, China.}

\{ache, yangyvhan\}@my.swjtu.edu.cn, chaoqi@mail.xhu.edu.cn, \{xhe, daibin, xhutang\}@swjtu.edu.cn.
}

\maketitle

\begin{abstract}
In the literature, it has been shown that the secrecy capacity of the additive white Gaussian noise (AWGN) wiretap channel with noise-free feedback equals the capacity of the same model without secrecy constraint, and the classical Schalkwijk-Kailath (SK) scheme achieves the secrecy capacity.
In this paper, we show that in finite blocklength regime, the SK scheme is not optimal, and propose a modified SK scheme which may perform better than the classical one. Besides this,
this paper establishes a finite blocklength converse for the AWGN wiretap channel with feedback, which can also be viewed as a converse for the same model without secrecy constraint. To the best of the authors' knowledge, this is the first paper to address such a problem, and the results of this paper are further explained via numerical examples.

\end{abstract}

\begin{IEEEkeywords}
Channel feedback, finite blocklength regime, joint source-channel coding, Schalkwijk-Kailath scheme, wiretap channel.
\end{IEEEkeywords}
\section{Introduction}\label{sec1}
\subsection{Problem formulation}\label{sec1-1}
\begin{figure}[htb]
	\centering
	\includegraphics[scale=0.7]{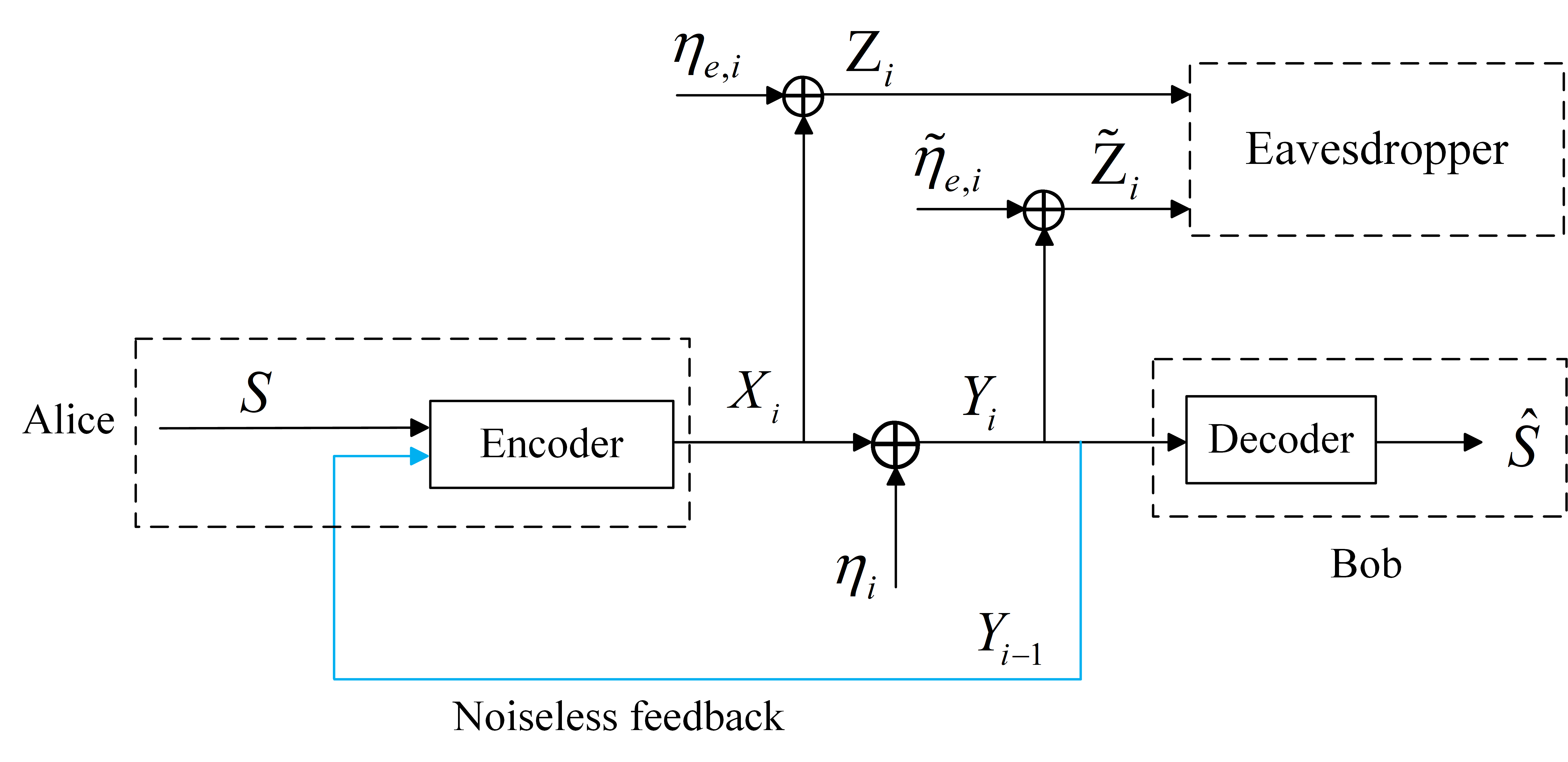}
	\caption{Joint source-channel coding (JSCC) for the Gaussian wiretap channel with noise-free feedback}
	\label{fg01}
\end{figure}
In Figure \ref{fg01}, Alice wishes to transmit a source $S\sim \mathcal{N}(0, \sigma_{s}^2)$ \footnote{In the literature, it often considers the transmission of an independent identical distributed (i.i.d.) source sequence $S^k$ with length $k$, and the rate of the JSCC encoder is defined as $\frac{kH_{s}}{N}$, where $H_{s}$ is the entropy of the source component. To maximize the JSCC rate equals to find a minimum coding blocklength $N$ when $k$ and $H_{s}$ are fixed.
While this paper focuses on a simplified setup, namely, for one symbol source $k=1$, what is the minimum coding blocklength $N$ achieving the largest JSCC rate $\frac{1}{N}$.}
over the additive white Gaussian noise (AWGN) channel to Bob in the leverage of noise-free channel feedback, and Bob
wishes to reconstruct the source $S$ with a tolerable distortion. In the meanwhile, an eavesdropper attempts to eavesdrop $S$ via eavesdropping channels.

At time instant $i\in \{1,2,\ldots,N\}$, the channel inputs-outputs relationships are given by
{\small \begin{equation}\label{1to1-model1}
	\begin{split}
	&Y_{i}=X_{i}+\eta_{i},\\
	&Z_{i}=X_{i}+\eta_{e,i},\,\,\,
	\widetilde{Z}_{i}=X_{i}+\eta_{i}+\widetilde{\eta}_{e,i},
	\end{split}
\end{equation}}%
where
$X_{i}$ is the output of the JSCC encoder,
$Y_{i}$ is Bob's received signal,
$\eta_{i}\sim \mathcal{N}(0, \sigma_{\eta}^2)$ is Bob's channel noise,
$Z_{i}$, $\widetilde{Z}_{i}$ are the outputs of the eavesdropping channels,
$\eta_{e,i}\sim \mathcal{N}(0, \sigma_{e}^2)$, $\widetilde{\eta}_{e,i}\sim \mathcal{N}(0, \widetilde{\sigma}_{e}^2)$
are the corresponding eavesdropping channel noises.
Here $\eta_{i}$, $\eta_{e,i}$ and $\widetilde{\eta}_{e,i}$ are independent identical distributed (i.i.d.) across the time index $i$,
and $S$, $\eta_{i}$, $\eta_{e,i}$, $\widetilde{\eta}_{e,i}$ are independent of one another.

\begin{definition}\label{def01}
	An $(N,P)$-feedback code with average power constraint consists of:
	\begin{itemize}
		\item At time $i \in \{1,2,\ldots,N\}$,
			the JSCC encoder encodes the source $S$, past feedback signals $Y^{i-1}=(Y_{1}, \ldots, Y_{i-1})$, and as the codeword $X_{i}= \varphi_{i} (S, Y^{i-1})$, where $\varphi_{i}$ is the encoding function.
			Define the average power constraint of the codeword $X^{N}=(X_{1}, \ldots, X_{N})$ by
{\small			\begin{equation}\label{1to1-P-a}
				\frac{1}{N}\sum_{i=1}^{N}\mathbb{E} [X_{i}^{2}] \leq P.
			\end{equation}}%
		\item At time $i \in \{1,2,\ldots,N\}$, a decoder outputs $\widehat{S}_{i}=\psi(Y^{i})$,
			where $\widehat{S}_{i}$ is the decoder's estimation of $S$ at time $i$, and $\psi$ is the decoding function.
	\end{itemize}
\end{definition}

To evaluate the decoding performance,
the excess-distortion probability \cite{FBL-JSCC} at time $i \in \{1,2,\ldots,N\}$ is defined as
{\small \begin{equation}\label{1to1-Pdi}
    \mathcal{P}_{d,i}=\mathbb{P} \left[ (S - \widehat{S}_{i})^{2}\geq d\right],
\end{equation}}%
where $0<d<\sigma_{s}^{2}$ is a distortion threshold.
Furthermore, $P_{d,N}$ is Bob's excess-distortion probability at the last index $N$.

For the eavesdropper, the information leakage rate \cite{secure2}-\cite{secure1} is defined as
{\small \begin{equation}\label{1to1-Ln-s}
L_{N} = \frac{1}{N} I(S; Z^{N}, \widetilde{Z}^{N}),
\end{equation}}%
and if the leakage rate satisfies $L_{N} \leq \delta$ $(\delta>0)$, we say
$\delta$-secrecy is achieved.
Here note that in (\ref{1to1-Ln-s}),
$\delta=0$ corresponds to the weak secrecy extensively studied in the literature.

Given distortion threshold $0<d<\sigma_{s}^{2}$, the targeted excess-distortion probability $\varepsilon>0$ and secrecy threshold $\delta>0$,
there exists a minimum coding blocklength $N$ and a pair of JSCC encoder-decoder such that
{\small \begin{equation}\label{1to1-R-d}
\begin{split}
	\mathcal{P}_{d, N} \leq \varepsilon,\,\,\,\, L_{N}\leq \delta,
\end{split}
\end{equation}}%
the corresponding $(d, \varepsilon)$-JSCC rate $\frac{1}{N}$ is called the $\delta$-secrecy capacity $\mathcal{C}^{\rm f}_{\rm s}(d,\varepsilon,\delta)$. Bounds on $\mathcal{C}^{\rm f}_{\rm s}(d,\varepsilon,\delta)$ will be given in the next section.

\subsection{Literature review}\label{sec1-2}

Wyner in his pioneering work \cite{secure} first established the fundamental limit of secure transmission over a physically degraded wiretap channel, and showed that the secrecy capacity suffers a rate loss due to weak secrecy constraint. Though a pair of JSCC encoder-decoder is considered in the model of the wiretap channel
\cite{secure}, the secrecy capacity is characterized by using a source-channel separation scheme,
and the JSCC schemes were considered in \cite{secure-jscc-1}-\cite{secure-jscc-2}.

From channel coding aspect, though channel feedback does not increase the capacity of a memoryless channel, \cite{secure2} showed that the remarkable Schalkwijk-Kailath (SK) scheme \cite{SK}, which achieves the capacity of the AWGN channel with noise-free feedback, satisfies weak secrecy by itself. This also indicates that the secrecy capacity of the AWGN wiretap channel with noise-free feedback equals that of the same model without secrecy constraint, and it is achieved by the SK scheme. However, \cite{FBL-feedback} showed that in the finite blocklength (FBL) regime, the SK scheme
is not optimal for the AWGN channel, which also indicates that it may not be optimal for the
AWGN wiretap channel in FBL regime.

In recent years, \cite{SK-v} showed that the SK scheme can be directly extended to a JSCC setup, and it is not difficult to check that the self-secure property \cite{secure2} of the SK scheme also holds for this JSCC setup. However, in the JSCC setup of the AWGN channel with noise-free feedback,
whether the extended SK scheme of \cite{SK-v} is optimal remains unknown since
the FBL converse for such a model is missing \footnote{To the best of the authors' knowledge, the FBL converse for the JSCC setup was only considered in non-feedback case \cite{FBL-JSCC}.}.

\subsection{Contribution of this paper}\label{sec1-3}

In this paper, we establish a finite blocklength converse for the AWGN wiretap channel with feedback, which can also be viewed as a converse for the same model without secrecy constraint.
Besides this,
in finite blocklength regime, we show that the JSCC based SK scheme \cite{SK-v} is not optimal, and propose a modified SK scheme which may perform better than the existing one \cite{SK-v}.

\subsection{Notation}\label{sec1-4}

The ceiling function is denoted by $\lceil \cdot \rceil$,
and define $\exp(x) \triangleq e^{x}$.
The complementary Gaussian cumulative distribution function ($Q$-function) is denoted by {\small $Q(x) = \frac{1}{\sqrt{2\pi}} \int_{x}^{\infty} \exp\left(-\frac{t^2}{2}\right) dt$}, with $Q^{-1}(\cdot)$ being its inverse.
The Gaussian capacity function is {\small $C(x) \triangleq \frac{1}{2} \ln\left(1 + \frac{x}{\sigma_{\eta}^2}\right)$}, while the rate-distortion function for source $S$ under squared-error distortion is {\small $R(x) \triangleq  \frac{1}{2} \ln\left( \frac{\sigma_s^2}{x} \right) $}.
The Gaussian dispersion function is given by {\small $V(x) \triangleq \frac{x(x + 2\sigma_{\eta}^2)}{2(x + \sigma_{\eta}^2)^2}$}, and $V_d=\frac{1}{2}$ denotes the dispersion for the Gaussian source $S$.

\section{Preliminaries}\label{sec2}
In this section, we introduce the JSCC based SK scheme \cite{SK-v} for the AWGN channel with noise-free feedback. The problem setup is the same as the model of Figure \ref{fg01} without the consideration of the eavesdropper, and this scheme
is described below.

At time instant $1$, the transmitter sends
{\small \begin{equation}\label{pr-sk-xh1}
		X_{1}=\sqrt{\frac{P}{\sigma_{s}^{2}}}S,
\end{equation}}%
where the power of $X_1$ is $P$. Meanwhile, the receiver obtains
$Y_1=X_1+\eta_{1}$, and computes his first estimation of $S$ by
{\small \begin{equation}\label{pr-sk-es1}
	\widehat{S}_{1}=Y_{1}\cdot\left(\sqrt{\frac{P}{\sigma_{s}^{2}}}\right)^{-1}= S +\eta_{1}\cdot\left(\sqrt{\frac{P}{\sigma_{s}^{2}}}\right)^{-1}\triangleq S+\epsilon_{1},
\end{equation}}%
where {\small $\epsilon_{1}  \triangleq \widehat{S}_{1}-S=\eta_{1}\cdot \left(\sqrt{\frac{P}{\sigma_{s}^{2}}}\right)^{-1}$} is  the first estimation error, and its variance is $\alpha_{1}\triangleq {\rm Var}(\epsilon_{1})=\sigma_{\eta}^{2}\sigma_{s}^{2}\cdot P^{-1}$.
At time instant $i$ ($i\in\{2,3,\dots,N\}$), the transmitter perfectly observes
$Y^{i-1}=(Y_{1},\ldots,Y_{i-1})$, calculates the previous time instant's estimation $\widehat{S}_{i-1}$ of the receiver, and then sends the estimation error $\epsilon_{i-1}\triangleq\widehat{S}_{i-1}-S$ by
{\small \begin{equation}\label{pr-sk-xi}
	X_{i}=\sqrt{\frac{P}{\alpha_{i-1}}}\epsilon_{i-1},
\end{equation}}%
where $\alpha_{i-1} \triangleq {\rm Var}(\epsilon_{i-1})$ and the power of $X_{i}$ is ${\rm E}[X^{2}_{i}]=P$.
Meanwhile, the receiver obtains  $Y_{i}=X_{i}+\eta_{i}$, and updates the $i$-th estimation of $S$ by
{\small \begin{equation}\label{pr-sk-Si}
	\widehat{S}_{i}=\widehat{S}_{i-1}-\beta_{i}Y_{i} \triangleq  \widehat{S}_{i-1}-\widehat{\epsilon}_{i-1},
\end{equation}}%
where $\widehat{\epsilon}_{i-1}\triangleq \beta_{i}Y_{i}$ is the Minimum Mean Square Estimation (MMSE) estimation of $\epsilon_{i-1}$,
and thus the MMSE coefficient is given by
{\small \begin{equation}\label{pr-sk-betai}
	\beta_{i}=\frac{{\rm E}[Y_{i}\epsilon_{i-1}]}{{\rm E}[Y^{2}_{i}]}\overset{(a)}=\frac{\sqrt{P\alpha_{i-1}}}{P+\sigma_{\eta}^2},
\end{equation}}%
where $(a)$ follows from $Y_{i}=X_{i}+\eta_{i}$ and (\ref{pr-sk-xi}).
The updated estimation error is
{\small \begin{equation}\label{pr-sk-epi}
	\epsilon_{i}  \triangleq \widehat{S}_{i}-S=\epsilon_{i-1}-\beta_{i}Y_{i},
\end{equation}}%
and its variance is calculated as
{\small \begin{equation}\label{pr-sk-ai}
	\begin{aligned}
	&\alpha_{i}\triangleq {\rm Var}(\epsilon_{i})=\frac{\sigma_{\eta}^2\sigma_{s}^2}{P}
(\frac{\sigma_{\eta}^2}{P+\sigma_{\eta}^2})^{i-1}.
	\end{aligned}
\end{equation}}%
At time instant $N$, from (\ref{pr-sk-Si}), the receiver computes the final estimation  of $S$ as
{\small \begin{equation}\label{ctn}
		\widehat{S}_{N}=S+\epsilon_{N},
\end{equation}}%
where $\epsilon_{N}$ is the final estimation error.

\begin{lemma}{(Lower bound on $\mathcal{C}^{\rm f}_{\rm s}(d,\varepsilon,\delta)$): }\label{l1}
	Following the JSCC based SK scheme \cite{SK-v} described above,
	for given distortion threshold $d$, targeted excess-distortion probability $\varepsilon>0$ and secrecy threshold $\delta>0$,
	the $\delta$-secrecy capacity $\mathcal{C}^{\rm f}_{\rm s}(d,\varepsilon,\delta)$ is lower bound by
{\small \begin{equation}\label{rn-sk-c}
	\begin{aligned}
			 \mathcal{C}^{\rm f}_{\rm s}(d,\varepsilon,\delta) \geq \min\{\frac{1}{\widetilde{N_{1}}},\,\frac{1}{\widetilde{N_{2}}}\},
	\end{aligned}
\end{equation}}%
where
{\small \begin{equation}\label{n1-skc}
	\begin{aligned}
	&\widetilde{N_{1}}=\left\lceil \frac{1}{C(P)}\left(R(d)+\ln\left( Q^{-1}\left(\frac{\epsilon}{2}\right) \right)+\frac{1}{2}\ln\left(1+\frac{\sigma_{\eta}^{2}}{P}\right)\right)\right\rceil,\\
	&\widetilde{N_{2}}=\left\lceil \frac{1}{2\delta}\ln\left(\left(1+\frac{P}{\sigma_{e}^{2}}\right)\cdot \left(1+\frac{P}{\widetilde{\sigma}_{e}^{2}}\right)\right)\right\rceil.
\end{aligned}
\end{equation}}%
\end{lemma}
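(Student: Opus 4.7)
The plan is to verify that the JSCC based SK scheme meets both the distortion and the secrecy constraints whenever $N\ge\max(\widetilde{N_1},\widetilde{N_2})$, which immediately yields the claimed rate bound $\min\{1/\widetilde{N_1},\,1/\widetilde{N_2}\}$.

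\textbf{Step 1 (distortion).} I would first note from the recursion (\ref{pr-sk-epi}) that $\epsilon_N$ is a linear combination of the Gaussian noises $\eta_1,\dots,\eta_N$, hence $\epsilon_N\sim\mathcal{N}(0,\alpha_N)$ with $\alpha_N$ given by (\ref{pr-sk-ai}), so $\mathcal{P}_{d,N}=2Q(\sqrt{d/\alpha_N})$. The requirement $\mathcal{P}_{d,N}\le\varepsilon$ becomes $\alpha_N\le d/[Q^{-1}(\varepsilon/2)]^2$. Rewriting $\alpha_N=\sigma_s^2\,[\sigma_\eta^2/(P+\sigma_\eta^2)]^N\,(1+\sigma_\eta^2/P)$, taking logarithms, and identifying $2C(P)=\ln(1+P/\sigma_\eta^2)$ and $2R(d)=\ln(\sigma_s^2/d)$, I can solve for the smallest admissible $N$ and arrive at $N\ge\widetilde{N_1}$.

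\textbf{Step 2 (secrecy).} The key structural property of the SK scheme is that $X_1=\sqrt{P/\sigma_s^2}\,S$ is the only symbol that directly carries the source, while for every $i\ge 2$ the symbol $X_i$ is a deterministic function of $\eta^{i-1}$ alone; this follows by induction from (\ref{pr-sk-epi}) together with $\epsilon_1=\eta_1\sqrt{\sigma_s^2/P}$. Using the monotonicity of mutual information together with $S\perp\eta^N$,
\begin{equation*}
I(S;Z^N,\widetilde{Z}^N)\le I(S;Z^N,\widetilde{Z}^N,\eta^N)=I(S;Z^N,\widetilde{Z}^N\mid\eta^N).
\end{equation*}
Conditioning on $\eta^N$ renders $X_2,\dots,X_N$ and $\eta_1$ known constants, so after subtracting these deterministic shifts the only coordinates of $(Z^N,\widetilde{Z}^N)$ that remain informative about $S$ collapse to $U_1=X_1+\eta_{e,1}$ and $U_2=X_1+\widetilde{\eta}_{e,1}$. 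Because $U_1\perp U_2\mid S$, the chain rule $I(S;U_1,U_2)=I(S;U_1)+I(S;U_2\mid U_1)$ together with $h(U_2\mid U_1)\le h(U_2)$ gives $I(S;U_1,U_2)\le I(S;U_1)+I(S;U_2)=\tfrac{1}{2}\ln[(1+P/\sigma_e^2)(1+P/\widetilde{\sigma}_e^2)]$. Dividing by $N$ and enforcing $L_N\le\delta$ delivers $N\ge\widetilde{N_2}$.

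\textbf{Anticipated obstacle.} The main conceptual hurdle is establishing that the secrecy bound is $N$-independent rather than scaling like $N$ times a per-letter Gaussian capacity: a naive data-processing bound $I(S;Z^N,\widetilde{Z}^N)\le I(X^N;Z^N,\widetilde{Z}^N)$ produces an $O(N)$ quantity that would make any nontrivial secrecy threshold unreachable for large $N$. Exploiting the SK property that only $X_1$ carries $S$ is essential, and the cleanest formalization is the conditioning-on-$\eta^N$ argument above, with careful bookkeeping to confirm that the only residual source-dependent observations are $Z_1$ and $\widetilde{Z}_1-\eta_1$, so that the bound reduces to a single-letter Gaussian computation.
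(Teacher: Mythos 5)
Your proposal is correct and follows essentially the approach the paper intends (it explicitly defers to the argument of \cite{secure2}): the distortion constraint is resolved exactly from $\epsilon_N\sim\mathcal{N}(0,\alpha_N)$ with $\alpha_N$ from (\ref{pr-sk-ai}), and the secrecy constraint is resolved via the self-secure property that only $X_1$ carries $S$ while $X_2,\dots,X_N$ are functions of $\eta^{i-1}$ alone, reducing the leakage to the single-letter quantity $I(S;X_1+\eta_{e,1},X_1+\widetilde{\eta}_{e,1})\le\frac{1}{2}\ln[(1+P/\sigma_e^2)(1+P/\widetilde{\sigma}_e^2)]$. Your conditioning-on-$\eta^N$ formalization correctly supplies the details the paper omits.
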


\begin{IEEEproof}
Here note that $\widetilde{N_{1}}$ is determined by the legal receiver Bob's decoding error probability, and $\widetilde{N_{2}}$ is by the analysis of the eavesdropper's information leakage rate.
The proof is similar to the proof of \cite{secure2}, and we omit it here.
\end{IEEEproof}

In the next section, we propose an upper bound on $\mathcal{C}^{\rm f}_{\rm s}(d,\varepsilon,\delta)$ and give a new lower bound by modifying the above mentioned JSCC based SK scheme \cite{SK-v}.

\section{New Bounds on $\mathcal{C}^{\rm f}_{\rm s}(d,\varepsilon,\delta)$}\label{sec3}

Before formal results are presented, define
{\small \begin{equation}\label{F1x}
	\begin{split}
		 F_{1}(x) &\triangleq x\cdot C\left(\frac{P}{1 - \varepsilon }\right) +\sqrt{V_{d}+ x\cdot V\left(\frac{P}{1-\varepsilon}\right)}\cdot \sqrt{\ln(x)} +O\left(\sqrt{x}\right) ,\\
		 F_{2}(x)  &\triangleq\frac{1}{2x}\left(\frac{(P+\sigma_{\eta}^{2})(\sigma_{e}^{2}+\widetilde{\sigma}_{e}^{2})}{\sigma_{e}^{2}\widetilde{\sigma}_{e}^{2}}\right)\left(1-\left( \frac{\sigma_{\eta}^{2}}{P+\sigma_{\eta}^{2}} \right)^{x}\right),\\
	\end{split}
\end{equation}}%
which will be used in the remainder of this section.

\subsection{Main Results}\label{sec3-1}
\begin{theorem}{(Upper bound on $\mathcal{C}^{\rm f}_{\rm s}(d,\varepsilon,\delta)$): }\label{th01}
	For given distortion threshold $d$, targeted excess-distortion probability $\varepsilon>0$ and secrecy threshold $\delta>0$,
	an upper bound on $\mathcal{C}^{\rm f}_{\rm s}(d,\varepsilon,\delta)$ is given by
{\small \begin{equation}\label{upperbound}
	\begin{aligned}
			\mathcal{C}^{\rm f}_{\rm s}(d,\varepsilon,\delta) \leq \frac{1}{N_{1}-1},
	\end{aligned}
\end{equation}}%
where $N_{1}\geq 2$ satisfies $F_{1}(N_{1}-1)<R(d)$ and $F_{1}(N_{1})\geq R(d)$.
\end{theorem}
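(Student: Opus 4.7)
The plan is to derive the converse by combining the rate-distortion lower bound on $I(S;\widehat{S}_N)$ with an upper bound on $I(S;Y^N)$ for the AWGN channel with feedback under an amplified power constraint, then refining both via a second-order normal approximation.

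I will first observe that the secrecy constraint plays no role in the converse: every $(d,\varepsilon,\delta)$-feedback code is, in particular, a $(d,\varepsilon)$-feedback code for the AWGN channel, so any upper bound on the non-secret JSCC rate automatically upper bounds $\mathcal{C}^{\rm f}_{\rm s}(d,\varepsilon,\delta)$. This justifies why neither $\delta$ nor the eavesdropper variances $\sigma_{e}^{2}$, $\widetilde{\sigma}_{e}^{2}$ appear in (\ref{upperbound}).

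To capture the leading-order term $N\,C(P/(1-\varepsilon))$, I would introduce the success event $\mathcal{E}=\{(S-\widehat{S}_N)^{2}<d\}$, which satisfies $\mathbb{P}(\mathcal{E})\geq 1-\varepsilon$ by (\ref{1to1-R-d}). Conditioning on $\mathcal{E}$, the power constraint (\ref{1to1-P-a}) is amplified via $\mathbb{E}[X_{i}^{2}\mid\mathcal{E}]\leq \mathbb{E}[X_{i}^{2}]/(1-\varepsilon)$, giving an effective per-symbol budget $P/(1-\varepsilon)$. Using the chain rule, the Markov structure $S\to (S,Y^{i-1})\to X_{i}\to Y_{i}$ induced by the feedback encoder $\varphi_i$, and the Gaussianity of $\eta_{i}$, I would bound each conditional term $I(S;Y_{i}\mid Y^{i-1},\mathcal{E})$ by the scalar AWGN capacity $C(P/(1-\varepsilon))$, so that $I(S;Y^{N}\mid\mathcal{E})\leq N\,C(P/(1-\varepsilon))$.

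Against this upper bound, I would apply the Gaussian rate-distortion converse to lower bound $I(S;\widehat{S}_{N}\mid\mathcal{E})$: under squared-error distortion at most $d$ the mutual information exceeds $R(d)$ at the first order, with the $d$-tilted information of the Gaussian source supplying a fluctuation of standard deviation $\sqrt{V_{d}}$. Merging this source-side information density with the channel information densities under the conditional measure, and invoking a Berry-Esseen-type CLT, produces the combined dispersion $\sqrt{V_{d}+N\,V(P/(1-\varepsilon))}$ in $F_{1}$. The $\sqrt{\ln N}$ factor (rather than the familiar $Q^{-1}(\varepsilon)$) arises from a Chernoff-style tail requirement: to make the normal-approximation inequality uniform over time indices $i\in\{1,\dots,N\}$ one needs the concentration to hold up to probability $O(1/N)$, and inverting the Gaussian tail at this level gives a quantile of order $\sqrt{2\ln N}$; the remaining $O(\sqrt{N})$ absorbs the Berry-Esseen third-moment slack together with the Radon-Nikodym cost of passing between $\mathbb{P}$ and $\mathbb{P}[\cdot\mid\mathcal{E}]$.

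The main obstacle I anticipate is controlling the feedback dependencies and the conditioning on $\mathcal{E}$ simultaneously in the CLT step: because $X_{i}$ is a function of $Y^{i-1}$, the per-symbol information densities form a martingale difference sequence rather than an i.i.d.\ sum under the conditional law, and the change of measure induced by conditioning on $\mathcal{E}$ must be quantified without inflating constants beyond the $O(\sqrt{N})$ tolerance encoded in $F_{1}$. I would address this by combining a martingale Berry-Esseen inequality with a Radon-Nikodym bound between $\mathbb{P}$ and $\mathbb{P}[\cdot\mid\mathcal{E}]$, adapting the Kostina-Verd\'u JSCC converse framework to the feedback setting so that the amplified power $P/(1-\varepsilon)$, the combined dispersion, and the $\sqrt{\ln N}$ quantile all emerge consistently in $F_{1}(N)\geq R(d)$.
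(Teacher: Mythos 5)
Your opening observation (drop the secrecy constraint, since every secure code is in particular a $(d,\varepsilon)$-code) matches the paper's step $(a)$, and you correctly anticipate the shape of $F_{1}$ — power amplified to $P/(1-\varepsilon)$, combined dispersion $V_{d}+N\,V(\cdot)$, a $\sqrt{\ln N}$ quantile. But the route you propose for the quantitative part has genuine gaps. First, conditioning on the success event $\mathcal{E}=\{(S-\widehat{S}_{N})^{2}<d\}$ does not deliver what you need: $\mathcal{E}$ is a function of $Y^{N}$, so under $\mathbb{P}[\cdot\mid\mathcal{E}]$ the channel is no longer memoryless and the noise is no longer Gaussian, and the per-letter bound $I(S;Y_{i}\mid Y^{i-1},\mathcal{E})\leq C(P/(1-\varepsilon))$ does not follow from the scalar AWGN capacity. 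Moreover the amplification $\mathbb{E}[X_{i}^{2}\mid\mathcal{E}]\leq P/(1-\varepsilon)$ holds only in expectation, so the (conditional) variance of the accumulated information density $\sum_{i}\imath(X_{i};Y_{i})$ — which for the Gaussian channel depends explicitly on $\sum_{i}X_{i}^{2}$ — does not concentrate, and neither an i.i.d.\ nor a martingale Berry--Esseen bound can be applied at the $O(\sqrt{N})$ tolerance you budget for. The paper avoids both problems by a different chain of reductions: truncate the codewords to pass from average to \emph{peak} power $P'=P/(1-\varepsilon-\zeta/\sqrt{N})$ (Markov's inequality inflates the error only to $\varepsilon'=1-\zeta/\sqrt{N}$), then apply the Yaglom map trick to reach an \emph{equal} power constraint at the cost of one channel use (this is where the $N_{1}-1$ in the statement comes from, which your proposal never accounts for). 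Under equal power, $\sum_{i}\imath(X_{i};Y_{i})$ becomes a deterministic term plus a fluctuation $U$ whose moment generating function is shown (Lemma \ref{MGF-equality}) to coincide \emph{exactly} with that of a sum of independent variables $T$; the ordinary Berry--Esseen theorem is then applied to $T$. No martingale CLT is needed — the feedback dependence is eliminated exactly, not approximately.

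Second, your explanation of the $\sqrt{\ln N}$ factor is incorrect: it does not come from making the normal approximation uniform over time indices. It comes from the average-to-peak reduction, which leaves a residual excess-distortion probability $\varepsilon'=1-\zeta/\sqrt{N}$ very close to $1$; one must therefore evaluate $Q^{-1}$ at an argument of the form $1-1/\sqrt{N'}$, and $-Q^{-1}(1-1/\sqrt{N'})=\Phi^{-1}(1-1/\sqrt{N'})\leq\sqrt{\ln N'}$ by the Chernoff bound on the Gaussian tail (see step $(e)$ of (\ref{jsk-ixiyi-awgn-berry-essen--dl-2})). Finally, the comparison of the source and channel sides is not done through conditional mutual informations but through the one-shot bound of Lemma \ref{l2}, $\mathbb{P}[\jmath_{S}(S,d)-\sum_{i}\imath(X_{i};Y_{i})\geq\gamma]\leq\varepsilon+e^{-\gamma}$, a feedback-compatible adaptation of the Kostina--Verd\'u converse; your plan gestures at this framework but the conditional-expectation formulation you start from would only recover the first-order term. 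As written, the proposal would not close.
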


\begin{IEEEproof}
Since the secrecy capacity of the AWGN wiretap channel with noise-free feedback equals that of the same model without secrecy constraint \cite{secure2}, the FBL secrecy capacity is at least upper bounded by the FBL capacity of the same model without secrecy constraint.
Observing that \cite{FBL-JSCC} investigated the FBL capacity of the AWGN channel in a JSCC setup, and \cite{FBL-feedback} studied the FBL capacity of the AWGN channel with noise-free feedback in channel coding setup. Combining the converse tools used in \cite{FBL-JSCC} and \cite{FBL-feedback}, we derive
a converse bound on the FBL capacity of the AWGN channel with noise-free feedback in a JSCC setup.
The detailed proof is given in Section \ref{sec4-1}.
\end{IEEEproof}

\begin{theorem}{(New lower bound on $\mathcal{C}^{\rm f}_{\rm s}(d,\varepsilon,\delta)$): }\label{th02}
	For given distortion threshold $d$, targeted excess-distortion probability $\varepsilon>0$ and secrecy threshold $\delta>0$,
	$\mathcal{C}^{\rm f}_{\rm s}(d,\varepsilon,\delta)$ is lower bounded by
{\small \begin{equation}\label{rn1}
	\begin{aligned}
			&\mathcal{C}^{\rm f}_{\rm s}(d,\varepsilon,\delta) \geq \min\{\frac{1}{N_{2}},\,\frac{1}{N_{3}}\},
	\end{aligned}
\end{equation}}%
where
{\small \begin{equation}\label{n2}
	\begin{aligned}
	N_{2}=\left\lceil \frac{1}{C(P)}\left(R(d)+\ln\left( Q^{-1}\left(\frac{\epsilon}{2}\right) \right)\right)\right\rceil
	\end{aligned}
\end{equation}}%
and
$N_{3}\geq 1$ satisfies $F_{2}(N_{3}-1)>\delta$ and $F_{2}(N_{3})\leq\delta$.

\end{theorem}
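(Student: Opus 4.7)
The plan is to prove Theorem \ref{th02} by proposing a modified version of the JSCC based SK scheme from Section \ref{sec2}, and then separately analyzing Bob's excess-distortion probability (giving $N_{2}$) and the eavesdropper's leakage rate (giving $N_{3}$).

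First I would modify the scheme at time instant $1$ by replacing Bob's simple rescaling $\widehat{S}_{1}=Y_{1}\sqrt{\sigma_{s}^{2}/P}$ with the MMSE estimate $\widehat{S}_{1}=\beta_{1}Y_{1}$, $\beta_{1}=\sqrt{P\sigma_{s}^{2}}/(P+\sigma_{\eta}^{2})$. This lowers the initial error variance from $\sigma_{s}^{2}\sigma_{\eta}^{2}/P$ to $\alpha_{1}=\sigma_{s}^{2}\sigma_{\eta}^{2}/(P+\sigma_{\eta}^{2})$, shaving off exactly the factor $1+\sigma_{\eta}^{2}/P$ responsible for the extra logarithmic term in $\widetilde{N_{1}}$ of Lemma \ref{l1}. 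Keeping the SK recursion unchanged for $i\geq 2$ then yields $\alpha_{N}=\sigma_{s}^{2}(\sigma_{\eta}^{2}/(P+\sigma_{\eta}^{2}))^{N}$. Since $\epsilon_{N}$ is zero-mean Gaussian with variance $\alpha_{N}$, Bob's excess-distortion probability equals $2Q(\sqrt{d/\alpha_{N}})$; imposing $\mathcal{P}_{d,N}\leq \varepsilon$ and solving for the smallest admissible $N$ produces $N_{2}$ directly.

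For the leakage bound I would start from the Markov chain $S\rightarrow X^{N}\rightarrow (Z^{N},\widetilde{Z}^{N})$ to obtain $I(S;Z^{N},\widetilde{Z}^{N})\leq I(X^{N};Z^{N},\widetilde{Z}^{N})$, and then expand the right-hand side by chain rule. For each index $i$, I would invoke a Gaussian maximum-entropy bound on $I(X_{i};Z_{i},\widetilde{Z}_{i}\mid Z^{i-1},\widetilde{Z}^{i-1})$ in terms of the conditional variance of $X_{i}$. The crucial property of the SK construction is that this conditional variance decays geometrically with ratio $r=\sigma_{\eta}^{2}/(P+\sigma_{\eta}^{2})$, mirroring the shrinkage of $\alpha_{i}$. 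Summing the resulting geometric series gives a total leakage proportional to $(1-r^{N})/(1-r)$, and dividing by $N$ reproduces the closed form $F_{2}(N)$ defined in \eqref{F1x}. The threshold $F_{2}(N_{3})\leq\delta$ then yields the $N_{3}$ branch, and combining both constraints proves $\mathcal{C}^{\rm f}_{\rm s}(d,\varepsilon,\delta)\geq 1/\max(N_{2},N_{3})=\min(1/N_{2},1/N_{3})$.

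The hardest step will be matching the exact prefactor $(P+\sigma_{\eta}^{2})(\sigma_{e}^{2}+\widetilde{\sigma}_{e}^{2})/(\sigma_{e}^{2}\widetilde{\sigma}_{e}^{2})$ in $F_{2}$. A crude joint-Gaussian upper bound on $(Z_{i},\widetilde{Z}_{i})$ naturally produces effective noise variances $\sigma_{e}^{2}$ and $\sigma_{\eta}^{2}+\widetilde{\sigma}_{e}^{2}$ respectively, whereas the target formula treats $\sigma_{e}^{2}$ and $\widetilde{\sigma}_{e}^{2}$ symmetrically. To recover the stated form I expect to exploit the identity $\widetilde{Z}_{i}=Y_{i}+\widetilde{\eta}_{e,i}$, which lets one split the leakage into a component that the eavesdropper learns via $Z_{i}$ and a component she learns by passing $\widetilde{Z}_{i}$ through the same linear MMSE update Bob performs on $Y_{i}$; the asymmetry between what Bob knows exactly and what the eavesdropper sees through the independent noise $\widetilde{\eta}_{e,i}$ is what isolates $\widetilde{\sigma}_{e}^{2}$ in the denominator.
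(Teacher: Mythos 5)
Your treatment of Bob's side is correct and is exactly the paper's argument: replace the time-$1$ zero-forcing by the MMSE estimate, get $\alpha_{1}=\sigma_{s}^{2}\sigma_{\eta}^{2}/(P+\sigma_{\eta}^{2})$ and hence $\alpha_{N}=\sigma_{s}^{2}\kappa^{N}$ with $\kappa=\sigma_{\eta}^{2}/(P+\sigma_{\eta}^{2})$, note that $\epsilon_{N}$ is zero-mean Gaussian, and solve $2Q(\sqrt{d/\alpha_{N}})\leq\varepsilon$ for $N_{2}$. The leakage half, however, has a genuine gap. First, the data-processing step $I(S;Z^{N},\widetilde{Z}^{N})\leq I(X^{N};Z^{N},\widetilde{Z}^{N})$ requires the Markov chain $S\to X^{N}\to(Z^{N},\widetilde{Z}^{N})$, which fails here: $\widetilde{Z}_{i}=X_{i}+\eta_{i}+\widetilde{\eta}_{e,i}$ involves Bob's noise $\eta_{i}$, and because of feedback $X_{i+1}$ is a deterministic function of $(S,\eta^{i})$, so conditioning on $(X^{N},S)$ pins down linear combinations of $\eta^{N-1}$ in a way that depends on $S$; thus $p(\widetilde{z}^{N}\mid x^{N},s)\neq p(\widetilde{z}^{N}\mid x^{N})$. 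Second, and more fundamentally, even if that inequality held it would be far too weak: the quantity that decays geometrically is not the conditional variance of $X_{i}$ given $(Z^{i-1},\widetilde{Z}^{i-1})$ but the power of the $S$-component of $X_{i}$. Indeed $X_{i}=-\lambda\kappa^{(i-1)/2}S+(\text{linear combination of }\eta_{1},\dots,\eta_{i-1})$ with $\lambda=\sqrt{P/\sigma_{s}^{2}}$, so ${\rm Var}(X_{i})=P$ for every $i$, and the eavesdropper's residual uncertainty about $X_{i}$ stays bounded away from zero because each $X_{i}$ carries fresh channel noise she sees only through $\widetilde{\eta}_{e}$. Consequently $\frac{1}{N}I(X^{N};Z^{N},\widetilde{Z}^{N})$ converges to a positive constant and can never produce the vanishing bound $F_{2}(N)$.

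The paper's route avoids both problems by working with $h(S\mid Z^{N},\widetilde{Z}^{N})$ directly and \emph{additionally conditioning on} $\eta_{1},\dots,\eta_{N}$ (conditioning reduces entropy). Given $\eta^{N}$, every observation collapses to an independent-noise measurement of $S$ alone: $Z_{i}$ becomes $\eta_{e,i}-\lambda\kappa^{(i-1)/2}S+(\text{known})$ and $\widetilde{Z}_{i}$ becomes $\widetilde{\eta}_{e,i}-\lambda\kappa^{(i-1)/2}S+(\text{known})$, with signal power $P\kappa^{i-1}$. A max-entropy bound then gives $\frac{1}{2}\sum_{i}\ln(1+P\kappa^{i-1}/\sigma_{e}^{2})+\frac{1}{2}\sum_{i}\ln(1+P\kappa^{i-1}/\widetilde{\sigma}_{e}^{2})$, and $\ln(1+x)\leq x$ plus the geometric sum yields exactly the symmetric prefactor $(P+\sigma_{\eta}^{2})(\sigma_{e}^{2}+\widetilde{\sigma}_{e}^{2})/(\sigma_{e}^{2}\widetilde{\sigma}_{e}^{2})$ in $F_{2}$ --- the symmetry you were puzzling over is precisely the effect of removing $\eta_{i}$ from $\widetilde{Z}_{i}$ by conditioning. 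Your closing remark about exploiting $\widetilde{Z}_{i}=Y_{i}+\widetilde{\eta}_{e,i}$ gestures in this direction, but as written the proposal does not contain the conditioning-on-$\eta^{N}$ idea, and without it the $N_{3}$ branch does not go through.
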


\begin{IEEEproof}
This new lower bound is based on a slight modification of the JSCC based SK scheme \cite{SK-v}. In particular, at time instant $1$, the MMSE strategy is used to replace the zero-forcing method of \cite{SK-v}, and the remainder of the encoding-decoding procedure keeps the same. Note that $N_{2}$ is determined by Bob's decoding error probability, and $N_{3}$ is by the analysis of the eavesdropper's information leakage rate, which is significantly different from that of \cite{secure2} since the information leakage occurs at every time instant, while that only occurs at the first time instant in \cite{secure2}.
The detailed proof is given in Section \ref{sec4-2}.
\end{IEEEproof}

\begin{remark}
It is not difficult to check that our modified SK scheme in Theorem \ref{th02} also satisfies weak secrecy, namely, $\lim\limits_{N\to +\infty}L_{N}=0$.
\end{remark}

\subsection{Numerical Example}\label{sec3-2}
\begin{figure}[htb]
	\centering
	\subfigure{\includegraphics[scale=0.61]{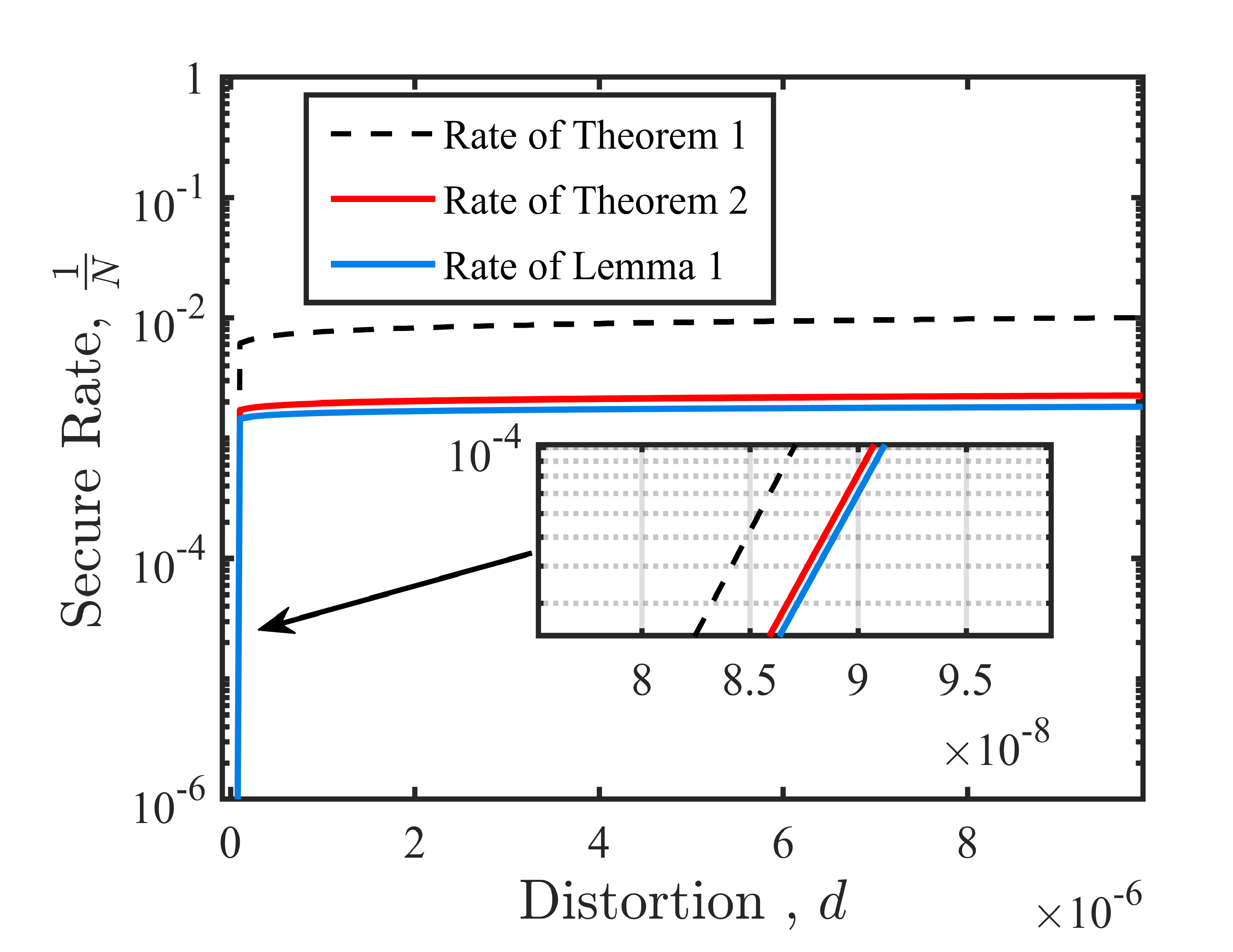}}
	\subfigure{\includegraphics[scale=0.52]{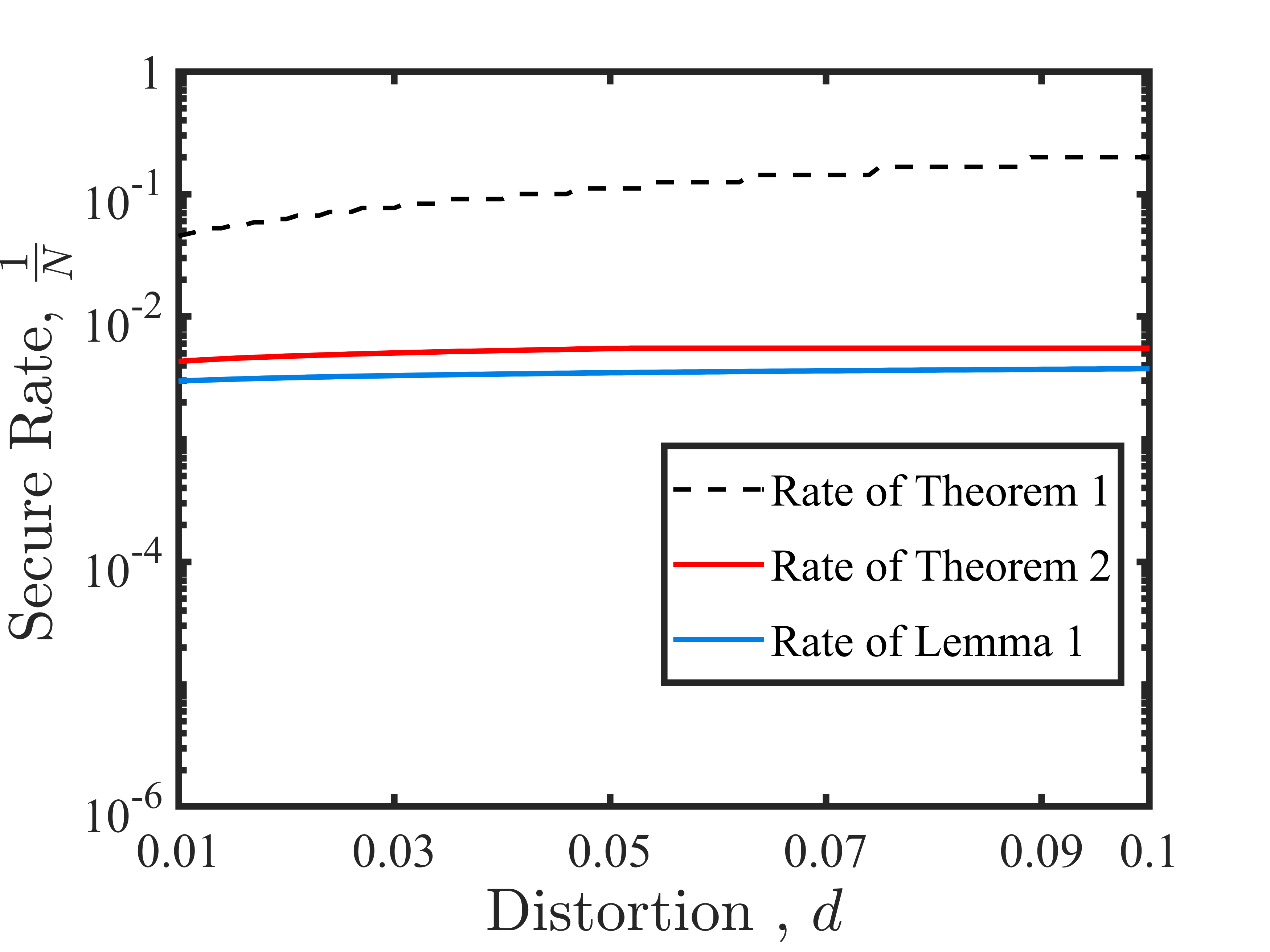}}
	\caption{Rate versus distortion for $\sigma_{s}^{2}=1$, $\sigma_{\eta}^{2}=30$, $\sigma_{e}^{2}=30$, $\widetilde{\sigma}_{e}^{2}=40$, $\varepsilon=10^{-5}$, $\delta=0.01$ and $P=1$}
	\label{fg02}
\end{figure}
Figure \ref{fg02} shows that our modified SK scheme performs better than the classic one \cite{SK-v}
when the signal-to-noise ratio is low, and this is due to fact that the MMSE strategy adopted by the receiver at initial time enhances the receiver's decoding performance, and though this modification causes additional information leakage to the eavesdropper, the average leakage rate still vanishes as coding blocklength tends to infinity. Besides this, it is shown that
the gap between upper and lower bounds is increasing while the distortion is increasing, and how to eliminate this gap will be our future work.

\section{Proofs}\label{sec4}

\subsection{Proof of Theorem \ref{th01}}\label{sec4-1}
The proof is given by the following steps,
{\small \begin{equation}\label{NKgoal}
	\begin{split}
		\mathcal{C}^{\rm f}_{\rm s}(d,\varepsilon,\delta)
		&\overset{(a)}{\leq} \mathcal{C}_{\text{fb, ap}}(P, d, \varepsilon)= \left(N_{\text{fb, ap}}^{*}(P, d, \varepsilon)\right)^{-1}\\
		&\overset{(b)}{\leq}\left( N_{\text{fb, pp}}^{*}\left(P^{\prime}, d, \varepsilon^{\prime} \right) \right)^{-1}\\
		&\overset{(c)}{\leq}\left( N_{\text{fb, ep}}^{*}\left(P^{\prime}, d, \varepsilon^{\prime} \right) -1\right)^{-1}\\
		&\overset{(d)}{=} \frac{1}{N_{1}-1}.
	\end{split}
\end{equation}}%
Here note that
\begin{itemize}
	\item $P^{\prime} \triangleq \frac{P}{1 - \varepsilon -\frac{\zeta}{\sqrt{N}} }$ and $\varepsilon^{\prime} \triangleq 1 - \frac{\zeta}{\sqrt{N}} $, where $\zeta > 0$ is an auxiliary parameter.
    \item $\mathcal{C}_{\text{fb, ap}}(P, d, \varepsilon)$ is the capacity of an AWGN channel with noiseless feedback, under \emph{average} power constraint $P$, distortion threshold $d$, and excess-distortion probability $\varepsilon$.
	$N_{\text{fb, ap}}^{*}(P, d, \varepsilon)$ is the minimum coding blocklength that achieves $\mathcal{C}_{\text{fb, ap}}(P, d, \varepsilon)$.
    \item $N_{\text{fb, pp}}^{*}(P', d, \varepsilon')$ is the minimum coding blocklength under \emph{peak}  power constraint $P^{\prime}$.
    \item $N_{\text{fb, ep}}^{*}(P', d, \varepsilon')$ is the minimum coding blocklength under \emph{equal}  power constraint $P^{\prime}$.
    \item $N_{1}$ is the minimum blocklength derived from (\ref{c-goal}).
\end{itemize}
Steps $(a)-(d)$ are proved as follows
\subsubsection{\textbf{Proof of step (a)}}\label{sec4-1-1}
In the asymptotic regime, \cite{secure2} showed that the secrecy capacity $\mathcal{C}_{s}^{f}$ equals $\mathcal{C}^{f}$, namely,
{\small \begin{equation}\label{sb-1}
	\mathcal{C}_{s}^{f}=\mathcal{C}^{f}=\frac{1}{2}\ln\left(1+\frac{P}{\sigma_{\eta}^{2}}\right).
\end{equation}}%
In finite blocklength regime, whether (\ref{sb-1}) still holds or not remains unknown, however, the FBL form of $\mathcal{C}^{f}$ is obvious to serve as an upper bound on the FBL form of $\mathcal{C}_{s}^{f}$,
namely, $\mathcal{C}^{\rm f}_{\rm s}(d,\varepsilon,\delta) \leq \mathcal{C}_{\text{fb,ap}}(P, d, \varepsilon)$, where $\mathcal{C}_{\text{fb,ap}}(P, d, \varepsilon)$ is defined as follows.
\begin{definition}\label{def03-mod}
	Consider an $(N, P)$-feedback code in Definition \ref{def01}. For given source {\small $S\sim \mathcal{N}(0, \sigma_s^2)$} and distortion threshold $d$, if
	the excess-distortion probability {\small $\mathbb{P} \left[ (S-\widehat{S}_{N})^{2} > d\right]\leq \varepsilon$},
	an $(N, P)$-feedback code is called an {\small $(N, P, d, \varepsilon)$}-feedback code.

	For fixed $\varepsilon$, $d$ and average power constraint $P$, define the minimum coding blocklength as
{\small 	\begin{equation}
		N_{\text{fb, ap}}^{*}(P, d, \varepsilon) = \inf \left\{ N \in \mathbb{Z}^+ : \exists \text{ an } (N, P, d, \varepsilon)\text{-feedback code} \right\}.
	\end{equation}}%
	The corresponding capacity is given by
{\small	\begin{equation}
		C_{\text{fb, ap}}(P, d, \varepsilon) = (N_{\text{fb,ap}}^{*}(P, d, \varepsilon))^{-1}.
	\end{equation}}%
\end{definition}
Since any code achieving $\mathcal{C}^{\rm f}_{\rm s}(d,\varepsilon,\delta)$ must be an $(N, P, d, \varepsilon)$-feedback code, we have
{\small \begin{equation}\label{eq-secrecy-upper-bound}
	\begin{split}
		\mathcal{C}^{\rm f}_{\rm s}(d,\varepsilon,\delta) &\leq C_{\text{fb,ap}}(P, d, \varepsilon) = \left( N_{\text{fb,ap}}^{*}(P, d, \varepsilon) \right)^{-1},
	\end{split}
\end{equation}}%
which completes the proof of step $(a)$.

\subsubsection{\textbf{Proof of step (b)}}\label{sec4-1-2}
Given an $(N, P, d, \varepsilon)$-feedback code with encoders $X_{i} = \varphi_{i}(S, Y^{i-1})$ ($i = 1, \ldots, N$) and decoder $\widehat{S} = \psi(Y^N)$ satisfying the average power constraint,
similar to \cite{FBL-feedback}, we can construct a new code with the same coding blocklength $N$ satisfying a peak power constraint.
Specifically,
for $\zeta > 0$, define a new encoding function as follows.
\begin{itemize}
	\item For $i = 1, \ldots, N$, define the new encoding functions,
{\small \begin{equation}\label{Xprime}
		X_{i}^{\prime} = X_{i} \cdot \indicator{ \sum_{j=1}^{i} X_{j}^2 \leq NP^{\prime}},
	\end{equation}}%
	where $\indicator{\cdot}$ is the indicator function and $P^{\prime} \triangleq \frac{P}{1 - \varepsilon - \frac{\zeta}{\sqrt{N}}}$.
	\item The decoding function remains $\widehat{S} = \psi(Y^N)$.
\end{itemize}
We can check that
{\small \begin{equation}\label{peak-power}
	\begin{split}
	&\mathbb{P}\left[\sum_{i=1}^{N} (X_{i}^{\prime})^2 \leq N P^{\prime}\right]
	=\mathbb{E}\left[ \indicator{\sum_{i=1}^{N} (X_{i}^{\prime})^2 \leq N P^{\prime}}\right]
	=\int_{{x^{\prime}}^{N}} f({x^{\prime}}^{N})\cdot \indicator{\sum_{i=1}^{N} (x_{i}^{\prime})^2 \leq N P^{\prime}} d{x^{\prime}}^{N}
	\overset{(a^{\prime})}{=}\int_{{x^{\prime}}^{N}} f({x^{\prime}}^{N})\cdot 1 \cdot d{x^{\prime}}^{N}=1,
	\end{split}
\end{equation}}%
where 
$f(\cdot)$ denotes probability density function,
and $(a^{\prime})$ follows from 
{\small \begin{equation}\label{xsum}
	\begin{split}
	\sum_{i=1}^{N} (x_{i}^{\prime})^2= \sum_{i=1}^{N}\left(x_{i} \cdot \indicator{ \sum_{j=1}^{i} x_{j}^2 \leq NP^{\prime}}\right)^{2} \leq N P^{\prime}.
\end{split}
\end{equation}}%
Therefore, the new code is under peak power constraint $P^{\prime}$.
Its excess-distortion probability $\mathcal{P}_{d, N}$ is upper bounded by
{\small \begin{align}\label{edp}
		&\mathcal{P}_{d, N}\overset{(b^{\prime})}{\leq} 1 - \frac{\zeta}{\sqrt{N}} \,\,\triangleq\,\, \varepsilon^{\prime},
\end{align}}%
where the proof of $(b^{\prime})$ is in Appendix \ref{App-28-edpu}.

Hence, for any $\zeta > 0$, any $(N, P, d, \varepsilon)$-code implies the existence of a code under peak power $P^{\prime}$ and excess-distortion probability at most $\varepsilon^{\prime} $.
Consequently,
{\small \begin{equation}
	N_{\text{fb, ap}}^{*}(P, d, \varepsilon) \geq N_{\text{fb, pp}}^{*}\left( P^{\prime}, d, \varepsilon^{\prime} \right),
\end{equation}}%
where $N_{\text{fb, pp}}^{*}(P^{\prime}, d, \varepsilon^{\prime})$ is the minimum coding blocklength for an $(N, P^{\prime}, d, \varepsilon^{\prime})$-feedback code under peak power constraint $P^{\prime}$, which completes the proof of step $(b)$.

\subsubsection{\textbf{Proof of step (c)}}\label{sec4-1-3}
Consider an $(N, P^{\prime}, d, \varepsilon^{\prime})$-feedback code with peak power constraint
{\small \begin{equation}\label{ppic}
	\mathbb{P}\left[ \sum_{i=1}^{N} X_{i}^2 \leq NP^{\prime} \right] = 1.
\end{equation}}%
Let $N^{\prime}=N+1$, which implies that $N^{\prime}\geq 2$, by appending an extra channel use with a carefully chosen input $X_{N^{\prime}}$,
we can construct an $(N^{\prime}, P^{\prime}, d, \varepsilon^{\prime})$-feedback code satisfying equal power constraint almost surely, namely,
{\small \begin{equation}\label{epec}
	\mathbb{P}\left[ \sum_{i=1}^{N^{\prime}} X_i^2 = N^{\prime} P^{\prime} \right] = 1.
\end{equation}}%
This is known as the Yaglom map trick \cite{FBL, Markov}, which implies that
{\small \begin{equation}
	N_{\text{fb, pp}}^{*}(P^{\prime}, d, \varepsilon^{\prime}) \geq N_{\text{fb, ep}}^{*}(P^{\prime}, d, \varepsilon^{\prime}) - 1,
\end{equation}}%
where $N_{\text{fb, ep}}^{*}(P^{\prime}, d, \varepsilon^{\prime})$ is the minimum coding blocklength for an $(N^{\prime}, P^{\prime}, d, \varepsilon^{\prime})$-feedback code under equal power constraint $P^{\prime}$, which completes
the proof of step $(c)$.

\subsubsection{\textbf{Proof of step (d)}}\label{sec4-1-4}
To complete the proof, we need to obtain $N_{\text{fb, ep}}^{*}\left( P^{\prime}, d, \varepsilon^{\prime} \right)$, which
is equivalent to finding a lower bound on $N^{\prime}$, and this depends on the following Lemma \ref{l2}.

\begin{lemma}\label{l2}
	Consider an $(N, P, d, \varepsilon)$-feedback code satisfying the equal power constraint {\small $\mathbb{P}\left[\sum_{i=1}^N X_i^2 = NP\right]=1$}.
	Let $q(Y^{N})=\prod_{i=1}^{N} q(Y_{i})$ be any product distribution on the output space,
	then for $\gamma >0$, we have
{\small \begin{equation}\label{jsk-ixiyi-vare}
		\begin{split}
			\mathbb{P}\left[\jmath_S(S,d) - \sum_{i=1}^{N}\imath(X_{i};Y_{i}) \geq \gamma \right] \leq \varepsilon +\exp\left(- \gamma\right),
		\end{split}
	\end{equation}}%
where $\imath(X_{i};Y_{i})$ is the information density defined as
{\small \begin{equation}\label{ixy}
	\imath(X_{i};Y_{i}) = \log\left(\frac{f(Y_{i}|X_{i})}{q(Y_{i})}\right),
\end{equation}}%
$f(\cdot)$ denotes probability density function,
$\jmath_S(S,d)$ is the $d$-tilted information in $S$ denoted by
{\small  \begin{equation}\label{jsd-d}
			\jmath_S(S, d) = -\ln \left(\mathbb{E}_{\widehat{S}^{*}}\left[\exp(\lambda^{*} d - \lambda^{*}(S-\widehat{S}^{*})^{2})\right]\right),
        \end{equation}}%
$\widehat{S}^{*}$ is the reproduction random variable that achieves the infimum in the rate-distortion function
{\small \begin{equation}\label{eq:rate_distortion}
			\mathbb{R}_{S}(d) = \inf_{\substack{P_{\widehat{S}|S}:\,\, \mathbb{E}[(S-\widehat{S})^{2}] \leq d}} I(S;\widehat{S}),
		\end{equation}}%
and $\lambda^*$ is the negative derivative of the rate-distortion function $\mathbb{R}_S(d)$ evaluated at $d$, namely, $\lambda^* = -\mathbb{R}_S^{\prime}(d)$.

\end{lemma}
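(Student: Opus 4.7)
The plan is to prove (\ref{jsk-ixiyi-vare}) as a Kostina--Verd\'u-style JSCC meta-converse, with the feedback structure absorbed through a product auxiliary output distribution. The crucial structural observation is that although feedback makes $X_i = \varphi_i(S, Y^{i-1})$ depend on past outputs, the channel is memoryless, so
\begin{equation*}
f_{Y^N|S}(y^N|s) = \prod_{i=1}^{N} f(y_i \mid \varphi_i(s, y^{i-1})) = \prod_{i=1}^{N} f(y_i \mid x_i).
\end{equation*}
Hence, for any product reference $q(y^N) = \prod_i q(y_i)$, the log-likelihood ratio telescopes:
\begin{equation*}
\log \frac{f_{Y^N|S}(Y^N|S)}{q(Y^N)} = \sum_{i=1}^{N} \log \frac{f(Y_i|X_i)}{q(Y_i)} = \sum_{i=1}^{N} \imath(X_i; Y_i),
\end{equation*}
so $W \triangleq \jmath_S(S, d) - \sum_i \imath(X_i; Y_i)$ is exactly $\jmath_S(S, d) - \log\bigl(f_{Y^N|S}/q\bigr)$. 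This identification is precisely what allows the product auxiliary distribution to play the role of the i.i.d.\ output distribution used in non-feedback converses, and it is the same trick exploited in \cite{FBL-feedback}.

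First I would isolate the good event $\mathcal{E} \triangleq \{(S - \widehat{S}_N)^2 \leq d\}$, which by the definition of an $(N,P,d,\varepsilon)$-feedback code satisfies $\mathbb{P}[\mathcal{E}^c] \leq \varepsilon$. Splitting yields $\mathbb{P}[W \geq \gamma] \leq \varepsilon + \mathbb{P}[\{W \geq \gamma\} \cap \mathcal{E}]$, so only the second term must be controlled by $e^{-\gamma}$.

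Next I would perform a change of measure from the true joint law $P_S \cdot f_{Y^N|S}$ to the reference $P_S \cdot q$. On $\{W \geq \gamma\}$ the likelihood ratio satisfies $f_{Y^N|S}(Y^N|S)/q(Y^N) \leq e^{-\gamma}\exp(\jmath_S(S, d))$, whence
\begin{equation*}
\mathbb{P}[\{W \geq \gamma\} \cap \mathcal{E}] \leq e^{-\gamma} \, \mathbb{E}_{P_S \otimes q}\!\left[\exp(\jmath_S(S, d)) \, \indicator{(S - \psi(Y^N))^2 \leq d}\right].
\end{equation*}
Under $P_S \otimes q$, $S$ is independent of $Y^N$, and for each realization $y^N$ the decoder output $\hat{s} = \psi(y^N)$ is deterministic, so it suffices to establish the pointwise bound $\mathbb{E}_S[\exp(\jmath_S(S, d))\,\indicator{(S-\hat{s})^2 \leq d}] \leq 1$ uniformly in $\hat{s} \in \mathbb{R}$.

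The main obstacle, and the step where I would invest the most care, is this pointwise tilted-information estimate. The natural route is to substitute the defining identity $\exp(-\jmath_S(s,d)) = \mathbb{E}_{\widehat{S}^*}[\exp(\lambda^* d - \lambda^*(s-\widehat{S}^*)^2)]$ from (\ref{jsd-d}) and compare the integral of $f_S(s)\exp(\jmath_S(s,d))$ over the $\sqrt{d}$-ball around $\hat{s}$ against the optimal tilted measure; the KKT condition $\lambda^* = -R_S'(d)$ together with the explicit form $\widehat{S}^* \sim \mathcal{N}(0, \sigma_s^2 - d)$ for the Gaussian source reduces the bound to a direct Gaussian integral that is readily seen to be at most $1$. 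Combining the pointwise estimate with the change-of-measure step yields $\mathbb{P}[\{W \geq \gamma\} \cap \mathcal{E}] \leq e^{-\gamma}$, and hence (\ref{jsk-ixiyi-vare}).
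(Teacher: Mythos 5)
Your proposal is correct and takes essentially the same route as the paper's Appendix B: split on the excess-distortion event, change measure to the product reference distribution $q$ via the threshold inequality $f(Y^{N}|S)/q(Y^{N}) \leq e^{-\gamma}\exp(\jmath_S(S,d))$ (valid because the memoryless channel makes the likelihood ratio telescope into $\sum_i \imath(X_i;Y_i)$ despite feedback), and close with the pointwise Kostina--Verd\'u bound $\mathbb{E}_S\left[\exp(\lambda^{*} d - \lambda^{*}(S-\widehat{s})^{2} + \jmath_S(S,d))\right] \leq 1$. The only cosmetic difference is that you exploit the deterministic encoders to telescope the likelihood ratio directly, whereas the paper carries the factors $f(x_i|x^{i-1},y^{i-1},s)$ explicitly and integrates them out at the end.
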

\begin{IEEEproof}
	The proof of Lemma \ref{l2} is given in Appendix \ref{App-lemma1}.
\end{IEEEproof}

Applying Lemma \ref{l2} with $N$ substituted by $N^{\prime}$, $\varepsilon$ by $\varepsilon^{\prime}$, $P$ by $P^{\prime}$,
a lower bound on $N^{\prime}$ can be obtained via the following three steps.
{\small \begin{equation}\label{c-goal}
	\begin{split}
		&\mathbb{P}\left[\jmath_S(S,d) - \sum_{i=1}^{N^{\prime}}\imath(X_{i};Y_{i}) \geq \gamma \right] \leq \varepsilon^{\prime} +\exp\left(- \gamma\right)\\
		&\overset{(a^{*})}{\Rightarrow } \mathbb{P}\left[ U -\frac{1}{2}-\frac{N^{\prime}P^{\prime}}{2(P^{\prime}+\sigma_{\eta}^{2})} \geq \gamma- R(d)+N^{\prime}\cdot C(P^{\prime})\right]\\
		&\qquad \leq \varepsilon^{\prime} +\exp\left(- \gamma\right)\\
		&\overset{(b^{*})}{\Rightarrow } N^{\prime}\cdot C(P^{\prime})-R(d) \geq -\gamma\\	
		&\qquad +\sqrt{ V_{d}+N^{\prime}\cdot V(P^{\prime})}\cdot Q^{-1}\left(\varepsilon^{\prime} +\exp(-\gamma)+\frac{B^{*}}{\sqrt{N^{\prime}+1}}\right)\\
		&\overset{(c^{*})}{\Rightarrow }R(d) \leq F_{1}(N^{\prime})
	\end{split}
\end{equation}}%
where
$B^{*}$ is a finite constant and is determined by (\ref{B-star-form}) in Appendix \ref{App-37-stepac},
{\small
\begin{equation}\label{U-f}
	\begin{split}
	&U \triangleq \frac{S^{2}}{2\sigma_{s}^{2}}- \frac{1}{2(P^{\prime}+\sigma_{\eta}^{2})}\left(-\frac{P^{\prime}}{\sigma_{\eta}^{2}}(Y_{i}-X_{i})^{2}+2X_{i}(Y_{i}-X_{i})\right),
\end{split}
\end{equation}}%
{\small
\begin{equation}\label{f-f1}
	\begin{split}
		&F_{1}(N^{\prime})=N^{\prime}\cdot C\left(\frac{P}{1 - \varepsilon }\right) 
		 +\sqrt{V_{d}+ N^{\prime}\cdot V\left(\frac{P}{1-\varepsilon}\right)}\cdot \sqrt{\ln(N^{\prime})}+O\left(\sqrt{N^{\prime}}\right)
	\end{split}
\end{equation}}%
and
the proof of $(a^{*})-(c^{*})$  is in Appendix \ref{App-37-stepac}.

Furthermore, it is not difficult to check $F_{1}(N^{\prime})$ is increasing while $N^{\prime}$ is increasing
and $\lim\limits_{N^{\prime}\to +\infty} F_{1}(N^{\prime})=+\infty$,
which implies that there always exists $N_{1}\geq 2$ such that $F_{1}(N_{1}-1)<R(d)$ and
$F_{1}(N_{1})\geq R(d)$,
which completes the proof.

\subsection{Proof of Theorem \ref{th02}}\label{sec4-2}

\subsubsection{Coding procedure}\label{sec4-2-1}
\paragraph{\textbf{Initialization}}\label{sec4-2-1-1}
At time instant $i=1$, Alice sends the first input signal $X_{1}=\sqrt{\frac{P}{\sigma_{s}^{2}}}S$.
Meanwhile, Bob obtains the channel output
$Y_{1}=X_{1}+\eta_{1}$,
and adopts MMSE to compute the first estimation of $S$ by
{\small \begin{equation}\label{1to1-SE1}
	\widehat{S}_{1}=\frac{{\rm E}[S Y_{1}]}{{\rm E}[Y_{1}^{2}]}Y_{1}=\frac{P}{P+\sigma_{\eta}^{2}}S+\frac{\sqrt{P\sigma_{s}^{2}}}
{P+\sigma_{\eta}^{2}}\eta_{1}.
\end{equation}}%
The first estimation error is given by
{\small \begin{equation}\label{1to1-e1}
	\epsilon_{1}\triangleq \widehat{S}_{1}-S=\frac{\sqrt{P\sigma_{s}^{2}}}{P+\sigma_{\eta}^{2}}\eta_{1}
	-\frac{\sigma_{\eta}^{2}}{P+\sigma_{\eta}^{2}}S,
\end{equation}}%
and the variance of $\epsilon_{1}$ is
{\small \begin{equation}\label{1to1-a1}
	\alpha_{1}\triangleq  {\rm Var}(\epsilon_{1}) =\frac{\sigma_{\eta}^{2}\sigma_{s}^{2}}{P+\sigma_{\eta}^{2}}.
\end{equation}}%

\paragraph{\textbf{Iteration}}\label{sec4-2-1-2}
The remainder of the encoding-decoding procedure is the same as that of the Section \ref{sec3}.

\subsubsection{Lower bound derivation}\label{sec4-2-2}
\paragraph{\textbf{The proof of $\mathcal{C}^{\rm f}_{\rm s}(d,\varepsilon,\delta) \geq (N_{2})^{-1}$}}\label{sec4-2-2-1}
\begin{lemma}\label{pu}
	From our scheme stated above,
	the upper bound on $\mathcal{P}_{d, N}$
	is given by
	{\small \begin{equation}\label{1to1-s-d}
	\begin{split}
	\mathcal{P}_{d, N} \leq 2Q\left(\exp\left(-R(d)+N\cdot C(P)\right)\right),\\
\end{split}
\end{equation}}%
\end{lemma}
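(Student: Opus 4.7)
The plan is to compute $\mathcal{P}_{d,N}$ exactly for the modified SK scheme and then recognize the answer as $2Q(\exp(-R(d)+NC(P)))$. The key observation is that in the modified scheme, the final estimation error $\epsilon_{N}$ remains a zero-mean Gaussian random variable, because $\epsilon_1$ in \eqref{1to1-e1} is a linear combination of the Gaussians $S$ and $\eta_1$, each subsequent $X_i$ in \eqref{pr-sk-xi} is a scaled $\epsilon_{i-1}$, and each update \eqref{pr-sk-Si} is linear. Hence $\mathcal{P}_{d,N}=\mathbb{P}[\epsilon_N^2\geq d]=2Q(\sqrt{d/\alpha_N})$ exactly.

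Next I would track the variance recursion. The MMSE-based initialization yields $\alpha_1=\frac{\sigma_\eta^2\sigma_s^2}{P+\sigma_\eta^2}$ as stated in \eqref{1to1-a1}, strictly smaller than the zero-forcing value $\sigma_\eta^2\sigma_s^2/P$ used in Section \ref{sec2}. For $i\geq 2$ the iteration is unchanged, so the same computation that yields \eqref{pr-sk-ai} gives the one-step contraction $\alpha_i=\alpha_{i-1}\cdot\frac{\sigma_\eta^2}{P+\sigma_\eta^2}$. Chaining these together produces the closed-form
\[
\alpha_N=\sigma_s^2\left(\frac{\sigma_\eta^2}{P+\sigma_\eta^2}\right)^{N}.
\]

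Finally, I would plug $\alpha_N$ into $\sqrt{d/\alpha_N}$ and rewrite using the definitions of $R(\cdot)$ and $C(\cdot)$:
\[
\sqrt{d/\alpha_N}=\sqrt{d/\sigma_s^2}\cdot\left(1+\frac{P}{\sigma_\eta^2}\right)^{N/2}=\exp\!\left(-R(d)+N\cdot C(P)\right).
\]
Since $Q$ is monotone decreasing, $\mathcal{P}_{d,N}=2Q(\exp(-R(d)+NC(P)))$, which in particular yields the claimed upper bound.

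There is no real obstacle here; the only points that need care are (i) justifying why $\epsilon_N$ is exactly Gaussian (rather than just zero-mean with the computed variance), which follows from the linearity of every encoding/decoding step and the joint Gaussianity of $(S,\eta_1,\ldots,\eta_N)$, and (ii) checking that the variance contraction derived in Section \ref{sec2} for $i\geq 2$ depends only on the recursive structure and not on the specific value of $\alpha_1$, so it still applies when the initialization is changed to the MMSE one. Both are straightforward verifications.
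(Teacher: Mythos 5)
Your proposal is correct and follows essentially the same route as the paper: both reduce $\mathcal{P}_{d,N}$ to $2Q\bigl(\sqrt{d/\alpha_N}\bigr)$ and then use the closed form $\alpha_N=\sigma_s^2\bigl(\sigma_\eta^2/(P+\sigma_\eta^2)\bigr)^N$ obtained from the MMSE initialization together with the unchanged one-step contraction. The only cosmetic difference is that you invoke exact Gaussian symmetry to get equality where the paper uses a union bound to get $\leq$; either suffices for the claimed upper bound.
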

\begin{IEEEproof}
	The proof of Lemma \ref{pu} is given in Appendix \ref{App-pu}.
\end{IEEEproof}
Subsequently, to ensure that  $\mathcal{P}_{d, N}\leq \varepsilon$, let
{\small \begin{equation}\label{Qve}
	\begin{split}
	2Q\left(\exp\left(-R(d)+N\cdot C(P)\right)\right)\leq \varepsilon.
\end{split}
\end{equation}}%
Then, from (\ref{Qve}), we obtain
{\small \begin{equation}\label{1to1-N2-1}
	\begin{split}
		 N\geq \left(R(d)+\ln\left( Q^{-1}\left(\frac{\epsilon}{2}\right) \right)\right)\cdot \left(C(P) \right)^{-1}.
\end{split}
\end{equation}}%
Let
{\small \begin{equation}\label{1to1-N2-f}
	\begin{split}
		 N_{2}= \left\lceil \left(R(d)+\ln\left( Q^{-1}\left(\frac{\epsilon}{2}\right) \right)\right)\cdot \left(C(P) \right)^{-1} \right\rceil, \\
\end{split}
\end{equation}}%
the proof of  $\mathcal{C}^{\rm f}_{\rm s}(d,\varepsilon,\delta) \geq (N_{2})^{-1}$ is completed.

\paragraph{\textbf{The proof of $\mathcal{C}^{\rm f}_{\rm s}(d,\varepsilon,\delta) \geq (N_{3})^{-1}$}}\label{sec4-2-2-2}

\begin{lemma}\label{Ln-l}
For the eavesdropper, the upper bound of the information leakage rate is given by
{\small \begin{equation}\label{1to1-Ln}
	\begin{aligned}
		L_{N}\leq \frac{1}{2N}\left(\frac{(P+\sigma_{\eta}^{2})(\sigma_{e}^{2}+\widetilde{\sigma}_{e}^{2})}{\sigma_{e}^{2}\widetilde{\sigma}_{e}^{2}}\right)\left(1-\kappa^{N}\right)
		= F_{2}(N),
	\end{aligned}
\end{equation}}%
where $\kappa \triangleq \frac{\sigma_{\eta}^{2}}{P+\sigma_{\eta}^{2}}$.
\end{lemma}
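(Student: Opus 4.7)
The plan is to decompose $N L_N = I(S;Z^N,\widetilde{Z}^N)$ via the chain rule into per-time contributions and bound each contribution using the linear-Gaussian structure of the modified SK scheme. Since $X_1=\sqrt{P/\sigma_s^2}\,S$ is a deterministic bijection of $S$, the leakage equals $I(X_1;Z^N,\widetilde{Z}^N)$. The first step is to write $X_i$ in its unique orthogonal form $X_i = a_i X_1 + W_i$, where $W_i$ is a linear function of the past Bob-noises $\eta^{i-1}$ satisfying $W_i\perp X_1$, and then to identify the $X_1$-aligned signal power in $X_i$. A short induction on the SK recursion (\ref{pr-sk-xi})--(\ref{pr-sk-ai}), using the modified initialization $\alpha_1=\sigma_\eta^2\sigma_s^2/(P+\sigma_\eta^2)$, shows $|a_i|^2 P = P\kappa^{i-1}$, so the information-bearing component of $X_i$ decays geometrically in $\kappa=\sigma_\eta^2/(P+\sigma_\eta^2)$.

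With this decomposition in hand, the second step is to bound each conditional MI term by the contribution of the $X_1$-aligned component when observed through two parallel Gaussian channels: one of noise variance $\sigma_e^2$ (from $Z_i = a_iX_1+W_i+\eta_{e,i}$) and one of noise variance $\sigma_\eta^2+\widetilde{\sigma}_e^2$ (from $\widetilde{Z}_i = a_iX_1+W_i+\eta_i+\widetilde{\eta}_{e,i}$). Introducing the entire innovation sequence $W^N$ as an auxiliary conditioning variable --- which only enlarges the LHS because $W^N\perp X_1$ --- together with the posterior-precision update identity for parallel Gaussian channels, $\ln(1+x)\leq x$, and the coarse relaxation $1/(\sigma_\eta^2+\widetilde{\sigma}_e^2)\leq 1/\widetilde{\sigma}_e^2$, delivers the per-step bound
\begin{equation*}
I(X_1;Z_i,\widetilde{Z}_i\mid Z^{i-1},\widetilde{Z}^{i-1}) \leq \frac{P\kappa^{i-1}}{2}\!\left(\frac{1}{\sigma_e^2}+\frac{1}{\widetilde{\sigma}_e^2}\right).
\end{equation*}
Summing the geometric series $\sum_{i=1}^{N}\kappa^{i-1}=(1-\kappa^N)/(1-\kappa)$, using $1-\kappa=P/(P+\sigma_\eta^2)$ to cancel the single $P$ in the numerator, and dividing by $N$ recovers $F_2(N)$ exactly.

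The main obstacle I expect lies in the bookkeeping for the second step: although $W^N\perp X_1$, the past observations $(Z^{i-1},\widetilde{Z}^{i-1})$ couple $\eta^{i-1}$ to $X_1$ through the feedback, so the standard parallel-Gaussian-channel identity does not apply verbatim after the conditioning. The remedy is that only three monotonicity facts are needed to preserve the target inequality: (i) the conditional variance of $X_1$ is at most $P$; (ii) the conditional variance of the additive noise in the $Z_i$-channel equals $\sigma_e^2$ exactly, because $\eta_{e,i}$ is jointly independent of all conditioning variables; and (iii) the conditional variance of the additive noise in the $\widetilde{Z}_i$-channel is at least $\widetilde{\sigma}_e^2$, because $\widetilde{\eta}_{e,i}$ is likewise jointly independent of all conditioning variables. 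All three reduce to routine Schur-complement bookkeeping on the appropriate covariance matrices.
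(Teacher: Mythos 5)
Your proposal is correct and is essentially the paper's own argument recast in sequential form: conditioning on the innovation sequence $W^{N}$ is the same device as the paper's conditioning on Bob's noise $\eta^{N}$ (since $(W_{2},\dots,W_{N})$ is an invertible linear image of $\eta^{N-1}$), after which each eavesdropper observation collapses to a scaled copy of $S$ carrying signal power $P\kappa^{i-1}$ plus independent noise of variance $\sigma_{e}^{2}$ or $\widetilde{\sigma}_{e}^{2}$ --- precisely the paper's identity $X_{n}=-\lambda\,\kappa^{\frac{n-1}{2}}S+(\text{linear combination of }\eta^{n-1})$. The only difference is bookkeeping: you sum per-step conditional mutual informations via the chain rule using $\mathrm{Var}(X_{1}\mid \text{past})\le P$, whereas the paper bounds the joint differential entropy by the sum of marginal entropies; both routes then finish identically with $\ln(1+x)\le x$ and the geometric series, recovering $F_{2}(N)$ exactly.
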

\begin{IEEEproof}
	The proof of Lemma \ref{Ln-l} is given in Appendix \ref{App-ln}.
\end{IEEEproof}

Subsequently, following from Lemma \ref{Ln-l} and letting $F_{2}(N)\leq \delta$,
$L_{N}\leq \delta$ is guaranteed.
It is not difficult to check that $F_{2}(N)$ is decreasing while $N$ is increasing
and $\lim\limits_{N\to +\infty} F_{2}(N)=0$,
which implies that there always exists a $N_{3}\geq 1$ such that $F_{2}(N_{3}-1)>\delta$ and
$F_{2}(N_{3})\leq \delta$. Then, we obtain  $\mathcal{C}^{\rm f}_{\rm s}(d,\varepsilon,\delta) \geq (N_{3})^{-1}$, which competes the proof.

\section{Conclusion and Future Work }\label{sec5}

This paper establishes bounds on the FBL capacity of the
AWGN wiretap channel with noise-free feedback in JSCC setup. In particular, we propose a new lower bound which is based on a slight modification of the classic SK scheme, and numerical example shows that this new lower bound may be better than the conventional one. Besides this, we provide an
upper bound, which can also viewed as a FBL converse for the same model without the consideration of secrecy. Numerical results show that this upper bound may not be tight, and how to derive a tighter bound involving the secrecy constraint will be our future work.

\appendices
\onecolumn

\newpage
\section{The Proof of $(b^{\prime})$ in (\ref{edp})}\label{App-28-edpu}
\renewcommand{\theequation}{A\arabic{equation}}
\setcounter{equation}{0}
The excess-distortion probability $\mathcal{P}_{d, N}$ can be upper bounded as follows,
{\small \begin{align}
		&\mathcal{P}_{d, N}=\mathbb{P}\left[(S - \widehat{S}_{N})^{2}> d \right] \notag \\
		&= \mathbb{P}\left[\left\{ (S - \widehat{S}_{N})^{2}> d\right\} \cap \left\{\sum_{k=1}^{N} X_{k}^{2} > NP^{\prime} \right\} \right] \notag \\
		& + \mathbb{P}\left[ \left\{ (S - \widehat{S}_{N})^{2}> d\right\} \cap \left\{ \sum_{k=1}^{N} X_{k}^{2} \leq NP^{\prime} \right\} \right] \notag \\
		&\overset{(a^{\prime})}{\leq} \mathbb{P}\left[ \sum_{k=1}^{N} X_{k}^{2} > NP^{\prime} \right] \notag \\
		& + \mathbb{P}\left[ \left\{ (S - \widehat{S}_{N})^{2}> d\right\} \cap \left\{ \sum_{k=1}^{N} X_{k}^{2} \leq NP^{\prime} \right\} \right] \notag \\
		&\overset{(b^{\prime})}{=} \mathbb{P} \left[ \sum_{k=1}^{N} X_{k}^{2} > NP^{\prime} \right]\notag \\
		& +\mathbb{P} \left[ \left\{ (S - \widehat{S}_{N})^{2}> d \right\} \cap \{ X_k' = X_k, k = 1, \ldots, N \} \right] \notag \\
		&\overset{(c^{\prime})}{\leq} \varepsilon +  \mathbb{P} \left[ \sum_{k=1}^{N} X_{k}^{2} > NP^{\prime} \right]\notag\\
		& \overset{(d^{\prime})}{\leq} \varepsilon + \frac{ \sum\limits_{k=1}^{N} \mathbb{E}[X_k^2]}{ NP^{\prime}}\notag \\
		&\overset{(e^{\prime})}{\leq} \varepsilon + \frac{NP}{NP^{\prime}} \notag \\
		&\overset{(f^{\prime})}{\leq} \varepsilon + 1 - \varepsilon - \frac{\zeta}{\sqrt{N}}  \notag \\
		&= 1 - \frac{\zeta}{\sqrt{N}} \,\,\triangleq\,\, \varepsilon^{\prime},
\end{align}}%
where\\
$(a^{\prime})$ follows from $\mathbb{P}\left[A\bigcap B\right] \leq\mathbb{P}\left[ B\right]$;\\
$(b^{\prime})$ follows from (\ref{Xprime});\\
$(c^{\prime})$ follows from the excess-distortion probability $\mathcal{P}_{d, N}$ of the original code is upper bound by $\varepsilon$;\\
$(d^{\prime})$ follows from Markov's inequality \cite{Markov};\\
$(e^{\prime})$ follows from the average power constraint of the original code;\\
$(f^{\prime})$ follows from $P^{\prime} \triangleq \frac{P}{1 - \varepsilon - \frac{\zeta}{\sqrt{N}}}$.

\section{Proof of Lemma \ref{l2} in Section \ref{sec4-1-4}}\label{App-lemma1}
\renewcommand{\theequation}{B\arabic{equation}}
\setcounter{equation}{0}

{\small \begin{align}\label{jsk-ixiyi-fl}
		 &\mathbb{P}\left[\jmath_S(S,d) - \sum_{i=1}^{N}\imath(X_{i};Y_{i}) \geq \gamma \right] \notag \\
		 &=\mathbb{P}\left[\left(\jmath_S(S,d) - \sum_{i=1}^{N}\imath(X_{i};Y_{i}) \geq \gamma \right) \bigcap \left((S-\widehat{S})^{2} \geq d \right)\right]
		 +\mathbb{P}\left[\left(\jmath_S(S,d) - \sum_{i=1}^{N}\imath(X_{i};Y_{i}) \geq \gamma \right) \bigcap \left((S-\widehat{S})^{2} < d \right)\right]\notag \\
		 &\overset{(a)}{\leq } \mathbb{P}\left[\left((S-\widehat{S})^{2} \geq d \right)\right]
		 +\mathbb{P}\left[\left(\jmath_S(S,d) - \sum_{i=1}^{N}\imath(X_{i};Y_{i}) \geq \gamma \right) \bigcap \left((S-\widehat{S})^{2} < d \right)\right]\notag \\
		 &\overset{(b)}{\leq } \varepsilon +\mathbb{P}\left[\left(\jmath_S(S,d) - \sum_{i=1}^{N}\imath(X_{i};Y_{i}) \geq \gamma \right) \bigcap \left((S-\widehat{S})^{2} < d \right)\right]\notag \\
		 &\overset{(c)}{=}\varepsilon + \mathbb{E}_{S,X^{N},Y^{N},\widehat{S}}  \left[\indicator{\left(\jmath_S(S,d) - \sum_{i=1}^{N}\imath(X_{i};Y_{i}) \geq \gamma \right) \bigcap \left((S-\widehat{S})^{2} < d \right) }\right]\notag \\
		 &= \varepsilon + \int_{s,x^{N},y^{N},\widehat{s}} f(s,x^{N},y^{N},\widehat{s})\cdot \indicator{\left(\jmath_S(s,d) - \sum_{i=1}^{N}\imath(x_{i};y_{i}) \geq \gamma \right) \bigcap \left((s-\widehat{s})^{2} < d \right) }dsdx^{N}dy^{N}d\widehat{s}\notag \\
		 &=\varepsilon + \int_{s,x^{N},y^{N},\widehat{s}} f(s)\cdot f(x^{N},y^{N}|s)\cdot f(\widehat{s}|s,x^{N},y^{N}) \cdot \indicator{\left(\jmath_S(s,d) - \sum_{i=1}^{N}\imath(x_{i};y_{i}) \geq \gamma \right) \bigcap \left((s-\widehat{s})^{2} < d \right) }dsdx^{N}dy^{N}d\widehat{s}\notag \\
		 &\overset{(d)}{=}\varepsilon +\int_{s,x^{N},y^{N},\widehat{s}}  f(s)\cdot \left(\prod_{i=1}^{N} f(x_{i},y_{i}|x^{i-1},y^{i-1},s)\right) \cdot f(\widehat{s}|y^{N}) \cdot \indicator{\left(\jmath_S(s,d) - \sum_{i=1}^{N}\imath(x_{i};y_{i}) \geq \gamma \right) \bigcap \left((s-\widehat{s})^{2} < d \right)}dsdx^{N}dy^{N}d\widehat{s}\notag \\
		 &\overset{(e)}{=}\varepsilon +\int_{s,x^{N},y^{N}}  f(s)\cdot \left(\prod_{i=1}^{N} f(x_{i},y_{i}|x^{i-1},y^{i-1},s)\right) \cdot \left(\int_{\widehat{s}\in B_{d}(s)}f(\widehat{s}|y^{N})d\widehat{s}\right) \cdot \indicator{\jmath_S(s,d) - \sum_{i=1}^{N}\imath(x_{i};y_{i}) \geq \gamma }dsdx^{N}dy^{N}\notag \\		
		 &=\varepsilon +\int_{s,x^{N},y^{N}} f(s)\cdot \left(\prod_{i=1}^{N} \left(f(x_{i}|x^{i-1},y^{i-1},s) \cdot  f(y_{i}|x^{i},y^{i-1},s)\right) \right)\cdot \left(\int_{\widehat{s}\in B_{d}(s)}f(\widehat{s}|y^{N})d\widehat{s}\right)\cdot \indicator{\jmath_S(s,d) - \sum_{i=1}^{N}\imath(x_{i};y_{i}) \geq \gamma }dsdx^{N}dy^{N}\notag \\		
		 &\overset{(f)}{=} \varepsilon +\int_{s,x^{N},y^{N}} f(s)\cdot \left(\prod_{i=1}^{N} \left(f(x_{i}|x^{i-1},y^{i-1},s) \cdot f(y_{i}|x_{i}) \right) \right)\cdot \left(\int_{\widehat{s}\in B_{d}(s)}f(\widehat{s}|y^{N})d\widehat{s}\right)\cdot \indicator{\jmath_S(s,d) - \sum_{i=1}^{N}\imath(x_{i};y_{i}) \geq \gamma }dsdx^{N}dy^{N}\notag \\		 	
		 &\overset{(g)}{\leq} \varepsilon +\int_{s,x^{N},y^{N}}f(s)\cdot \left(\int_{\widehat{s}\in B_{d}(s)}f(\widehat{s}|y^{N})d\widehat{s}\right) \cdot \left(\prod_{i=1}^{N}q(y_{i})\cdot f(x_{i}|x^{i-1},y^{i-1},s) \right) \cdot \exp\left(\jmath_S(s,d) - \gamma\right)dsdx^{N}dy^{N} \notag \\	
		 &\overset{(h)}{=} \varepsilon +\int_{s,y^{N}}f(s)\cdot \left(\int_{\widehat{s}\in B_{d}(s)}f(\widehat{s}|y^{N})d\widehat{s}\right) \cdot q(y^{N})\cdot \left(\int_{x^{N}}\prod_{i=1}^{N}  f(x_{i}|x^{i-1},y^{i-1},s) dx^{N}\right)\cdot \exp\left(\jmath_S(s,d) - \gamma\right)dsdy^{N}\notag \\
		 &= \varepsilon +\int_{s,y^{N}}f(s)\cdot \left(\int_{\widehat{s}\in B_{d}(s)}f(\widehat{s}|y^{N})d\widehat{s}\right) \cdot q(y^{N}) \cdot 1 \cdot \exp\left(\jmath_S(s,d) - \gamma\right)dsdy^{N}\notag \\
		 &\overset{(i)}{=}\varepsilon +\int_{s}f(s)\cdot \left(\int_{\widehat{s}\in B_{d}(s)}\int_{y^{N}}f^{*}(\widehat{s},y^{N})dy^{N}d\widehat{s}\right) \cdot \exp\left(\jmath_S(s,d) - \gamma\right)ds\notag \\
		 &= \varepsilon +\int_{s}f(s)\cdot \left(\int_{\widehat{s}\in B_{d}(s)}f^{*}(\widehat{s})d\widehat{s}\right)  \cdot \exp\left(\jmath_S(s,d) - \gamma\right)ds\notag \\
		 &\overset{(j)}{\leq} \varepsilon +\int_{s}f(s)\cdot \left(\int_{\widehat{s}\in B_{d}(s)}f^{*}(\widehat{s})\cdot \exp\left(- \gamma\right) \cdot \exp\left(\jmath_S(s,d) +\lambda^{*}d-\lambda^{*}(s-\widehat{s})^{2}\right)d\widehat{s}\right)ds\notag \\
		 &\overset{(k)}{\leq} \varepsilon +\int_{s}f(s)\cdot \left(\int_{\widehat{s}\in B_{d}(s)}f^{*}(\widehat{s})\cdot \exp\left(- \gamma\right) \cdot \exp\left(\jmath_S(s,d) +\lambda^{*}d-\lambda^{*}(s-\widehat{s})^{2}\right)d\widehat{s}\right)ds\notag \\
		 &\,\,+\int_{s}f(s)\cdot \left(\int_{\widehat{s}\in \overline{B_{d}(s)} }f^{*}(\widehat{s}) \cdot \exp\left(- \gamma\right) \cdot \exp\left(\jmath_S(s,d) +\lambda^{*}d-\lambda^{*}(s-\widehat{s})^{2}\right)d\widehat{s}\right)ds\notag \\
		 &= \varepsilon +\int_{s}f(s)\cdot \left(\int_{\widehat{s}}f^{*}(\widehat{s})\cdot \exp\left(- \gamma\right) \cdot \exp\left(\jmath_S(s,d) +\lambda^{*}d-\lambda^{*}(s-\widehat{s})^{2}\right)d\widehat{s}\right)ds\notag \\
		 &= \varepsilon +  \exp\left(- \gamma\right) \cdot \left( \int_{\widehat{s}}f^{*}(\widehat{s})\int_{s}f(s)\cdot \exp\left(\jmath_S(s,d) +\lambda^{*}d-\lambda^{*}(s-\widehat{s})^{2}\right)dsd\widehat{s}\right)\notag \\		
		 &= \varepsilon +  \exp\left(- \gamma\right) \cdot \left( \int_{\widehat{s}}f^{*}(\widehat{s})\cdot \mathbb{E}\left[\exp\left(\jmath_S(S,d) +\lambda^{*}d-\lambda^{*}(S-\widehat{s})^{2}\right)\right]d\widehat{s}\right) \notag \\		
		 &\overset{(l)}{\leq}\varepsilon +\exp\left(- \gamma\right)\cdot \int_{\widehat{s}}f^{*}(\widehat{s}) d\widehat{s}   \notag \\
		 &=\varepsilon +\exp\left(- \gamma\right),
\end{align}}%
where $x^{N}\triangleq(x_{1},\ldots,x_{N})$, $dx^{N}\triangleq dx_{1}\cdot dx_{2} \ldots \cdot dx_{N}$,
$y^{N}\triangleq(y_{1},\ldots,y_{N})$, $dy^{N}\triangleq dy_{1}\cdot dy_{2}\ldots \cdot dy_{N}$,
$\indicator{\cdot}$ is the indicator function,\\
$(a)$ follows from $\mathbb{P}\left[A\bigcap B\right] \leq\mathbb{P}\left[ B\right]$;\\
$(b)$ follows from $\mathbb{P}\left[ (S-\widehat{S})^{2} \geq d \right] \leq \varepsilon$;\\
$(c)$ follows from $\mathbb{P}\left[\cdot\right]=\mathbb{E}\left[\indicator{\cdot}\right]$;\\
$(d)$ follows from the chain rule of probability density function and the Markov chain $(S, X^N) \to Y^N \to \widehat{S}$;\\
$(e)$ follows from $B_{d}(s) = \{ \widehat{s} \in \widehat{\mathcal{S}}:(s-\widehat{s})^{2} < d \}$ and $\widehat{\mathcal{S}}$ denotes all possible estimatations of $S$;\\
$(f)$ follows from the channel is memoryless;\\
$(g)$ follows from
\begin{equation}\label{jsk-ixiyi}
	\begin{split}
		&\jmath_S(s,d) - \sum_{i=1}^{N}\imath(x_{i};y_{i}) \geq \gamma\\
		&\Rightarrow \jmath_S(s,d) - \sum_{i=1}^{N}\log(\frac{f(y_{i}|x_{i})}{q(y_{i})}) \geq \gamma\\
		&\Rightarrow \jmath_S(s,d) - \log(\prod_{i=1}^{N}\frac{f(y_{i}|x_{i})}{q(y_{i})}) \geq \gamma\\
		&\Rightarrow \exp\left(\jmath_S(s,d) - \gamma\right)\geq \prod_{i=1}^{N}\frac{f(y_{i}|x_{i})\cdot f(x_{i}|x^{i-1},y^{i-1},s)}{q(y_{i})\cdot f(x_{i}|x^{i-1},y^{i-1},s)} \\
		&\Rightarrow \left(\prod_{i=1}^{N}q(y_{i})\cdot f(x_{i}|x^{i-1},y^{i-1},s) \right) \cdot \exp\left(\jmath_S(s,d) - \gamma\right)\geq \prod_{i=1}^{N}f(y_{i}|x_{i})\cdot f(x_{i}|x^{i-1},y^{i-1},s) ;\\
	\end{split}
\end{equation}
$(h)$ follows from $q(y^{N})=\prod_{i=1}^{N}q(y_{i})$;\\
$(i)$ follows from $f^{*}(\widehat{s},y^{N})\triangleq f(\widehat{s}|y^{N})\cdot q(y^{N})$;\\
$(j)$ follows from $(s-\widehat{s})^{2} < d$;\\
$(k)$ follows from that
{\small \begin{equation}\label{bdsbuji}
	\begin{split}
 	\int_{s}f(s)\cdot \left(\int_{\widehat{s}\in \overline{B_{d}(s)} }f^{*}(\widehat{s}) \cdot \exp\left(- \gamma\right) \cdot \exp\left(\jmath_S(s,d) +\lambda^{*}d-\lambda^{*}(s-\widehat{s})^{2}\right)d\widehat{s}\right)ds\geq 0,
	\end{split}
\end{equation}}%
$\overline{B_{d}(s)}$ is the complement of $B_{d}(s)$ and $B_{d}(s) = \{ \widehat{s} \in \widehat{\mathcal{S}} : (s-\widehat{s})^{2} < d \}$;\\
$(l)$ follows from similar argument in \cite[equation (17) in P.3]{FBL-JSCC}  , where $\mathbb{E}\left[ \exp(\lambda^*d - \lambda^*(S-\widehat{s})^{2} + \jmath_S(S, d)) \right] \leq 1$, for all $\widehat{s}\in \widehat{\mathcal{S}}$.\\

\section{The Proof of steps $(a^{*})-(c^{*})$ in (\ref{c-goal})}\label{App-37-stepac}
\renewcommand{\theequation}{C\arabic{equation}}
\setcounter{equation}{0}
\subsection{Proof of step $(a^{*})$ in (\ref{c-goal})}\label{App2-1}
Applying Lemma \ref{l2} with $N$ substituted by $N^{\prime}$, $\varepsilon$ by $\varepsilon^{\prime}$, $P$ by $P^{\prime}$,
we have
{\small \begin{equation}\label{c-goal-fl}
	\begin{split}
		&\mathbb{P}\left[\jmath_S(S,d) - \sum_{i=1}^{N^{\prime}}\imath(X_{i};Y_{i}) \geq \gamma \right] \leq \varepsilon^{\prime} +\exp\left(- \gamma\right).
	\end{split}
\end{equation}}%
where $\imath(X_{i};Y_{i})$ is the information density defined as
{\small \begin{equation}\label{ixy-fl}
	\imath(X_{i};Y_{i}) = \log\left(\frac{f(Y_{i}|X_{i})}{q(Y_{i})}\right),
\end{equation}}%
$f(\cdot)$ denotes probability density function,
$\jmath_S(S,d)$ is the $d$-tilted information in $S$ denoted by
{\small  \begin{equation}\label{jsd-d-fl}
			\jmath_S(S, d) = -\ln \left(\mathbb{E}_{\widehat{S}^{*}}\left[\exp(\lambda^{*} d - \lambda^{*}(S-\widehat{S}^{*})^{2})\right]\right),
        \end{equation}}
$\widehat{S}^{*}$ is the reproduction random variable that achieves the infimum in the rate-distortion function
{\small \begin{equation}\label{eq:rate_distortion}
			\mathbb{R}_{S}(d) = \inf_{\substack{P_{\widehat{S}|S}:\,\, \mathbb{E}[(S-\widehat{S})^{2}] \leq d}} I(S;\widehat{S}),
		\end{equation}}%
and $\lambda^*$ is the negative derivative of the rate-distortion function $\mathbb{R}_S(d)$ evaluated at $d$, namely, $\lambda^* = -\mathbb{R}_S^{\prime}(d)$.

\subsubsection{The calculation of $\jmath_S(S,d)$}\label{App2-1-1}
For a Gaussian source $S\sim\mathcal{N}(0,\sigma_s^2)$, the rate-distortion function and the parameters of the optimal test channel are well known\cite{EIT},
\begin{equation}\label{AWGN-Rsd}
\begin{split}
    & \mathbb{R}_{S}(d)=\frac{1}{2}\ln\left(\frac{\sigma_{s}^{2}}{d}\right)\triangleq R(d),
	\qquad \widehat{S}^{*} \sim \mathcal{N}(0,\,\sigma_{s}^{2}-d).
\end{split}
\end{equation}
Subsequently, since $\lambda^* = -\mathbb{R}_{S}^{\prime}(d) = \frac{1}{2d}$,
the expectation appearing in (\ref{jsd-d-fl}) is re-written by
{\small \begin{equation}\label{EJs1}
\begin{split}
    \mathbb{E}_{\widehat{S}^{*}}\left[\exp\left(\lambda^* d - \lambda^* (S-\widehat{S}^{*})^{2}\right)\right]
	& = \mathbb{E}_{\widehat{S}^{*}}\left[\exp\left(\frac{1}{2} - \frac{(S - \widehat{S}^{*})^{2}}{2d}\right)\right]
	= \exp\left(\frac{1}{2}\right) \cdot \mathbb{E}_{\widehat{S}^{*}}\left[\exp\left(-\frac{(S - \widehat{S}^{*})^{2}}{2d}\right)\right]\\
    &\overset{(a)}{=}\exp\left(\frac{1}{2}\right) \cdot\exp\left(-\frac{S^{2}}{2\sigma_{s}^{2}}\right) \cdot \sqrt{\frac{d}{\sigma_{s}^{2}}},
\end{split}
\end{equation}}%
where $(a)$ follows from
{\small \begin{equation}\label{EJs2}
\begin{split}
    &\mathbb{E}_{\widehat{S}^{*}}\left[\exp\left(- \frac{(S - \widehat{S}^{*})^{2}}{2d}\right)\right] = \int_{-\infty}^{+\infty} \frac{1}{\sqrt{2\pi (\sigma_{s}^{2}-d)}}
     \cdot \exp\left(-\frac{(\widehat{s}^{*})^{2}}{2(\sigma_{s}^{2}-d)}\right) \cdot \exp\left(-\frac{(S-\widehat{s}^{*})^{2}}{2d}\right)\,d\widehat{s}^{*} \\
    &= \exp\left(-\frac{S^{2}}{2\sigma_{s}^{2}}\right)
	\cdot\frac{1}{\sqrt{2\pi (\sigma_{s}^{2}-d)}}
	\cdot\sqrt{\frac{2\pi (\sigma_{s}^{2}-d)d}{\sigma_{s}^{2}}}
	\cdot \int_{-\infty}^{+\infty} \sqrt{\frac{\sigma_{s}^{2}}{2\pi (\sigma_{s}^{2}-d)d}}
	\cdot \exp\left(-\frac{\sigma_{s}^{2}}{2d(\sigma_{s}^{2}-d)} \cdot \left(\widehat{s}^{*} - \frac{S(\sigma_{s}^{2}-d)}{\sigma_{s}^{2}}\right)^2\right)\,d\widehat{s}^{*} \\
    &= \exp\left(-\frac{S^{2}}{2\sigma_{s}^{2}}\right) \cdot \sqrt{\frac{d}{\sigma_{s}^{2}}}.
\end{split}
\end{equation}}

Finally, substituting (\ref{EJs1}) into (\ref{jsd-d-fl}), we have
\begin{equation}\label{jsd}
    \jmath_S(S, d) = R(d) - \frac{1}{2} + \frac{S^{2}}{2\sigma_{s}^{2}}.
\end{equation}
\subsubsection{The calculation of $\sum_{i=1}^{N^{\prime}}\imath(X_{i};Y_{i})$}\label{App2-1-2}
On the other hand, for the AWGN channel $Y_i = X_i + \eta_i$ with $\eta_i \sim \mathcal{N}(0, \sigma_\eta^2)$,
we choose the spherically symmetric $q(Y^{N^{\prime}})=\prod_{i=1}^{N^{\prime}} q(Y_{i})$,
where $q(Y_i) \sim \mathcal{N}(0, P^{\prime}+\sigma_{\eta}^{2})$
is the capacity achieving output distribution.
Then, for each $i$,
{\small \begin{equation}\label{ixiyi}
	\begin{split}
		\imath(X_{i};Y_{i}) &=\ln\left(\frac{f(Y_{i}|X_{i})}{q(Y_{i})}\right)
		\overset{(b)}{=}\ln\left(\sqrt{1+\frac{P^{\prime}}{\sigma_{\eta}^{2}}}\cdot \exp\left(\frac{Y_{i}^{2}}{2(P^{\prime}+\sigma_{\eta}^{2})}-\frac{(Y_{i}-X_{i})^{2}}{2\sigma_{\eta}^{2}}\right)\right)\\
		&=\frac{1}{2}\ln\left(1+\frac{P^{\prime}}{\sigma_{\eta}^{2}}\right)+\frac{1}{2(P^{\prime}+\sigma_{\eta}^{2})}\left(Y_{i}^{2}-(1+\frac{P^{\prime}}{\sigma_{\eta}^{2}})(Y_{i}-X_{i})^{2}\right)\\
        &=\frac{1}{2}\ln\left(1+\frac{P^{\prime}}{\sigma_{\eta}^{2}}\right)
		+\frac{1}{2(P^{\prime}+\sigma_{\eta}^{2})}\left(-\frac{P^{\prime}}{\sigma_{\eta}^{2}}(Y_{i}-X_{i})^{2}+X_{i}^{2}+2X_{i}(Y_{i}-X_{i})\right),
	\end{split}
\end{equation}}%
where $(b)$ follows from
{\small \begin{equation}\label{fxy-qy}
	\begin{split}
		&f(Y_{i}|X_{i})=\frac{1}{\sqrt{2\pi \sigma_{\eta}^{2}}}\exp\left(-\frac{(Y_{i}-X_{i})^{2}}{2\sigma_{\eta}^{2}}\right),
        \quad q(Y_{i})=\frac{1}{\sqrt{2\pi (P^{\prime}+\sigma_{\eta}^{2})}}\exp\left(-\frac{Y_{i}^{2}}{2(P^{\prime}+\sigma_{\eta}^{2})}\right).\\
\end{split}
\end{equation}}%
Then, we have
{\small \begin{equation}\label{sumixiyi}
	\begin{split}
      \sum_{i=1}^{N^{\prime}}\imath(X_{i};Y_{i}) =
	  N^{\prime}\cdot C(P^{\prime})+\frac{1}{2(P^{\prime}+\sigma_{\eta}^{2})}\cdot \sum_{i=1}^{N^{\prime}} X_i^2
		+\sum_{i=1}^{N^{\prime}}\Psi(X_{i}, Y_{i})
	\end{split}
\end{equation}}%
where
{\small \begin{equation}\label{lambdaxy}
	\Psi(X_{i}, Y_{i}) \triangleq \frac{1}{2(P^{\prime}+\sigma_{\eta}^{2})}\left(-\frac{P^{\prime}}{\sigma_{\eta}^{2}}(Y_{i}-X_{i})^{2}+2X_{i}(Y_{i}-X_{i})\right).
\end{equation}}%
Furthermore, under the equal power constraint ( $\mathbb{P}\left[\sum_{i=1}^{N^{\prime}} X_i^2 = N^{\prime}P^{\prime}\right]=1$), it is not difficult to check that
{\small \begin{equation}\label{prsumixiyi}
	\begin{split}
      \mathbb{P}\left[\sum_{i=1}^{N^{\prime}}\imath(X_{i};Y_{i}) = N^{\prime}\cdot C(P^{\prime})+\frac{N^{\prime}P^{\prime}}{2(P^{\prime}+\sigma_{\eta}^{2})}
		+\sum_{i=1}^{N^{\prime}}\Psi(X_{i}, Y_{i})\right]	 =1
	\end{split}
\end{equation}}%
where the Gaussian capacity function is defined as $C(x)\triangleq \frac{1}{2}\ln\left(1+\frac{x}{\sigma_{\eta}^{2}}\right)$.

\begin{lemma}\label{equalp}
Let $X^{*}$ and $Y^{*}$ be random variables defined on the same probability space.
If $\mathbb{P}(X^{*} = Y^{*}) = 1$, then for any $r \in \mathbb{R}$,
\begin{equation}\label{XY*}
\begin{split}
\mathbb{P}[X^{*} \leq r] = \mathbb{P}[Y^{*} \leq r].
\end{split}
\end{equation}
\end{lemma}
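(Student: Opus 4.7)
The plan is to treat this as a standard consequence of the fact that modifying an event by a null set does not change its probability. Let $A = \{X^{*} = Y^{*}\}$, so by hypothesis $\mathbb{P}(A) = 1$ and consequently $\mathbb{P}(A^{c}) = 0$. I will compare the two distribution functions by intersecting with $A$ and $A^{c}$ separately.

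First, I would decompose $\{X^{*} \leq r\}$ as the disjoint union $(\{X^{*} \leq r\} \cap A) \cup (\{X^{*} \leq r\} \cap A^{c})$. Since $\{X^{*} \leq r\} \cap A^{c} \subseteq A^{c}$ and $\mathbb{P}(A^{c}) = 0$, monotonicity of $\mathbb{P}$ gives $\mathbb{P}[\{X^{*} \leq r\} \cap A^{c}] = 0$. Hence $\mathbb{P}[X^{*} \leq r] = \mathbb{P}[\{X^{*} \leq r\} \cap A]$, and the same argument applied to $Y^{*}$ yields $\mathbb{P}[Y^{*} \leq r] = \mathbb{P}[\{Y^{*} \leq r\} \cap A]$.

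Next, I would observe that on the event $A$ the equality $X^{*}(\omega) = Y^{*}(\omega)$ holds pointwise, so the sets $\{X^{*} \leq r\} \cap A$ and $\{Y^{*} \leq r\} \cap A$ coincide. Taking probabilities and combining with the previous display produces $\mathbb{P}[X^{*} \leq r] = \mathbb{P}[Y^{*} \leq r]$, which is the desired identity.

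There is essentially no obstacle to speak of: the only thing to be careful about is measurability of the intersections $\{X^{*} \leq r\} \cap A$ and $\{X^{*} \leq r\} \cap A^{c}$, which is automatic since $X^{*}$ and $Y^{*}$ are random variables on a common probability space and $A = \{X^{*} = Y^{*}\}$ is measurable as the preimage of $\{0\}$ under $X^{*} - Y^{*}$. Given this, the argument is a one-line application of the null-set principle.
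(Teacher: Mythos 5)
Your proposal is correct and follows essentially the same route as the paper: both decompose $\{X^{*}\leq r\}$ over the event $\{X^{*}=Y^{*}\}$ and its complement, kill the complementary piece using $\mathbb{P}[X^{*}\neq Y^{*}]=0$, and identify $\{X^{*}\leq r\}\cap\{X^{*}=Y^{*}\}$ with $\{Y^{*}\leq r\}\cap\{X^{*}=Y^{*}\}$ as sets. No gap to report.
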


\begin{proof}
The proof of Lemma \ref{equalp} is given in Appendix \ref{App-equalp}.
\end{proof}
From Lemma \ref{equalp}, we have
{\small \begin{equation}\label{prsumixiyi-2}
	\begin{split}
      &\mathbb{P}\left[\jmath_S(S,d) - \sum_{i=1}^{N^{\prime}}\imath(X_{i};Y_{i}) \geq \gamma \right]
	  =\mathbb{P}\left[\jmath_S(S,d) - \left(N^{\prime}\cdot C(P^{\prime})+\frac{N^{\prime}P^{\prime}}{2(P^{\prime}+\sigma_{\eta}^{2})}
		+\sum_{i=1}^{N^{\prime}}\Psi(X_{i}, Y_{i}) \right) \geq \gamma \right]
	\end{split}
\end{equation}}%
Finally, substituting (\ref{jsd}) into (\ref{prsumixiyi-2}), we have
{\small \begin{equation}\label{c-goal-a*}
	\mathbb{P}\left[ U -\frac{1}{2}-\frac{N^{\prime}P^{\prime}}{2(P^{\prime}+\sigma_{\eta}^{2})} \geq \gamma- R(d)+N^{\prime}\cdot C(P^{\prime})\right]\leq \varepsilon^{\prime} +\exp\left(- \gamma\right),
\end{equation}}%
where
\begin{equation}\label{U}
	U \triangleq \frac{S^{2}}{2\sigma_{s}^{2}}-\sum_{i=1}^{N^{\prime}}\Psi(X_{i}, Y_{i}).
\end{equation}%

\subsection{Proof of step $(b^{*})$ in (\ref{c-goal})}\label{App2-2}

The following Lemma \ref{MGF-equality} is the key to prove step $(b^{*})$.

\begin{lemma}\label{MGF-equality}
Let $U$ be defined in (\ref{U}), and $T$ be a random variable given by
\begin{equation}\label{T}
	T\triangleq \frac{1}{2} G^{2}+\sum_{i=1}^{N^{\prime}}\left(\frac{P^{\prime}}{2(P^{\prime}+\sigma_{\eta}^{2})}K_{i}^{2}-\frac{\sqrt{P^{\prime}\sigma_{\eta}^{2}}}{P^{\prime}+\sigma_{\eta}^{2}}K_{i}\right),
\end{equation}
where $G$ and $K_{i} (i \in \{1,\ldots, N^{\prime}\})$ are independent normalized Gaussian random variables.
Then, the moment generating functions of $U$ and $T$ are equal, namely,
\begin{equation}
    \mathbb{E}_{S, X^{N^{\prime}}, Y^{N^{\prime}}} \left[ \exp(tU) \right] = \mathbb{E}_{G,K^{N^{\prime}}}\left[ \exp(tT) \right],
\end{equation}
where $X^{N^{\prime}}\triangleq(X_{1},\ldots,X_{N^{\prime}})$,
$Y^{N^{\prime}}\triangleq(Y_{1},\ldots,Y_{N^{\prime}})$
and $K^{N^{\prime}}\triangleq(K_{1},\ldots,K_{N^{\prime}})$.
\end{lemma}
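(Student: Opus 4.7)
The plan is to compute both moment generating functions in closed form and verify they coincide. First, I would rewrite $U$ by substituting $\eta_i = Y_i - X_i$ and setting $G = S/\sigma_s$, so that $G \sim \mathcal{N}(0,1)$ is independent of $\eta^{N^{\prime}}$, and
\begin{equation*}
U = \tfrac{1}{2}G^2 + \sum_{i=1}^{N^{\prime}} \left( \frac{P^{\prime}}{2\sigma_\eta^2(P^{\prime}+\sigma_\eta^2)}\eta_i^2 - \frac{X_i}{P^{\prime}+\sigma_\eta^2}\eta_i \right).
\end{equation*}
The main tool is iterated conditioning on the filtration $\mathcal{F}_{i-1} = \sigma(S, Y^{i-1}) = \sigma(S, \eta^{i-1})$, which makes $X_i$ measurable while keeping $\eta_i$ a fresh $\mathcal{N}(0,\sigma_\eta^2)$ variate. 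For each $i$ I would apply the standard Gaussian MGF identity $\mathbb{E}[\exp(a\eta^2+b\eta)] = (1-2a\sigma_\eta^2)^{-1/2}\exp(b^2\sigma_\eta^2/(2(1-2a\sigma_\eta^2)))$, valid for $2a\sigma_\eta^2 < 1$, with $a$ and $b$ selected from the rewritten $U$ (so $X_i$ plays the role of a constant under the conditioning).

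Iterating this from $i=N^{\prime}$ down to $i=1$ and using the independence of $\eta_i$ from $\mathcal{F}_{i-1}$ yields, with $\alpha^{\prime} \triangleq P^{\prime}/(2(P^{\prime}+\sigma_\eta^2))$,
\begin{equation*}
\mathbb{E}\!\left[\exp(tU)\,\middle|\,S, X^{N^{\prime}}\right] = (1-2t\alpha^{\prime})^{-N^{\prime}/2}\exp\!\left(\tfrac{1}{2}tG^2 + \frac{t^2 \sigma_\eta^2 \sum_{i=1}^{N^{\prime}} X_i^2}{2(P^{\prime}+\sigma_\eta^2)^2(1-2t\alpha^{\prime})}\right).
\end{equation*}
At this point I would invoke the equal-power constraint $\mathbb{P}[\sum_i X_i^2 = N^{\prime}P^{\prime}] = 1$ to replace $\sum X_i^2$ with the deterministic value $N^{\prime}P^{\prime}$; after this substitution the expression no longer depends on $X^{N^{\prime}}$, and taking expectation over $S$ contributes only the chi-squared MGF $\mathbb{E}[\exp(tG^2/2)] = (1-t)^{-1/2}$ valid for $t<1$.

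For the right-hand side the computation is direct: since $G$ and $K_1,\dots,K_{N^{\prime}}$ are mutually independent standard normals, the MGF of $T$ factors as $(1-t)^{-1/2}$ times $N^{\prime}$ identical factors of the form $(1-2t\alpha^{\prime})^{-1/2}\exp(t^2(\beta^{\prime})^2/(2(1-2t\alpha^{\prime})))$, with $\beta^{\prime} \triangleq \sqrt{P^{\prime}\sigma_\eta^2}/(P^{\prime}+\sigma_\eta^2)$. Since $N^{\prime}(\beta^{\prime})^2 = N^{\prime}P^{\prime}\sigma_\eta^2/(P^{\prime}+\sigma_\eta^2)^2$ matches exactly the exponent produced on the left after the equal-power substitution, the two MGFs agree on the common neighborhood of the origin where both are finite.

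The main obstacle is the feedback-induced randomness of the $X_i$: they are measurable with respect to past observations and hence genuinely random, whereas the target $T$ is expressed purely through free standard normals. The equal-power constraint $\sum_i X_i^2 = N^{\prime}P^{\prime}$ almost surely is precisely what allows us to trade the random quadratic $\sum X_i^2$ for a constant without altering any MGF, collapsing the problem to a comparison of two products of standard Gaussian MGFs. A minor bookkeeping point is to restrict $t$ to the interval where both $1-t>0$ and $1-2t\alpha^{\prime}>0$ to guarantee existence of the exponential moments; this is harmless since agreement of MGFs on any open neighborhood of $0$ is all that is needed for the downstream application in (\ref{c-goal}).
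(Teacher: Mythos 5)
Your overall strategy is the same as the paper's (per-step Gaussian MGF integration over the noise, with the equal-power constraint used to eliminate the residual dependence on $\sum_i X_i^2$), and your computation of $\mathbb{E}[\exp(tT)]$ is correct and matches (\ref{MGF-T-final}). However, there is a genuine gap in your treatment of $\mathbb{E}[\exp(tU)]$: the backward iteration you describe does not close. When you condition on $\mathcal{F}_{N^{\prime}-1}=\sigma(S,Y^{N^{\prime}-1})$ and integrate out $\eta_{N^{\prime}}$, the Gaussian identity produces the factor $\exp\bigl(c\,X_{N^{\prime}}^{2}\bigr)$ with $c=\frac{t^{2}\sigma_{\eta}^{2}}{2(P^{\prime}+\sigma_{\eta}^{2})(P^{\prime}+\sigma_{\eta}^{2}-P^{\prime}t)}$. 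This factor is $\mathcal{F}_{N^{\prime}-1}$-measurable, but $X_{N^{\prime}}=\varphi_{N^{\prime}}(S,Y^{N^{\prime}-1})$ is an \emph{arbitrary} function of $\eta_{N^{\prime}-1}$, so at the next step the integrand in $\eta_{N^{\prime}-1}$ is $\exp\bigl(a\eta_{N^{\prime}-1}^{2}+b\eta_{N^{\prime}-1}+c\,\varphi_{N^{\prime}}(S,Y^{N^{\prime}-1})^{2}\bigr)$, which is no longer of quadratic-exponential form and cannot be evaluated by the same identity. Consequently the product formula you state for $\mathbb{E}[\exp(tU)\mid S,X^{N^{\prime}}]$ is not what the iteration yields; it is in fact false as written, since conditioning on $X^{N^{\prime}}$ constrains the noise (e.g., if $\varphi_{2}(S,Y_{1})=Y_{1}$ then $(X_{1},X_{2})$ determines $\eta_{1}$ exactly, and the conditional law of $\eta_{1}$ is degenerate rather than $\mathcal{N}(0,\sigma_{\eta}^{2})$).

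The repair is to deploy the equal-power constraint \emph{before} iterating rather than at the end: insert the factor $\exp\bigl(c\,(N^{\prime}P^{\prime}-\sum_{j=1}^{N^{\prime}}X_{j}^{2})\bigr)$, which equals $1$ almost surely by (\ref{pw-fl}), into the expectation at the outset. Then at step $k$ the newly produced $\exp(c\,X_{k}^{2})$ cancels the pre-existing, $\mathcal{F}_{k-1}$-measurable factor $\exp(-c\,X_{k}^{2})$, each conditional expectation contributes exactly $\bigl(\tfrac{P^{\prime}+\sigma_{\eta}^{2}}{P^{\prime}+\sigma_{\eta}^{2}-P^{\prime}t}\bigr)^{1/2}$, and the telescoping closes with the deterministic exponent $c\,N^{\prime}P^{\prime}$ remaining. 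This is precisely the paper's indicator-function trick in (\ref{f1}) and the recursive reduction (\ref{Etlambdaxy-intxyl})--(\ref{Etlambdaxy-intxyN}), phrased probabilistically. With that modification your argument becomes correct and coincides with the paper's; the restriction of $t$ to a neighborhood of the origin where $1-t>0$ and $P^{\prime}+\sigma_{\eta}^{2}-P^{\prime}t>0$ is, as you note, harmless.
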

\begin{IEEEproof}
	The proof of Lemma \ref{MGF-equality} is given in Appendix \ref{App-MGFE}.
\end{IEEEproof}
From Lemma \ref{MGF-equality} and similar argument in \cite{FBL-feedback, jumu}, we conclude that
\begin{equation}\label{T-U}
    \begin{split}
    &\mathbb{P}\left[ U -\frac{1}{2}-\frac{N^{\prime}P^{\prime}}{2(P^{\prime}+\sigma_{\eta}^{2})} \geq \gamma- R(d)+N^{\prime}\cdot C(P^{\prime})\right]
    =\mathbb{P}\left[ T -\frac{1}{2}-\frac{N^{\prime}P^{\prime}}{2(P^{\prime}+\sigma_{\eta}^{2})} \geq \gamma- R(d)+N^{\prime}\cdot C(P^{\prime})\right].
\end{split}
\end{equation}
Furthermore, define
{\begin{equation}
    \begin{split}
    &T^{*}_{0} \triangleq \frac{1}{2}(G^2-1),
    \quad T^{*}_{i} \triangleq \frac{P^{\prime}}{2(P^{\prime}+\sigma_{\eta}^{2})}(K_i^2-1) - \frac{\sqrt{P^{\prime}\sigma_{\eta}^{2}}}{P^{\prime}+\sigma_{\eta}^{2}}K_i, \quad 1\leq i \leq N^{\prime}.
    \end{split}
\end{equation}}%
Then $T-\frac{1}{2}-\frac{N^{\prime}P^{\prime}}{2(P^{\prime}+\sigma_{\eta}^{2})}=\sum_{i=0}^{N^{\prime}} T^{*}_{i}$, where $T^{*}_{i}$ is independent of each other.

\begin{lemma}\label{E-Var-jsj-T}
    Let $T^{*}_{i} (i\in \{0,\ldots,N^{\prime}\})$ be defined as above.
    Then,
\begin{equation}\label{sum-e-var-jsj}
    \begin{split}
    &\sum_{i=0}^{N^{\prime}}\mathbb{E}[ T^{*}_{i}]=0,
    \quad \sum_{i=0}^{N^{\prime}} {\rm Var}\left[  T^{*}_{i} \right] = V_{d}+N^{\prime}\cdot V(P^{\prime}),\\
    &\rho \triangleq \sum_{0}^{N^{\prime}}\mathbb{E}[|(T^{*}_i)^3|] =\sum_{0}^{N^{\prime}}\rho_{i}= \rho_0+ N^{\prime}\cdot \rho_{1},\\
	& 0 \leq \rho < +\infty.
    \end{split}
\end{equation}
where $V_{d} \triangleq \frac{1}{2}$, $V(x)\triangleq \frac{x(x+2\sigma_{\eta}^{2})}{2(x+\sigma_{\eta}^{2})^{2}}$ denotes the Gaussian dispersion function,
and for $i \in\{0,\ldots,N^{\prime}\}$, $\rho_i \triangleq \mathbb{E}[|(T^{*}_i)^3|]$ denotes the third absolute moment.
\end{lemma}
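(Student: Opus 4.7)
The plan is to verify the three claims one by one by direct moment computation, exploiting the fact that $G$ and $K_1,\ldots,K_{N^{\prime}}$ are independent standard normals, so everything reduces to elementary chi-square and Gaussian moment identities ($\mathbb{E}[G^2]=1$, $\operatorname{Var}[G^2]=2$, $\mathbb{E}[K_i]=\mathbb{E}[K_i^3]=0$, $\mathbb{E}[K_i^2]=1$, $\mathbb{E}[K_i^4]=3$). The independence of the $T_i^{*}$'s is inherited from the independence of the underlying Gaussians, which lets me pass sums through expectation and variance without covariance cross-terms.

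For the first identity, I would note that $\mathbb{E}[T_0^{*}]=\tfrac12(\mathbb{E}[G^2]-1)=0$, and, for $1\le i\le N^{\prime}$, $\mathbb{E}[T_i^{*}]=\tfrac{P^{\prime}}{2(P^{\prime}+\sigma_\eta^2)}(\mathbb{E}[K_i^2]-1)-\tfrac{\sqrt{P^{\prime}\sigma_\eta^2}}{P^{\prime}+\sigma_\eta^2}\mathbb{E}[K_i]=0$. Summing gives $\sum_{i=0}^{N^{\prime}}\mathbb{E}[T_i^{*}]=0$.

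For the variance, $\operatorname{Var}[T_0^{*}]=\tfrac14\operatorname{Var}[G^2]=\tfrac12=V_d$. For $i\ge 1$, write $a\triangleq\tfrac{P^{\prime}}{2(P^{\prime}+\sigma_\eta^2)}$ and $b\triangleq\tfrac{\sqrt{P^{\prime}\sigma_\eta^2}}{P^{\prime}+\sigma_\eta^2}$, so that $T_i^{*}=a(K_i^2-1)-bK_i$. Expanding,
$$\mathbb{E}[(T_i^{*})^2]=a^2\operatorname{Var}(K_i^2)+b^2\mathbb{E}[K_i^2]-2ab\,\mathbb{E}[(K_i^2-1)K_i]=2a^2+b^2,$$
because the cross term vanishes by $\mathbb{E}[K_i^3]=\mathbb{E}[K_i]=0$. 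Substituting and simplifying over a common denominator yields $2a^2+b^2=\tfrac{P^{\prime}(P^{\prime}+2\sigma_\eta^2)}{2(P^{\prime}+\sigma_\eta^2)^2}=V(P^{\prime})$. Since the $T_i^{*}$'s are independent, summing gives $\sum_{i=0}^{N^{\prime}}\operatorname{Var}[T_i^{*}]=V_d+N^{\prime}V(P^{\prime})$.

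For the third absolute moment, independence is not needed: by linearity, $\rho=\rho_0+\sum_{i=1}^{N^{\prime}}\rho_i$, and since $T_1^{*},\ldots,T_{N^{\prime}}^{*}$ are identically distributed, $\rho_i=\rho_1$ for all $i\ge 1$, giving $\rho=\rho_0+N^{\prime}\rho_1$. Nonnegativity is immediate. Finiteness follows from the fact that $T_0^{*}$ and $T_i^{*}$ are degree-$2$ polynomials in a standard normal, so $|T_i^{*}|^3$ is bounded by a finite linear combination of $|G|^k$ (resp.\ $|K_i|^k$) for $k\le 6$ (via the triangle inequality applied to the expansion), and Gaussian random variables have finite moments of every order. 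I do not expect a genuine obstacle here; this is pure bookkeeping. The only care point is the algebraic simplification $2a^2+b^2=V(P^{\prime})$, which is what ties the computation to the dispersion function stated in the notation section.
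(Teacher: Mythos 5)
Your proposal is correct and follows essentially the same route as the paper's Appendix F: zero means by $\mathbb{E}[G^2]=\mathbb{E}[K_i^2]=1$ and odd-moment vanishing, the variance identity $2A^2+B^2=V(P^{\prime})$ for each $i\geq 1$ together with $\mathrm{Var}[T_0^{*}]=\tfrac12$, the reduction $\rho=\rho_0+N^{\prime}\rho_1$ from the $K_i$ being i.i.d., and finiteness of $\rho_0,\rho_1$ via the triangle inequality and finite Gaussian moments up to order six. No gaps.
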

\begin{IEEEproof}
	The proof of Lemma \ref{E-Var-jsj-T} is given in Appendix \ref{App-EVJ}.
\end{IEEEproof}
Subsequently, from Lemma \ref{E-Var-jsj-T}
and applying the Berry-Esseen theorem \cite{berry-essen},
we have
\begin{equation}\label{jsk-ixiyi-awgn-berry-essen}
	\begin{split}
		&\left\lvert \mathbb{P}\left[ T -\frac{1}{2}-\frac{N^{\prime}P^{\prime}}{2(P^{\prime}+\sigma_{\eta}^{2})} \geq \gamma-R(d)+N^{\prime}C(P^{\prime})\right] - Q\left(\frac{\gamma-R(d)+N^{\prime}C(P^{\prime})}{\sqrt{ V_{d}+N^{\prime}\cdot V(P^{\prime})}}\right) \right\rvert
        \leq 6\frac{\rho}{(V_{d}+N^{\prime}\cdot V(P^{\prime}))^{\frac{3}{2}}}.
	\end{split}
\end{equation}
Using $\rho \leq \max(\rho_0,\rho_1)\cdot (N^{\prime}+1)$ and {\small $V_d+N^{\prime}\cdot V(P^{\prime})\geq \min(V_d,V(P^{\prime}))\cdot (N^{\prime}+1)$}, where $\max(x,y)=x$ and $\min(x,y)=y$  while $x\geq y$, we obtain
\begin{equation}\label{bound-simplify}
    \begin{split}
        6\frac{\rho}{(V_{d}+N^{\prime}\cdot V(P^{\prime}))^{\frac{3}{2}}}
        \leq 6 \frac{\max(\rho_0, \rho_1)}{[\min(V_d, V(P^{\prime}))]^{3/2}} \cdot \frac{1}{\sqrt{N^{\prime}+1}}
        = \frac{B^{*}}{\sqrt{N^{\prime}+1}},
    \end{split}
\end{equation}
where
\begin{equation}\label{B-star-form}
    B^{*}=6\frac{\max(\rho_0,\rho_1)}{\bigl[\min(V_d,V(P^{\prime}))\bigr]^{3/2}}>0.
\end{equation}
Moreover, since $\rho_0,\rho_1<+\infty$ and $\min(V_d,V(P^{\prime}))>0$ for $P^{\prime}>0$ and $\sigma_{\eta}^2>0$,
the constant $B^{*}$ is also finite.
Consequently,
{\small \begin{equation}\label{jsk-ixiyi-awgn-berry-essen-b}
	\begin{split}
		\left\lvert \mathbb{P}\left[ T -\frac{1}{2}-\frac{N^{\prime}P^{\prime}}{2(P^{\prime}+\sigma_{\eta}^{2})} \geq \gamma-R(d)+N^{\prime}\cdot C(P^{\prime})\right] \right.
		\left. - Q\left(\frac{\gamma-R(d)+N^{\prime}\cdot C(P^{\prime})}{\sqrt{ V_{d}+N^{\prime}\cdot V(P^{\prime})}}\right) \right\rvert
        \leq \frac{B^{*}}{\sqrt{N^{\prime}+1}},
	\end{split}
\end{equation}}%
which implies that
\begin{equation}\label{jsk-ixiyi-awgn-berry-essen-fw}
	\begin{split}
		&\mathbb{P}\left[ T -\frac{1}{2}-\frac{N^{\prime}P^{\prime}}{2(P^{\prime}+\sigma_{\eta}^{2})} \geq \gamma-R(d)+N^{\prime}\cdot C(P^{\prime})\right]
		\geq Q\left(\frac{\gamma-R(d)+N^{\prime}\cdot C(P^{\prime})}{\sqrt{V_{d}+N^{\prime}\cdot V(P^{\prime})}}\right)-\frac{B^{*}}{\sqrt{N^{\prime}+1}}.
	\end{split}
\end{equation}
Finally,
substituting (\ref{T-U}) and (\ref{jsk-ixiyi-awgn-berry-essen-fw}) into (\ref{c-goal-a*}),
we have
\begin{equation}\label{uB}
    \begin{split}
	 Q\left(\frac{\gamma-R(d)+N^{\prime}C(P^{\prime})}{\sqrt{V_{d}+N^{\prime}\cdot V(P^{\prime})}}\right)-\frac{B^{*}}{\sqrt{N^{\prime}+1}}\leq \varepsilon^{\prime} +\exp\left(- \gamma\right).
\end{split}
\end{equation}
Re-written (\ref{uB}) as
{\small \begin{equation}\label{jsk-ixiyi-awgn-berry-essen--dl}
	\begin{split}
	N^{\prime}\cdot C(P^{\prime})-R(d) \geq \sqrt{V_{d}+N^{\prime}\cdot V(P^{\prime})}\cdot& Q^{-1}\left(\varepsilon^{\prime} +\exp(-\gamma)+\frac{B^{*}}{\sqrt{N^{\prime}+1}}\right)-\gamma,
	\end{split}
\end{equation}}%
which completes the proof.

\subsection{ Proof of step $(c^{*})$ in (\ref{c-goal})}\label{sec4-3-3-3}

Combining (\ref{jsk-ixiyi-awgn-berry-essen--dl})
with $N^{\prime}=N+1$, $P^{\prime} = \frac{P}{1 - \varepsilon - \frac{\zeta}{\sqrt{N}}}$ and $\varepsilon^{\prime} = 1 - \frac{\zeta}{\sqrt{N}}$,
we have
{\small \begin{equation}\label{jsk-ixiyi-awgn-berry-essen--dl-2}
	\begin{split}
	&N^{\prime}\cdot C\left(P^{\prime}\right)-R(d)
	\geq \sqrt{ V_{d}+N^{\prime}\cdot V\left( P^{\prime} \right)}\cdot Q^{-1}\left(\varepsilon^{\prime}  +\exp(-\gamma)+\frac{B^{*}}{\sqrt{N^{\prime}+1}}\right)-\gamma\\	
	&\overset{(d)}{\Rightarrow} N^{\prime}\cdot C\left(P^{\prime}\right)-R(d)
	\geq -\sqrt{ V_{d}+N^{\prime}\cdot V\left(P^{\prime} \right)}\cdot \Phi^{-1}\left(1 - \frac{1}{\sqrt{N^{\prime}}}\right) -\frac{1}{2}\ln(N^{\prime})\\		
	&\overset{(e)}{\Rightarrow} N^{\prime}\cdot C\left(P^{\prime}\right)-R(d)
	\geq -\sqrt{ V_{d}+N^{\prime}\cdot V\left(P^{\prime} \right)}\cdot \sqrt{\ln(N^{\prime})} -\frac{1}{2}\ln(N^{\prime})\\		
	&\overset{(f)}{\Rightarrow} N^{\prime}\cdot C\left(\frac{P}{1 - \varepsilon }\right)+O\left(\sqrt{N^{\prime}}\right)-R(d)
	\geq  -\sqrt{ V_{d}+N^{\prime}\cdot V\left(\frac{P}{1 - \varepsilon }\right)+O\left(\sqrt{N^{\prime}}\right)}\cdot \sqrt{\ln(N^{\prime})}-\frac{1}{2}\ln(N^{\prime})\\
	&\overset{(g)}{\Rightarrow} N^{\prime}\cdot C\left(\frac{P}{1 - \varepsilon }\right)+O\left(\sqrt{N^{\prime}}\right)-R(d)
	 \geq -\sqrt{V_{d}+ N^{\prime}\cdot V\left(\frac{P}{1-\varepsilon}\right)}\cdot \sqrt{\ln(N^{\prime})}
	  -\sqrt{ O(\sqrt{N^{\prime}})\cdot \ln(N^{\prime})}-\frac{1}{2}\ln(N^{\prime})\\
	&\overset{(h)}{\Rightarrow} R(d) \leq
     N^{\prime}\cdot C\left(\frac{P}{1 - \varepsilon }\right)
	+\sqrt{V_{d}+ N^{\prime}\cdot V\left(\frac{P}{1-\varepsilon}\right)}\cdot \sqrt{\ln(N^{\prime})}+O\left(\sqrt{N^{\prime}}\right), \\
\end{split}
\end{equation}}%
where $C(x)\triangleq \frac{1}{2}\ln(1+\frac{x}{\sigma_{\eta}^{2}})$, $V(x)\triangleq \frac{x(x+2\sigma_{\eta}^{2})}{2(x+\sigma_{\eta}^{2})^{2}}$,
\\
$(d)$ follows from choosing
{\small \begin{equation}\label{gz }
	\begin{split}
		\gamma=\frac{1}{2}\ln (N^{\prime}),\quad \zeta=\sqrt{N^{\prime}-1}\left(\frac{2}{\sqrt{N^{\prime}}}+\frac{B^{*}}{\sqrt{N^{\prime}+1}}\right),
\end{split}
\end{equation}}%
which results
\begin{equation}\label{Q-1}
	\begin{split}
	&Q^{-1}\left(1 - \frac{\zeta}{\sqrt{N^{\prime}-1}}  +\exp(-\gamma)+\frac{B^{*}}{\sqrt{N^{\prime}+1}}\right)
	=Q^{-1}\left(1 - \frac{1}{\sqrt{N^{\prime}}}\right)
	=-\Phi^{-1}\left(1 - \frac{1}{\sqrt{N^{\prime}}}\right).
\end{split}
\end{equation}
$(e)$ follows from let
\begin{equation}\label{phi}
	\begin{split}
	\Phi^{-1}\left(1 - \frac{1}{\sqrt{N^{\prime}}}\right)\triangleq t_{N^{\prime}},
\end{split}
\end{equation}
and re-written this as
\begin{equation}\label{TN-0}
    \begin{split}
        \frac{1}{\sqrt{N^{\prime}}} = 1 - \Phi(t_{N^{\prime}}).
    \end{split}
\end{equation}
Applying the Chernoff bound ($1 - \Phi(t) \leq e^{-t^{2}/2}$ for $t > 0$) in (\ref{TN-0}), we obtain for $N^{\prime}\geq 5$
\begin{equation}\label{TN-1}
    \begin{split}
        \frac{1}{\sqrt{N^{\prime}}} = 1 - \Phi(t_{N^{\prime}}) \leq e^{-t_{N^{\prime}}^{2}/2},
    \end{split}
\end{equation}
which implies
\begin{equation}\label{TN-3}
    \begin{split}
t_{N^{\prime}} \leq \sqrt{\ln (N^{\prime})}.
    \end{split}
\end{equation}
For the remaining cases $2 \leq N^{\prime} \leq 4$, it is not difficult to show that the inequality $t_{N^{\prime}} \leq \sqrt{\ln N^{\prime}}$ also holds.
Therefore, substituting (\ref{TN-3}) into (\ref{phi}), we have
\begin{equation}\label{TN-2}
    \begin{split}
\Phi^{-1}\left(1 - \frac{1}{\sqrt{N^{\prime}}}\right)  \leq \sqrt{\ln (N^{\prime})}.
    \end{split}
\end{equation}
$(f)$ follows from that
using the Taylor expansion $(1 - x)^{-1} = 1 + O(x)$ as $x\to 0$, we have
\begin{equation}\label{1-e-N}
    \begin{split}
        P^{\prime}&=\frac{P}{1 - \varepsilon -\frac{\zeta}{\sqrt{N}}}
		=\frac{P}{1 - \varepsilon -\frac{\zeta}{\sqrt{N^{\prime}-1}}}
        = \frac{P}{1 - \varepsilon} \cdot \left(1 - \frac{\zeta}{(1-\varepsilon)\cdot \sqrt{N^{\prime}-1}}\right)^{-1} \\
        &= \frac{P}{1 - \varepsilon} \cdot \left(1 + O\left(\frac{1}{\sqrt{N^{\prime}-1}}\right)\right)
        = \frac{P}{1 - \varepsilon} + O\left(\frac{1}{\sqrt{N^{\prime}-1}}\right)\\
		&= \frac{P}{1 - \varepsilon} + O\left(\frac{1}{\sqrt{N^{\prime}}}\right).
    \end{split}
\end{equation}%
Then, the Taylor expansion of $C(P^{\prime})$ around $\frac{P}{1-\varepsilon}$is given by
{\small \begin{equation}\label{CT}
    \begin{split}
        C\left(P^{\prime}\right)
        &= C\left(\frac{P}{1 - \varepsilon }\right)\cdot \left(P^{\prime}- \frac{P}{1 - \varepsilon }\right)^{0}
		 +C^{\prime}\left(\frac{P}{1 - \varepsilon }\right)\cdot \left(P^{\prime}- \frac{P}{1 - \varepsilon }\right)^{1}
		 +O\left(P^{\prime}- \frac{P}{1 - \varepsilon }\right)\\
		&=C\left(\frac{P}{1 - \varepsilon }\right)+C^{\prime}\left(\frac{P}{1 - \varepsilon }\right)\cdot \frac{1}{\sqrt{N^{\prime}-1}} +O\left(\frac{1}{\sqrt{N^{\prime}-1}}\right)\\
   		&=C\left(\frac{P}{1 - \varepsilon }\right)+ O\left(\frac{1}{\sqrt{N^{\prime}-1}}\right)\\
		&=C\left(\frac{P}{1 - \varepsilon }\right)+ O\left(\frac{1}{\sqrt{N^{\prime}}}\right),
	\end{split}
\end{equation}}%
where $C^{\prime}(x) \triangleq \frac{dC(x)}{dx}$. Similarly, for $V(P^{\prime})$ we have
\begin{equation}\label{VT}
    \begin{split}
        V\left(P^{\prime}\right)
        &= V\left(\frac{P}{1 - \varepsilon }\right) + O\left(\frac{1}{\sqrt{N^{\prime}}}\right);
    \end{split}
\end{equation}
$(g)$ follows from that applying the inequality $\sqrt{x+y} \leq \sqrt{x} + \sqrt{y}$ for $x, y \geq 0$, we obtain
\begin{equation}\label{sqrt-bound}
    \begin{split}
        &\sqrt{V_{d} + N^{\prime}\cdot V\left(\frac{P}{1 - \varepsilon }\right) + O\left(\sqrt{N^{\prime}}\right)} 
		\leq \sqrt{V_{d} + N^{\prime}\cdot V\left(\frac{P}{1 - \varepsilon }\right)} +  \sqrt{O\left(\sqrt{N^{\prime}}\right)};
    \end{split}
\end{equation}
$(h)$ follows from
\begin{equation}\label{sqrt-N}
    \begin{split}
	 O\left(\sqrt{N^{\prime}}\right)
	+\sqrt{ O(\sqrt{N^{\prime}})\cdot \ln(N^{\prime})}+\frac{1}{2}\ln(N^{\prime})
	=O\left(\sqrt{N^{\prime}}\right).
    \end{split}
\end{equation}

\section{Proof of lemma \ref{equalp} in Appendix \ref{App2-1} }\label{App-equalp}
\renewcommand{\theequation}{D\arabic{equation}}

For any $r \in \mathbb{R}$, we have
\begin{equation}\label{eq:decomp}
\begin{split}
			\mathbb{P}\left[X^{*}\leq r\right]&
			=\mathbb{P}\left[(X^{*}\leq r)\bigcap (X^{*}=Y^{*})\right]+\mathbb{P}\left[(X^{*}\leq r)\bigcap (X^{*}\neq Y^{*})\right]\\
			&=\mathbb{P}\left[(Y^{*}\leq r)\bigcap (X^{*}=Y^{*})\right]+\mathbb{P}\left[(X^{*}\leq r)\bigcap (X^{*}\neq Y^{*})\right]\\
			&\overset{(a)}{=}\mathbb{P}\left[(Y^{*}\leq r)\bigcap (X^{*}=Y^{*})\right]+0\\
			&\overset{(b)}{=}\mathbb{P}\left[(Y^{*}\leq r)\bigcap (X^{*}=Y^{*})\right]+\mathbb{P}\left[(Y^{*}\leq r)\bigcap (X^{*}\neq Y^{*})\right]\\
			&=\mathbb{P}\left[(Y^{*}\leq r)\right],\\
\end{split}
\end{equation}
where \\
$(a)$ follows from
\begin{equation}\label{X80}
\begin{split}
	0 \leq \mathbb{P}\left[(X^{*}\leq r)\bigcap (X^{*}\neq Y^{*})\right] \leq\mathbb{P}\left[X^{*}\neq Y^{*}\right] =1-\mathbb{P}\left[X^{*}= Y^{*}\right]=1-1= 0,
\end{split}
\end{equation}
hence $\mathbb{P}\left[(X^{*}\leq r)\bigcap (X^{*}\neq Y^{*})\right]=0$;\\
$(b)$ follows from similar to $(a)$, we have $\mathbb{P}\left[(Y^{*}\leq r)\bigcap (X^{*}\neq Y^{*})\right]=0$.

\section{Proof of lemma \ref{MGF-equality} in Appendix \ref{App2-2} }\label{App-MGFE}
\renewcommand{\theequation}{E\arabic{equation}}
The following shows that the moment generating function (MGF) of $U$ equals that of $T$, i.e.,
\begin{equation}
    \mathbb{E}_{S, X^{N^{\prime}}, Y^{N^{\prime}}} \left[ \exp(tU) \right] = \mathbb{E}_{G,K^{N^{\prime}}}\left[ \exp(tT) \right],
\end{equation}
where $X^{N^{\prime}}\triangleq(X_{1},\ldots,X_{N^{\prime}})$,
$Y^{N^{\prime}}\triangleq(Y_{1},\ldots,Y_{N^{\prime}})$, $K^{N^{\prime}}\triangleq(K_{1},\ldots,K_{N^{\prime}})$,
and $G, K_i (i \in\{1,\ldots,N^{\prime}\})$ are independent normalized Gaussian random variables.
Furthermore, $U$ and $T$ are defined as
\begin{equation}\label{U-fl}
    U \triangleq \frac{S^{2}}{2\sigma_{s}^{2}} - \sum_{i=1}^{N^{\prime}}\Psi(X_{i}, Y_{i}),
\end{equation}
\begin{equation}\label{lambdaxy-fl}
    \Psi(X_{i}, Y_{i}) = \frac{1}{2(P^{\prime}+\sigma_{\eta}^{2})}\left( -\frac{P^{\prime}}{\sigma_{\eta}^{2}}(Y_{i}-X_{i})^{2} + 2X_{i}(Y_{i}-X_{i}) \right),
\end{equation}
and
\begin{equation}\label{T-fl}
    T \triangleq \frac{1}{2} G^{2} + \sum_{i=1}^{N^{\prime}}\left( \frac{P^{\prime}}{2(P^{\prime}+\sigma_{\eta}^{2})}K_{i}^{2} - \frac{\sqrt{P^{\prime}\sigma_{\eta}^{2}}}{P^{\prime}+\sigma_{\eta}^{2}}K_{i} \right).
\end{equation}
Furthermore, under the equal power constraint
\begin{equation}\label{pw-fl}
		\mathbb{P}\left\{\sum\limits_{i=1}^{N^{\prime}} X_{i}^{2}=N^{\prime}P^{\prime}\right\}=1,
\end{equation}
similar to in \cite[equation (11) in P.3]{FBL-feedback-2},
we assume without loss of generality that the joint probability density
\begin{equation}\label{f1}
		f(s,x^{N^{\prime}}, y^{N^{\prime}})=f(s,x^{N^{\prime}}, y^{N^{\prime}}) \cdot \indicator{ \sum\limits_{i=1}^{N^{\prime}} x_{i}^{2}=N^{\prime}P^{\prime}}
\end{equation}
for all $s$, $x^{N^{\prime}}$ and $y^{N^{\prime}}$.

The proof proceeds by computing the MGFs of $U$ and $T$ separately and showing they yield identical expressions.

\subsection{The MGF of $U$}
Following (\ref{U-fl}), the MGF of $U$ is given by
{\small \begin{align}\label{Etlambdaxy-intsxy-fl}
		&\mathbb{E}_{{S, X^{N^{\prime}}, Y^{N^{\prime}}}} \left[ \exp\left(tU\right) \right] \notag \\
		&=\mathbb{E}_{{S, X^{N^{\prime}}, Y^{N^{\prime}}}} \left[ \exp\left(t\left(\frac{S^{2}}{2\sigma_{s}^{2}}-\sum_{i=1}^{N^{\prime}}\Psi(X_{i},Y_{i})\right)\right) \right] \notag \\
		&=\int_{s, x^{N^{\prime}}, y^{N^{\prime}}} f(s,x^{N^{\prime}}, y^{N^{\prime}}) \cdot \exp\left(\frac{ts^{2}}{2\sigma_{s}^{2}}\right)\cdot \exp\left(-t\sum_{i=1}^{N^{\prime}}\Psi(x_{i},y_{i})\right) dy^{N^{\prime}} dx^{N^{\prime}} ds  \notag \\
		&\overset{(a^{\prime})}{=}\int_{s, x^{N^{\prime}}, y^{N^{\prime}}} f(s,x^{N^{\prime}}, y^{N^{\prime}}) \cdot\indicator{ \sum\limits_{i=1}^{N^{\prime}} x_{i}^{2}=N^{\prime}P^{\prime}} \cdot \exp\left(\frac{ts^{2}}{2\sigma_{s}^{2}}\right)
		\cdot \exp\left(-t\sum_{i=1}^{N^{\prime}}\Psi(x_{i},y_{i})\right) dy^{N^{\prime}} dx^{N^{\prime}} ds \notag \\
		&\overset{(b^{\prime})}{=}\int_{s, x^{N^{\prime}}, y^{N^{\prime}}} f(s,x^{N^{\prime}}, y^{N^{\prime}}) \cdot\indicator{ \sum\limits_{i=1}^{N^{\prime}} x_{i}^{2}=N^{\prime}P^{\prime}} \cdot \exp\left(\frac{ts^{2}}{2\sigma_{s}^{2}}\right) \notag \\
		&\qquad \cdot \exp\left(-t\sum_{i=1}^{N^{\prime}}\Psi(x_{i},y_{i})+ \frac{\sigma_{\eta}^{2}t^2}{2(P^{\prime}+\sigma_{\eta}^{2})(P^{\prime}+\sigma_{\eta}^{2}-P^{\prime}t)} \left( N^{\prime}P^{\prime} - \sum\limits_{i=1}^{N^{\prime}} x_{i}^{2} \right)\right) dy^{N^{\prime}} dx^{N^{\prime}} ds \notag \\
		&\overset{(c^{\prime})}{=}\int_{s, x^{N^{\prime}}, y^{N^{\prime}}} f(s,x^{N^{\prime}}, y^{N^{\prime}}) \cdot \exp\left(\frac{ts^{2}}{2\sigma_{s}^{2}}\right)
		\cdot \exp\left(-t\sum_{i=1}^{N^{\prime}}\Psi(x_{i},y_{i})+ \frac{\sigma_{\eta}^{2}t^2}{2(P^{\prime}+\sigma_{\eta}^{2})(P^{\prime}+\sigma_{\eta}^{2}-P^{\prime}t)} \left( N^{\prime}P^{\prime} - \sum\limits_{i=1}^{N^{\prime}} x_{i}^{2} \right)\right) dy^{N^{\prime}} dx^{N^{\prime}} ds \notag \\	
		&\overset{(d^{\prime})}{=}\int_{s}f(s)\cdot \exp\left(\frac{ts^{2}}{2\sigma_{s}^{2}}\right) \int_{x^{N^{\prime}}, y^{N^{\prime}}} f(x^{N^{\prime}}, y^{N^{\prime}} |s)
		 \cdot \exp\left(-t\sum_{i=1}^{N^{\prime}}\Psi(x_{i},y_{i})+ \frac{\sigma_{\eta}^{2}t^2}{2(P^{\prime}+\sigma_{\eta}^{2})(P^{\prime}+\sigma_{\eta}^{2}-P^{\prime}t)} \left( N^{\prime}P^{\prime} - \sum\limits_{i=1}^{N^{\prime}} x_{i}^{2} \right)\right) dy^{N^{\prime}} dx^{N^{\prime}} ds \notag \\
\end{align}}%
where $x^{N^{\prime}}\triangleq(x_{1},\ldots,x_{N^{\prime}})$, $dx^{N^{\prime}}\triangleq dx_{1}\cdot dx_{2} \ldots \cdot dx_{N^{\prime}}$,
$y^{N^{\prime}}\triangleq(y_{1},\ldots,y_{N^{\prime}})$, $dy^{N^{\prime}}\triangleq dy_{1}\cdot dy_{2}\ldots \cdot dy_{N^{\prime}}$,
\\
$(a^{\prime})$ and $(c^{\prime})$ follows from (\ref{pw-fl});\\
$(b^{\prime})$ follows from $\indicator{A=B}=\indicator{A=B}\cdot \exp(B-A)$;\\
$(d^{\prime})$ follows from $f(s,x^{N^{\prime}}, y^{N^{\prime}})=f(s)\cdot f(x^{N^{\prime}}, y^{N^{\prime}}|s)$;


Then, we evaluate the the inner integral over $(x^{N^{\prime}}, y^{N^{\prime}})$ through an iterative reduction \cite{FBL-feedback}.
Consider the following chain of equalities for each $s \in S$ and each $\ell \in \{0,1,\ldots, N^{\prime}-2\}$,
{\small \begin{equation}\label{Etlambdaxy-intxyl}
	\begin{split}
	&\int_{x^{N^{\prime}-\ell}, y^{N^{\prime}-\ell}} f(x^{N^{\prime}-\ell}, y^{N^{\prime}-\ell} |s)
	\cdot \exp\left({-t \left( \sum\limits_{i=1}^{N^{\prime}-\ell} \Psi(x_i, y_i) \right) + \frac{\sigma_{\eta}^{2}t^2}{2(P^{\prime}+\sigma_{\eta}^{2})(P^{\prime}+\sigma_{\eta}^{2}-P^{\prime}t)}  \left( N^{\prime}P^{\prime} - \sum\limits_{i=1}^{N^{\prime}-\ell} x_{i}^{2} \right)}\right) dy^{N^{\prime}-\ell} dx^{N^{\prime}-\ell}\\
	=&\int_{x^{N^{\prime}-\ell-1}, y^{N^{\prime}-\ell-1}} f(x^{N^{\prime}-\ell-1}, y^{N^{\prime}-\ell-1} |s)
	\cdot \exp\left({-t \left( \sum\limits_{i=1}^{N^{\prime}-\ell-1} \Psi(x_i, y_i) \right) + \frac{\sigma_{\eta}^{2}t^2}{2(P^{\prime}+\sigma_{\eta}^{2})(P^{\prime}+\sigma_{\eta}^{2}-P^{\prime}t)}  \left( N^{\prime}P^{\prime} - \sum\limits_{i=1}^{N^{\prime}-\ell-1} x_{i}^{2} \right)} \right)\\
	&\cdot \int_{x_{N^{\prime}-\ell}, y_{N^{\prime}-\ell}} f(x_{N^{\prime}-\ell}, y_{N^{\prime}-\ell}|s,x^{N^{\prime}-\ell-1}, y^{N^{\prime}-\ell-1})
	\cdot \exp\left(-t\Psi(x_{N^{\prime}-\ell}, y_{N^{\prime}-\ell})-\frac{\sigma_{\eta}^{2}t^2}{2(P^{\prime}+\sigma_{\eta}^{2})(P^{\prime}+\sigma_{\eta}^{2}-P^{\prime}t)}  x_{N^{\prime}-\ell}^{2}\right)\\
	&\,\, dy_{N^{\prime}-\ell} dx_{N^{\prime}-\ell}dy^{N^{\prime}-\ell-1} dx^{N^{\prime}-\ell-1}\\
	=&\int_{x^{N^{\prime}-\ell-1}, y^{N^{\prime}-\ell-1}} f(x^{N^{\prime}-\ell-1}, y^{N^{\prime}-\ell-1} |s)
	\cdot \exp\left({-t \left( \sum\limits_{i=1}^{N^{\prime}-\ell-1} \Psi(x_i, y_i) \right) + \frac{\sigma_{\eta}^{2}t^2}{2(P^{\prime}+\sigma_{\eta}^{2})(P^{\prime}+\sigma_{\eta}^{2}-P^{\prime}t)}  \left( N^{\prime}P^{\prime} - \sum\limits_{i=1}^{N^{\prime}-\ell-1} x_{i}^{2} \right)} \right)\\
	&\cdot \int_{x_{N^{\prime}-\ell}} f(x_{N^{\prime}-\ell}|s, x^{N^{\prime}-\ell-1}, y^{N^{\prime}-\ell-1})
	\cdot \exp\left(-\frac{\sigma_{\eta}^{2}t^2}{2(P^{\prime}+\sigma_{\eta}^{2})(P^{\prime}+\sigma_{\eta}^{2}-P^{\prime}t)}  x_{N^{\prime}-\ell}^{2}\right)\\
	&\cdot \int_{y_{N^{\prime}-\ell}}f(y_{N^{\prime}-\ell}|s, x^{N^{\prime}-\ell}, y^{N^{\prime}-\ell-1})\exp\left(-t\Psi(x_{N^{\prime}-\ell}, y_{N^{\prime}-\ell})\right)dy_{N^{\prime}-\ell} dx_{N^{\prime}-\ell}dy^{N^{\prime}-\ell-1} dx^{N^{\prime}-\ell-1}.\\
\end{split}
\end{equation}}%
Then compute the innermost integral $y_{N^{\prime}-\ell}$,
{\small \begin{align}\label{Etlambdaxy-inty}
	&\int_{y_{N^{\prime}-\ell}}f(y_{N^{\prime}-\ell}|s, x^{N^{\prime}-\ell}, y^{N^{\prime}-\ell-1})
	\cdot \exp\left(-t\Psi (x_{N^{\prime}-\ell}, y_{N^{\prime}-\ell})\right)dy_{N^{\prime}-\ell} \notag \\
	&\overset{(a)}{=}\int_{y_{N^{\prime}-\ell}}f(y_{N^{\prime}-\ell}|x_{N^{\prime}-\ell})
	\cdot \exp\left(-t\Psi (x_{N^{\prime}-\ell}, y_{N^{\prime}-\ell})\right)dy_{N^{\prime}-\ell}\notag \\
	&\overset{(b)}{=}\int_{y_{N^{\prime}-\ell}}\frac{1}{\sqrt{2\pi \sigma_{\eta}^{2}}}
	\cdot \exp\left(-\frac{(y_{N^{\prime}-\ell}-x_{N^{\prime}-\ell})^{2}}{2\sigma_{\eta}^{2}}\right)
	\cdot\exp\left(\frac{-t}{2(P^{\prime}+\sigma_{\eta}^{2})}\left(-\frac{P^{\prime}}{\sigma_{\eta}^{2}}(y_{N^{\prime}-\ell}-x_{N^{\prime}-\ell})^{2}+2x_{N^{\prime}-\ell}(y_{N^{\prime}-\ell}-x_{N^{\prime}-\ell})\right)\right)dy_{N^{\prime}-\ell} \notag \\
	&\overset{(c)}{=}\int_{v}\frac{1}{\sqrt{2\pi \sigma_{\eta}^{2}}}
	\cdot\exp\left(-\frac{v^{2}}{2\sigma_{\eta}^{2}}\right)
	\cdot\exp\left(\frac{-t}{2(P^{\prime}+\sigma_{\eta}^{2})}\left(-\frac{P^{\prime}}{\sigma_{\eta}^{2}}v^{2}+2v x_{N^{\prime}-\ell}\right)\right)dv  \notag \\
	&=\int_{v}\frac{1}{\sqrt{2\pi \sigma_{\eta}^{2}}}\exp\left(\left(\frac{P^{\prime}t-P^{\prime}-\sigma_{\eta}^{2}}{2\sigma_{\eta}^{2}(P^{\prime}+\sigma_{\eta}^{2})}\right)v^{2}-\frac{tx_{N^{\prime}-\ell}}{P^{\prime}+\sigma_{\eta}^{2}}v\right)dv \notag \\
	&=\int_{v}\frac{1}{\sqrt{2\pi \sigma_{\eta}^{2}}}\exp\left(\left(\frac{P^{\prime}t-P^{\prime}-\sigma_{\eta}^{2}}{2\sigma_{\eta}^{2}(P^{\prime}+\sigma_{\eta}^{2})}\right)\left(v^{2}-\frac{2t\sigma_{\eta}^{2}x_{N^{\prime}-\ell}}{P^{\prime}t-P^{\prime}-\sigma_{\eta}^{2}}v+\left(\frac{t\sigma_{\eta}^{2}x_{N^{\prime}-\ell}}{P^{\prime}t-P^{\prime}-\sigma_{\eta}^{2}}\right)^{2}-\left(\frac{t\sigma_{\eta}^{2}x_{N^{\prime}-\ell}}{P^{\prime}t-P^{\prime}-\sigma_{\eta}^{2}}\right)^{2}\right)\right)dv  \notag \\
	&=\int_{v}\frac{1}{\sqrt{2\pi \sigma_{\eta}^{2}}}\exp\left(\left(\frac{P^{\prime}t-P^{\prime}-\sigma_{\eta}^{2}}{2\sigma_{\eta}^{2}(P^{\prime}+\sigma_{\eta}^{2})}\right)\left(v-\frac{t\sigma_{\eta}^{2}x_{N^{\prime}-\ell}}{P^{\prime}t-P^{\prime}-\sigma_{\eta}^{2}}\right)^{2}\right)\cdot \exp\left( \frac{t^{2}\sigma_{\eta}^{2}x_{N^{\prime}-\ell}^{2}}{2(P^{\prime}+\sigma_{\eta}^{2})(P^{\prime}+\sigma_{\eta}^{2}-P^{\prime}t)}\right)dv  \notag\\
	&=\frac{1}{\sqrt{2\pi \sigma_{\eta}^{2}}}\cdot \sqrt{\frac{2\pi \sigma_{\eta}^{2}(P^{\prime}+\sigma_{\eta}^{2})}{P^{\prime}+\sigma_{\eta}^{2}-P^{\prime}t}}\cdot \exp\left( \frac{t^{2}\sigma_{\eta}^{2}x_{N^{\prime}-\ell}^{2}}{2(P^{\prime}+\sigma_{\eta}^{2})(P^{\prime}+\sigma_{\eta}^{2}-P^{\prime}t)}\right)  \notag\\
	&\,\,\cdot \int_{v}\sqrt{\frac{P^{\prime}+\sigma_{\eta}^{2}-P^{\prime}t}{2\pi \sigma_{\eta}^{2}(P^{\prime}+\sigma_{\eta}^{2})}}\exp\left(\left(\frac{P^{\prime}t-P^{\prime}-\sigma_{\eta}^{2}}{2\sigma_{\eta}^{2}(P^{\prime}+\sigma_{\eta}^{2})}\right)\left(v-\frac{t\sigma_{\eta}^{2}x_{N^{\prime}-\ell}}{P^{\prime}t-P^{\prime}-\sigma_{\eta}^{2}}\right)^{2}\right)dv  \notag\\
	&= \sqrt{\frac{P^{\prime}+\sigma_{\eta}^{2}}{P^{\prime}+\sigma_{\eta}^{2}-P^{\prime}t}}\cdot \exp\left( \frac{t^{2}\sigma_{\eta}^{2}x_{N^{\prime}-\ell}^{2}}{2(P^{\prime}+\sigma_{\eta}^{2})(P^{\prime}+\sigma_{\eta}^{2}-P^{\prime}t)}\right),
\end{align}}%
where\\
$(a)$ follows from the channel is memoryless;\\
$(b)$ follows from (\ref{lambdaxy-fl});\\
$(c)$ follows fm $v=y_{N^{\prime}-\ell}-x_{N^{\prime}-\ell}$ and $dv=dy_{N^{\prime}-\ell}$;\\

Furthermore, substituting (\ref{Etlambdaxy-inty}) into (\ref{Etlambdaxy-intxyl}) yields
{\small \begin{equation}\label{Etlambdaxy-intxyl-2-fl}
	\begin{split}
	&\int_{x^{N^{\prime}-\ell}, y^{N^{\prime}-\ell}} f(x^{N^{\prime}-\ell}, y^{N^{\prime}-\ell} |s)
	\cdot  \exp\left({-t \left( \sum\limits_{i=1}^{N^{\prime}-\ell} \Psi(x_i, y_i) \right) + \frac{\sigma_{\eta}^{2}t^2}{2(P^{\prime}+\sigma_{\eta}^{2})(P^{\prime}+\sigma_{\eta}^{2}-P^{\prime}t)}  \left( N^{\prime}P^{\prime} - \sum\limits_{i=1}^{N^{\prime}-\ell} x_{i}^{2} \right)}\right) dy^{N^{\prime}-\ell} dx^{N^{\prime}-\ell}\\
	=&\sqrt{\frac{P^{\prime}+\sigma_{\eta}^{2}}{P^{\prime}+\sigma_{\eta}^{2}-P^{\prime}t}}\cdot\int_{x^{N^{\prime}-\ell-1}, y^{N^{\prime}-\ell-1}} f(x^{N^{\prime}-\ell-1}, y^{N^{\prime}-\ell-1} |s)\\
	& \,\cdot \exp\left({-t \left( \sum\limits_{i=1}^{N^{\prime}-\ell-1} \Psi(x_i, y_i) \right) + \frac{\sigma_{\eta}^{2}t^2}{2(P^{\prime}+\sigma_{\eta}^{2})(P^{\prime}+\sigma_{\eta}^{2}-P^{\prime}t)}  \left( N^{\prime}P^{\prime} - \sum\limits_{i=1}^{N^{\prime}-\ell-1} x_{i}^{2} \right)} \right)
	dy^{N^{\prime}-\ell-1} dx^{N^{\prime}-\ell-1},
\end{split}
\end{equation}}%
which implies that applying the recurrence $N^{\prime}-1$ times reduces the $N^{\prime}$-fold integral to a single integral over $(x_{1},y_{1})$,
{\small \begin{align}\label{Etlambdaxy-intxyN}
	&\int_{x^{N^{\prime}}, y^{N^{\prime}}} f(x^{N^{\prime}}, y^{N^{\prime}} |s)
	\cdot \exp\left({-t \sum\limits_{i=1}^{N^{\prime}} \Psi(x_i, y_i) + \frac{\sigma_{\eta}^{2}t^2}{2(P^{\prime}+\sigma_{\eta}^{2})(P^{\prime}+\sigma_{\eta}^{2}-P^{\prime}t)} \left( N^{\prime}P^{\prime} - \sum\limits_{i=1}^{N^{\prime}} x_{i}^{2} \right)}\right) dy^{N^{\prime}} dx^{N^{\prime}}  \notag\\
	&=\left(\frac{P^{\prime}+\sigma_{\eta}^{2}}{P^{\prime}+\sigma_{\eta}^{2}-P^{\prime}t}\right)^{\frac{N^{\prime}-1}{2}}\int_{x_{1}, y_{1}} f(x_{1}, y_{1} |s) \exp\left({-t \left( \Psi (x_1, y_1) \right) + \frac{\sigma_{\eta}^{2}t^2}{2(P^{\prime}+\sigma_{\eta}^{2})(P^{\prime}+\sigma_{\eta}^{2}-P^{\prime}t)}  \left( N^{\prime}P^{\prime} -  x_{1}^{2} \right)}\right) dy_{1} dx_{1}  \notag\\
	&\overset{(d)}{=}\left(\frac{P^{\prime}+\sigma_{\eta}^{2}}{P^{\prime}+\sigma_{\eta}^{2}-P^{\prime}t}\right)^{\frac{N^{\prime}-1}{2}}\int_{x_{1}} f(x_{1}|s) \exp\left({ \frac{\sigma_{\eta}^{2}t^2}{2(P^{\prime}+\sigma_{\eta}^{2})(P^{\prime}+\sigma_{\eta}^{2}-P^{\prime}t)}  \left( N^{\prime}P^{\prime} -  x_{1}^{2} \right)}\right)  \notag\\
	&\qquad \cdot  \int_{y_{1}} f(y_{1}|x_{1}) \exp\left(\frac{-t}{2(P^{\prime}+\sigma_{\eta}^{2})} \left(-\frac{P^{\prime}}{\sigma_{\eta}^{2}}(y_{1}-x_{1})^{2}+2x_{1}(y_{1}-x_{1})\right)\right)dy_{1} dx_{1}  \notag\\
	&\overset{(e)}{=}\left(\frac{P^{\prime}+\sigma_{\eta}^{2}}{P^{\prime}+\sigma_{\eta}^{2}-P^{\prime}t}\right)^{\frac{N^{\prime}-1}{2}}\cdot \left(\frac{P^{\prime}+\sigma_{\eta}^{2}}{P^{\prime}+\sigma_{\eta}^{2}-P^{\prime}t}\right)^{\frac{1}{2}}  \notag\\
	&\qquad \cdot \int_{x_{1}}f(x_{1}|s) \exp\left({ \frac{\sigma_{\eta}^{2}t^2}{2(P^{\prime}+\sigma_{\eta}^{2})(P^{\prime}+\sigma_{\eta}^{2}-P^{\prime}t)}  \left( N^{\prime}P^{\prime} -  x_{1}^{2} \right)}\right) \exp\left(\frac{\sigma_{\eta}^{2}t^2 x_{1}^{2}}{2(P^{\prime}+\sigma_{\eta}^{2})(P^{\prime}+\sigma_{\eta}^{2}-P^{\prime}t)}\right) dx_{1}   \notag\\
	&=\left(\frac{P^{\prime}+\sigma_{\eta}^{2}}{P^{\prime}+\sigma_{\eta}^{2}-P^{\prime}t}\right)^{\frac{N^{\prime}}{2}}\cdot\exp\left(\frac{\sigma_{\eta}^{2}t^2N^{\prime}P^{\prime}}{2(P^{\prime}+\sigma_{\eta}^{2})(P^{\prime}+\sigma_{\eta}^{2}-P^{\prime}t)}  \right)\cdot \int_{x_{1}} f(x_{1}|s) dx_{1}  \notag\\
	&=\left(\frac{P^{\prime}+\sigma_{\eta}^{2}}{P^{\prime}+\sigma_{\eta}^{2}-P^{\prime}t}\right)^{\frac{N^{\prime}}{2}}\cdot\exp\left(\frac{\sigma_{\eta}^{2}t^2N^{\prime}P^{\prime}}{2(P^{\prime}+\sigma_{\eta}^{2})(P^{\prime}+\sigma_{\eta}^{2}-P^{\prime}t)}  \right),
\end{align}}%
where $(d)$ follows from (\ref{lambdaxy-fl}) and $(e)$ is similar to (\ref{Etlambdaxy-inty}).

Now we return to the outer integration over $s$.
Since $S\sim\mathcal{N}(0,\sigma_{s}^{2})$,
{\small \begin{equation}\label{Etlambdaxy-intS}
	\begin{split}
		&\int_{s}f(s)\cdot \exp\left(\frac{ts^{2}}{2\sigma_{s}^{2}}\right)ds
		=\int_{s}\frac{1}{\sqrt{2\pi \sigma_{s}^{2}}}\exp\left(-\frac{s^{2}}{2\sigma_{s}^{2}}\right)\cdot \exp\left(\frac{ts^{2}}{2\sigma_{s}^{2}}\right)ds
		=\frac{1}{\sqrt{2\pi \sigma_{s}^{2}}}\cdot \sqrt{\frac{2\pi \sigma_{s}^{2}}{1-t}}\cdot \int_{s} \sqrt{\frac{1-t}{2\pi \sigma_{s}^{2}}}\exp\left(\frac{t-1}{2\sigma_{s}^{2}}s\right)ds
		=\left(\frac{1}{1-t}\right)^{\frac{1}{2}}.
	\end{split}
\end{equation}}%
Finally, substituting (\ref{Etlambdaxy-intxyN}) and (\ref{Etlambdaxy-intS}) into (\ref{Etlambdaxy-intsxy-fl}),
the MGF of $U$ is re-written by
{\small \begin{equation}\label{Etlambdaxy-intsxy-f-fl}
	\begin{split}
		&\mathbb{E}_{{S, X^{N^{\prime}}, Y^{N^{\prime}}}} \left[ \exp\left(tU\right) \right]
		=\left(\frac{P^{\prime}+\sigma_{\eta}^{2}}{P^{\prime}+\sigma_{\eta}^{2}-P^{\prime}t}\right)^{\frac{N^{\prime}}{2}}\cdot\exp\left(\frac{\sigma_{\eta}^{2}t^2N^{\prime}P^{\prime}}{2(P^{\prime}+\sigma_{\eta}^{2})(P^{\prime}+\sigma_{\eta}^{2}-P^{\prime}t)}  \right) \cdot \left(\frac{1}{1-t}\right)^{\frac{1}{2}}.
	\end{split}
\end{equation}}

\subsection{The MGF of $T$}
Now we compute $\mathbb{E}_{G,K^{N^{\prime}}} [ \exp(tT) ]$.
Since $G, K_i (i \in\{1,\ldots,N^{\prime}\})$ are independent of each other,
{\small \begin{equation}\label{EGKN-fl}
	\begin{split}
		\mathbb{E}_{G,K^{N^{\prime}}} \left[ \exp\left(tT \right) \right]
		&=\mathbb{E}_{G,K^{N^{\prime}}} \left[ \exp\left(t\left(\frac{1}{2} G^{2}-\sum_{i=1}^{N^{\prime}}(-\frac{P^{\prime}}{2(P^{\prime}+\sigma_{\eta}^{2})}K_{i}^{2}+\frac{\sqrt{P^{\prime}\sigma_{\eta}^{2}}}{P^{\prime}+\sigma_{\eta}^{2}}K_{i})\right) \right) \right]\\
		&=\mathbb{E}_{G} \left[ \exp\left(\frac{t}{2} G^{2}\right) \right]
		\cdot \mathbb{E}_{K^{N^{\prime}}} \left[ \exp\left(\sum_{i=1}^{N^{\prime}}(\frac{P^{\prime}t}{2(P^{\prime}+\sigma_{\eta}^{2})}K_{i}^{2}-\frac{t\sqrt{P^{\prime}\sigma_{\eta}^{2}}}{P^{\prime}+\sigma_{\eta}^{2}}K_{i})\right) \right]\\
		&=\mathbb{E}_{G} \left[ \exp\left(\frac{t}{2} G^{2}\right) \right]
		\cdot\mathbb{E}_{K^{N^{\prime}}} \left[ \prod_{i=1}^{N^{\prime}}\exp\left(\frac{P^{\prime}t}{2(P^{\prime}+\sigma_{\eta}^{2})}K_{i}^{2}-\frac{t\sqrt{P^{\prime}\sigma_{\eta}^{2}}}{P^{\prime}+\sigma_{\eta}^{2}}K_{i} \right)\right]\\		
		&=\mathbb{E}_{G} \left[ \exp\left(\frac{t}{2} G^{2}\right) \right]
		\cdot \prod_{i=1}^{N^{\prime}}\mathbb{E}_{K_{i}} \left[ \exp\left(\frac{P^{\prime}t}{2(P^{\prime}+\sigma_{\eta}^{2})}K_{i}^{2}-\frac{t\sqrt{P^{\prime}\sigma_{\eta}^{2}}}{P^{\prime}+\sigma_{\eta}^{2}}K_{i} \right)\right].\\		
	\end{split}
\end{equation}}%
For $G \sim \mathcal{N}(0,1)$,
{\small \begin{equation}\label{EG-fl}
	\begin{split}
		&\mathbb{E}_{G} \left[ \exp\left(\frac{t}{2} G^{2}\right) \right]
		=\frac{1}{\sqrt{2\pi}} \int_{g}\exp\left(-\frac{g^{2}}{2}\right)\cdot \exp\left(\frac{t}{2} g^{2}\right) dg
	    =\frac{1}{\sqrt{2\pi}} \cdot \sqrt{\frac{2\pi}{1-t}}\int_{g}\sqrt{\frac{1-t}{2\pi}}\cdot\exp\left(\frac{t-1}{2}g^{2}\right)dg
		=\left(\frac{1}{1-t}\right)^{\frac{1}{2}}.
	\end{split}
\end{equation}}%
For  $K_i \sim \mathcal{N}(0,1)$,
{\small \begin{equation}\label{EKi}
	\begin{split}
	&\mathbb{E}_{K_{i}} \left[ \exp\left(\frac{P^{\prime}t}{2(P^{\prime}+\sigma_{\eta}^{2})}K_{i}^{2}-\frac{t\sqrt{P^{\prime}\sigma_{\eta}^{2}}}{P^{\prime}+\sigma_{\eta}^{2}}K_{i} \right)\right]\\
	&=\frac{1}{\sqrt{2\pi}} \int_{k_{i}}\exp\left(-\frac{k_{i}^{2}}{2}\right)\cdot \exp\left(\frac{P^{\prime}t}{2(P^{\prime}+\sigma_{\eta}^{2})}k_{i}^{2}-\frac{t\sqrt{P^{\prime}\sigma_{\eta}^{2}}}{P^{\prime}+\sigma_{\eta}^{2}}k_{i} \right) dk_{i}\\
	&=\frac{1}{\sqrt{2\pi}} \int_{k_{i}}\exp\left(\frac{P^{\prime}t-P^{\prime}-\sigma_{\eta}^{2}}{2(P^{\prime}+\sigma_{\eta}^{2})}k_{i}^{2}-\frac{t\sqrt{P^{\prime}\sigma_{\eta}^{2}}}{P^{\prime}+\sigma_{\eta}^{2}}k_{i} \right) dk_{i}\\
	&=\frac{1}{\sqrt{2\pi}} \int_{k_{i}}\exp\left(\left(\frac{P^{\prime}t-P^{\prime}-\sigma_{\eta}^{2}}{2(P^{\prime}+\sigma_{\eta}^{2})}\right)\left(k_{i}^{2}-\frac{2t\sqrt{P^{\prime}\sigma_{\eta}^{2}}}{P^{\prime}t-P^{\prime}-\sigma_{\eta}^{2}}k_{i} +\left(\frac{t\sqrt{P^{\prime}\sigma_{\eta}^{2}}}{P^{\prime}t-P^{\prime}-\sigma_{\eta}^{2}}\right)^{2}-\left(\frac{t\sqrt{P^{\prime}\sigma_{\eta}^{2}}}{P^{\prime}t-P^{\prime}-\sigma_{\eta}^{2}}\right)^{2} \right)\right) dk_{i}\\
	&=\frac{1}{\sqrt{2\pi}} \exp\left(\frac{t^{2}P^{\prime}\sigma_{\eta}^{2}}{2(P^{\prime}+\sigma_{\eta}^{2})(P^{\prime}+\sigma_{\eta}^{2}-P^{\prime}t)}\right)\cdot\int_{k_{i}}\exp\left(\left(\frac{P^{\prime}t-P^{\prime}-\sigma_{\eta}^{2}}{2(P^{\prime}+\sigma_{\eta}^{2})}\right)\left(k_{i}-\frac{t\sqrt{P^{\prime}\sigma_{\eta}^{2}}}{P^{\prime}t-P^{\prime}-\sigma_{\eta}^{2}}\right)^{2}\right) dk_{i}\\
	&=\frac{1}{\sqrt{2\pi}} \exp\left(\frac{t^{2}P^{\prime}\sigma_{\eta}^{2}}{2(P^{\prime}+\sigma_{\eta}^{2})(P^{\prime}+\sigma_{\eta}^{2}-P^{\prime}t)}\right)\cdot \sqrt{\frac{2\pi (P^{\prime}+\sigma_{\eta}^{2})}{P^{\prime}+\sigma_{\eta}^{2}-P^{\prime}t}}\\
	&\,\,\cdot \int_{k_{i}}\sqrt{\frac{P^{\prime}+\sigma_{\eta}^{2}-P^{\prime}t}{2\pi(P^{\prime}+\sigma_{\eta}^{2})}}\cdot\exp\left(\left(\frac{P^{\prime}t-P^{\prime}-\sigma_{\eta}^{2}}{2(P^{\prime}+\sigma_{\eta}^{2})}\right)\left(k_{i}-\frac{t\sqrt{P^{\prime}\sigma_{\eta}^{2}}}{P^{\prime}t-P^{\prime}-\sigma_{\eta}^{2}}\right)^{2}\right) dk_{i}\\
	&=\sqrt{\frac{ (P^{\prime}+\sigma_{\eta}^{2})}{P^{\prime}+\sigma_{\eta}^{2}-P^{\prime}t}}\cdot \exp\left(\frac{t^{2}P^{\prime}\sigma_{\eta}^{2}}{2(P^{\prime}+\sigma_{\eta}^{2})(P^{\prime}+\sigma_{\eta}^{2}-P^{\prime}t)}\right).\\
\end{split}
\end{equation}}%
Substituting (\ref{EG-fl}) and (\ref{EKi}) into (\ref{EGKN-fl}), the $\mathbb{E}_{G,K^{N^{\prime}}} [ \exp(tT) ]$ is re-written by
\begin{equation}\label{MGF-T-final}
\begin{split}
    \mathbb{E}_{G,K^{N^{\prime}}} [ \exp(tT) ]=
    \left(\frac{P^{\prime}+\sigma_{\eta}^{2}}{P^{\prime}+\sigma_{\eta}^{2}-P^{\prime}t}\right)^{\frac{N^{\prime}}{2}}\cdot\exp\left(\frac{\sigma_{\eta}^{2}t^2N^{\prime}P^{\prime}}{2(P^{\prime}+\sigma_{\eta}^{2})(P^{\prime}+\sigma_{\eta}^{2}-P^{\prime}t)}  \right) \cdot \left(\frac{1}{1-t}\right)^{\frac{1}{2}}.
\end{split}
\end{equation}
Comparing (\ref{Etlambdaxy-intsxy-f-fl}) and (\ref{MGF-T-final}), we observe that
\begin{equation}\label{MGF-T-U-fl}
\begin{split}
\mathbb{E}_{{S, X^{N^{\prime}}, Y^{N^{\prime}}}}[ \exp(tU) ]
    =\mathbb{E}_{G,K^{N^{\prime}}}[ \exp(tT) ]
    =\left(\frac{P^{\prime}+\sigma_{\eta}^{2}}{P^{\prime}+\sigma_{\eta}^{2}-P^{\prime}t}\right)^{\!\frac{N^{\prime}}{2}}
      \exp\left(\frac{\sigma_{\eta}^{2}t^{2}N^{\prime}P^{\prime}}{2(P^{\prime}+\sigma_{\eta}^{2})(P^{\prime}+\sigma_{\eta}^{2}-P^{\prime}t)}\right)
      \left(\frac{1}{1-t}\right)^{\!\frac{1}{2}} .
\end{split}
\end{equation}
Thus the moment generating functions of $U$ and $T$ are identical, which completes the proof.

\section{The proof of lemma \ref{E-Var-jsj-T} in Appendix \ref{App2-2} }\label{App-EVJ}
\renewcommand{\theequation}{F\arabic{equation}}
Note that
{\begin{equation}
    \begin{split}
    &T^{*}_{0} \triangleq \frac{1}{2}(G^2-1),  \quad
    T^{*}_{i} \triangleq A\cdot(K_i^2-1) + B\cdot K_i, \quad 1\leq i \leq N^{\prime},
    \end{split}
\end{equation}}%
where $G, K_i (i \in\{1,\ldots,N^{\prime}\})$ are independent normalized Gaussian random variables
and
\begin{equation}\label{AB}
    A \triangleq \frac{P^{\prime}}{2(P^{\prime}+\sigma_{\eta}^{2})},\qquad
    B \triangleq -\frac{\sqrt{P^{\prime}\sigma_{\eta}^{2}}}{P^{\prime}+\sigma_{\eta}^{2}}.
\end{equation}
For arbitrary normalized Gaussian variable, i.e., $X\sim\mathcal{N}(0,1)$, it is not difficult to check that
{\begin{equation}\label{gsj}
    \begin{split}
      \mathbb{E}[X] = 0, \quad \mathbb{E}[X^2] = 1, \quad \mathbb{E}[X^3] = 0, \quad \mathbb{E}[X^4] = 3,
	  \quad \mathbb{E}[X^6]=15,\quad \mathbb{E}[|X|^3] = 2\sqrt{\frac{2}{\pi}}.
 \end{split}
\end{equation}}%

\subsection{Sum of the Means}
First, for $T^{*}_0$ we obtain
\begin{equation}\label{E-T0-fl}
    \begin{split}
       \mathbb{E}[T^{*}_0] = \mathbb{E}\left[ \frac{1}{2}(G^2-1)\right]
        =\frac{1}{2}\left(\mathbb{E}\left[G^2\right]-1\right)
        =0.
    \end{split}
\end{equation}%

For $j\in\{1,\dots,N^{\prime}\}$, we have
\begin{equation}\label{E-T1N-fl}
    \begin{split}
    \mathbb{E}\left[T^{*}_j\right] &= \mathbb{E}\left[A\cdot(K_j^2 - 1) + B\cdot K_j\right]
    = A\cdot(\mathbb{E}\left[K_j^2 \right]- 1) + B\cdot \mathbb{E}\left[ K_j\right]
    =0
\end{split}
\end{equation}%
Therefore, the Sum of the means is given by
\begin{equation}\label{E-T0-T1N-fl}
    \begin{split}
    \sum_{i=0}^{N^{\prime}}\mathbb{E}[ T^{*}_{i}]=0.
\end{split}
\end{equation}%

\subsection{Sum of the Variances}
First, for $T^{*}_0$ we obtain
\begin{equation}\label{E-variances-fl}
    \begin{split}
        {\rm Var}[T^{*}_0] = \mathbb{E}\left[ \left(\frac{1}{2}(G^2-1)\right)^{2}\right]
        = \mathbb{E}\left[ \frac{1}{4}(G^4-2G^{2}+1)\right]
        = \frac{1}{4}\left(\mathbb{E}\left[G^4\right]-2\mathbb{E}\left[G^{2}\right]+1\right)
        =\frac{1}{2}.
    \end{split}
\end{equation}%
For $j\in\{1,\dots,N^{\prime}\}$, we have
\begin{equation}\label{E-variances-T1Nfl}
    \begin{split}
    {\rm Var}[T^{*}_j] &= \mathbb{E}\left[(T^{*}_j)^2\right]=\mathbb{E}\left[(A\cdot(K_j^2 - 1) + B\cdot  K_j)^2\right]\\
    &=\mathbb{E}\left[ A^2 (K_j^2 - 1)^2 + 2AB K_j (K_j^2 - 1) + B^2 K_j^2\right]\\
    &=\mathbb{E}\left[A^2 (K_j^4 - 2K_j^2 + 1) + 2AB (K_j^3 - K_j) + B^2 K_j^2\right]\\
    &=A^2 (\mathbb{E}[K_j^4] - 2\mathbb{E}[K_j^2] + 1) + 2AB (\mathbb{E}[K_j^3] - \mathbb{E}[K_j]) + B^2 \mathbb{E}[K_j^2]\\
    &= 2A^2 + B^2.
\end{split}
\end{equation}%
Subsequently, substituting (\ref{AB}) into (\ref{E-variances-T1Nfl}), the ${\rm Var}[T^{*}_j]$ is re-written by
\begin{equation}\label{E-variances-T1N-f-fl}
    \begin{split}
    {\rm Var}[T^{*}_j] &= 2A^2 + B^2
    =2\left(\frac{P^{\prime}}{2(P^{\prime}+\sigma_{\eta}^{2})}\right)^{2}+\left(-\frac{\sqrt{P^{\prime}\sigma_{\eta}^{2}}}{P^{\prime}+\sigma_{\eta}^{2}}\right)^{2}
    =\frac{{P^{\prime}}^{2}}{2(P^{\prime}+\sigma_{\eta}^{2})^{2}}+\frac{P^{\prime}\sigma_{\eta}^{2}}{(P^{\prime}+\sigma_{\eta}^{2})^{2}}
    =\frac{P^{\prime}(P^{\prime}+2\sigma_{\eta}^{2})}{2(P^{\prime}+\sigma_{\eta}^{2})^{2}}.
\end{split}
\end{equation}%
Finally, the Sum of the variances is given by
\begin{equation}\label{E-variances-T0N-f-fl}
    \begin{split}
     \sum_{i=0}^{N^{\prime}} {\rm Var}\left[  T^{*}_{i} \right]
     ={\rm Var}[T^{*}_0]+ \sum_{j=1}^{N^{\prime}}{\rm Var}[T^{*}_j]
     =\frac{1}{2}+\frac{N^{\prime}P^{\prime}(P^{\prime}+2\sigma_{\eta}^{2})}{2(P^{\prime}+\sigma_{\eta}^{2})^{2}}
     = V_{d}+N^{\prime}\cdot V(P^{\prime}).
\end{split}
\end{equation}%
where $V_{d}\triangleq \frac{1}{2}$ and $V(P^{\prime})\triangleq \frac{P^{\prime}(P^{\prime}+2\sigma_{\eta}^{2})}{2(P^{\prime}+\sigma_{\eta}^{2})^{2}}$.
\subsection{Sum of the Third Absolute Moments}

For $i \in\{0,\ldots,N^{\prime}\}$, let $\rho_i \triangleq \mathbb{E}[|(T^{*}_i)^3|]$ denotes the third absolute moment, which is strictly positive ($\rho_i \geq 0$).

First, for $T^{*}_0$ we obtain
\begin{equation}\label{jsj-T0-fl}
    \begin{split}
    \rho_0 &=\mathbb{E}[|(T^{*}_0)^{3}|]=\mathbb{E}[|T^{*}_0|^{3}]\\
    &=\mathbb{E}\left[\left\lvert \frac{1}{2}(G^2-1)\right\rvert^{3} \right]
    = \frac{1}{8}\mathbb{E}[|G^2-1|^3] \\
    &\overset{(a)}{\leq } \frac{1}{8}\mathbb{E}[(G^2+1)^3] \\
    &= \frac{1}{8}\mathbb{E}[G^6 + 3G^4 + 3G^2 + 1] \\
    &= \frac{1}{8}\left(\mathbb{E}[G^6] + 3\mathbb{E}[G^4] + 3\mathbb{E}[G^2] + 1\right)\\
    &\overset{(b)}{= }\frac{1}{8}(15 + 3\cdot 3 + 3\cdot 1 + 1) \\
    &=  3.5 < +\infty,
\end{split}
\end{equation}
where $(a)$ follows from triangle inequality $|x+y|\leq |x|+|y|$.
and $(b)$ follows from that for $G \sim \mathcal{N}(0,1)$, we have $\mathbb{E}[G^2]=1$, $\mathbb{E}[G^4]=3$, and $\mathbb{E}[G^6]=15$.

Note that $\rho_i=\rho_1$ for all $1 \leq i\leq N^{\prime}$.
Then, for $T^{*}_1$ we obtain
\begin{equation}\label{jsj-T1-fl}
    \begin{split}
       \rho_1 &=\mathbb{E}[|(T^{*}_1)^{3}|]=\mathbb{E}[|T^{*}_1|^{3}]\\
       &= \mathbb{E}\left[\left\lvert A\cdot(K_1^{2}-1) + B\cdot K_1\right\rvert^{3} \right] \\
       & \overset{(c)}{\leq} \mathbb{E}\left[\left(\left\lvert A\cdot (K_1^{2}-1) \right\rvert+ \left\lvert B\cdot K_1\right\rvert\right)^{3} \right]\\
       &\overset{(d)}{\leq}\mathbb{E}\left[8\left\lvert A\cdot (K_1^{2}-1) \right\rvert^{3}+ 8\left\lvert B\cdot K_1\right\rvert^{3} \right]\\
       &=8|A|^{3}\mathbb{E}\left[\left\lvert K_1^{2}-1 \right\rvert^{3}\right]+8|B|^{3}\mathbb{E}\left[\left\lvert K_1 \right\rvert^{3}\right]\\
       & \overset{(e)}{\leq} 8|A|^{3}\mathbb{E}\left[(K_1^2+1)^3 \right]+8|B|^{3}\mathbb{E}\left[\left\lvert K_1 \right\rvert^{3}\right]\\
       & = 8|A|^{3}\mathbb{E}\left[K_1^6 + 3K_1^4 + 3K_1^2 + 1 \right]+8|B|^{3}\mathbb{E}\left[\left\lvert K_1 \right\rvert^{3}\right]\\
       & = 8|A|^{3}\left(\mathbb{E}[K_1^6] + 3\mathbb{E}[K_1^4] + 3\mathbb{E}[K_1^2] + 1\right)+8|B|^{3}\mathbb{E}\left[\left\lvert K_1 \right\rvert^{3}\right]\\
       & \overset{(f)}{=} 8|A|^3 \cdot 28 + 8|B|^3 \cdot 2\sqrt{\frac{2}{\pi}}\\
       &< +\infty.
    \end{split}
\end{equation}%
where $(c)$ and $(e)$ follows from triangle inequality $|x+y|\leq |x|+|y|$,
$(d)$ follows from $(x+y)^3 \leq 8(x^3+y^3)$ for nonnegative $x,y$,
and $(f)$ follows from that for $K_{1} \sim \mathcal{N}(0,1)$, we have $\mathbb{E}[K_{1}^2]=1$, $\mathbb{E}[K_{1}^4]=3$, $\mathbb{E}[K_{1}^6]=15$ and $\mathbb{E}[|K_i|^3] = 2\sqrt{2/\pi}$.

Hence the total third absolute moment of the sum is
\begin{equation}\label{total-rho}
   0 \leq \rho \triangleq \sum_{0}^{N^{\prime}}\rho_{i}= \rho_0+ N^{\prime}\cdot \rho_{1} < +\infty.
\end{equation}

\section{The Proof of Lemma \ref{pu} in Section \ref{sec4-2}}\label{App-pu}
\setcounter{equation}{0}
\renewcommand{\theequation}{G\arabic{equation}}
From our scheme stated in Section \ref{sec4-2}, it is not difficult to show that
{\small \begin{equation}\label{1to1-s-d-fl}
	\begin{split}
	\mathcal{P}_{d, N} &\overset{(a)}{\leq} 2Q\left(\sqrt{\frac{d}{\alpha_{N}}}\right)
	\overset{(b)}{=} 2Q\left( \left(\frac{d}{\sigma_{s}^{2}}\right)^{\frac{1}{2}}\cdot \left( 1+\frac{P}{\sigma_{\eta}^{2}}\right)^{\frac{N}{2}}\right)\\
	&=2Q\left(\exp\left(-\frac{1}{2}\ln\left(\frac{\sigma_{s}^{2}}{d}\right)+\frac{N}{2}\ln\left(1+\frac{P}{\sigma_{\eta}^{2}}\right)\right)\right)\\
	&\overset{(c)}{=}2Q\left(\exp\left(-R(d)+N\cdot C(P)\right)\right),\\
\end{split}
\end{equation}}%
where
$(a)$ follows from
{\small \begin{equation}\label{1to1-pr}
	\begin{split}
	\mathcal{P}_{d, N}&=\mathbb{P}\left\{(S-\widehat{S}_{N})^{2} \geq d \right\} =\mathbb{P}\left\{ \epsilon_{N}^{2} \geq d \right\}
	=\mathbb{P}\left\{ (\epsilon_{N}\geq \sqrt{d} )\bigcup  (\epsilon_{N}\leq -\sqrt{d} )\right\} \\
	&\leq \mathbb{P}\left\{ \epsilon_{N}\geq \sqrt{d} \right\}+\mathbb{P}\left\{ \epsilon_{N}\leq -\sqrt{d} \right\}
	=2Q\left(\sqrt{\frac{d}{\alpha_{N}}}\right);
\end{split}
\end{equation}}%
$(b)$ follows from for $i\in \{1,\ldots,N\}$,
{\small \begin{equation}\label{1to1-ai}
	\alpha_{i}\triangleq  {\rm Var}(\epsilon_{i}) = \sigma_{s}^{2}\cdot\left(\frac{\sigma_{\eta}^{2}}{P+\sigma_{\eta}^{2}}\right)^{i};\\
\end{equation}}%
$(c)$ follows from {\small $C(P)\triangleq \frac{1}{2}\ln\left(1+\frac{P}{\sigma_{\eta}^{2}}\right)$} and {\small $R(d) \triangleq \frac{1}{2}\ln\left(\frac{\sigma_{s}^{2}}{d}\right)$}.\\

\section{The Proof of Lemma \ref{Ln-l} in Section \ref{sec4-2}}\label{App-ln}
\setcounter{equation}{0}
\renewcommand{\theequation}{H\arabic{equation}}

For the eavesdropper, the information leakage rate is given by
{\small \begin{equation}\label{1to1-Ln-fl}
	\begin{aligned}
		L_{N}&=\frac{I(S;Z^{N},\widetilde{Z}^{N})}{N}=\frac{h(S)-h(S|Z^{N},\widetilde{Z}^{N})}{N}.\\
	\end{aligned}
\end{equation}}%
Then, we obtain the upper bound on $L_{N}$ by establishing the lower bound of $h(S|Z^{N},\widetilde{Z}^{N})$.
{\small
\begin{align}\label{1to1-h(s|Z)}
		h(S|&Z^{N},\widetilde{Z}^{N})\notag \\
		&=h(S|X_{1}+\eta_{e,1},\ldots,X_{N}+\eta_{e,N},X_{1}+\eta_{1}+\widetilde{\eta}_{e,1},\ldots,X_{N}+\eta_{N}\notag
		 +\widetilde{\eta}_{e,N})\notag \\
		&\overset{(a)}{\geq} h(S| X_{1}+\eta_{e,1},\ldots,X_{N}+\eta_{e,N},X_{1}+\eta_{1}+\widetilde{\eta}_{e,1}, \ldots,X_{N}+\eta_{N}
		 +\widetilde{\eta}_{e,N},\eta_{1}, \ldots, \eta_{N})\notag \\
		&\overset{(b)}{=}h(S|\eta_{e,1}+\lambda S,\ldots,\eta_{e,N}-\lambda\cdot \kappa^{\frac{N-1}{2}}\cdot S+\left(1-\kappa \right)\cdot\kappa^{\frac{N-3}{2}}\cdot\eta_{1}
		 -\sum_{k=2}^{N-1}\left(1-\kappa \right)\cdot\kappa^{\frac{N-k-2}{2}}\cdot\eta_{k},\eta_{1}+\widetilde{\eta}_{e,1}+\lambda S,\ldots,\eta_{N}+\widetilde{\eta}_{e,N}\notag \\
		&\quad - \lambda \cdot\kappa^{\frac{N-1}{2}}\cdot S+(1-\kappa)\cdot\kappa^{\frac{N-3}{2}}\cdot\eta_{1}-\sum_{k=2}^{N-1} \left(1-\kappa \right)\cdot\kappa^{\frac{N-k-2}{2}}\cdot\eta_{k},
		 \eta_{1}, \ldots, \eta_{N})\notag \\
		&=h(S|\eta_{1}, \ldots, \eta_{N},\eta_{e,1}+\lambda S,\ldots, \eta_{e,N}-\lambda \cdot\kappa^{\frac{N-1}{2}}\cdot S,\widetilde{\eta}_{e,1}+\lambda S,
		 \ldots,\widetilde{\eta}_{e,N}-\lambda \cdot\kappa^{\frac{N-1}{2}}\cdot S)\notag \\
		&\overset{(c)}{=}h(S|\eta_{e,1}+\lambda S,\ldots,\eta_{e,N}-\lambda \cdot\kappa^{\frac{N-1}{2}}\cdot S,\widetilde{\eta}_{e,1}+\lambda S,\ldots, \widetilde{\eta}_{e,N}
		  -\lambda \cdot\kappa^{\frac{N-1}{2}}\cdot S)\notag \\
	    &=h(S)-I(S;\eta_{e,1}+\lambda S,\ldots,\eta_{e,N}-\lambda \cdot\kappa^{\frac{N-1}{2}}\cdot S, \widetilde{\eta}_{e,1}+\lambda S,\ldots,
		 \widetilde{\eta}_{e,N}-\lambda \cdot\kappa^{\frac{N-1}{2}}\cdot S)\notag \\
		&=h(S)-h(\eta_{e,1}+\lambda S,\ldots,\eta_{e,N}-\lambda \cdot\kappa^{\frac{N-1}{2}}\cdot S,\widetilde{\eta}_{e,1}+\lambda S,\ldots,
		 \widetilde{\eta}_{e,N}-\lambda \cdot\kappa^{\frac{N-1}{2}}\cdot S) \notag \\
		&\quad +h(\eta_{e,1}+\lambda S,\ldots,\eta_{e,N}-\lambda \cdot\kappa^{\frac{N-1}{2}}\cdot S,
		\widetilde{\eta}_{e,1}+\lambda S,\ldots,\widetilde{\eta}_{e,N}-\lambda \cdot\kappa^{\frac{N-1}{2}}\cdot S|S)\notag \\
		&=h(S)-h(\eta_{e,1}+\lambda S,\ldots,\eta_{e,N}-\lambda \cdot\kappa^{\frac{N-1}{2}}\cdot S, \widetilde{\eta}_{e,1}+\lambda S,\ldots,
		 \widetilde{\eta}_{e,N}-\lambda \cdot\kappa^{\frac{N-1}{2}}\cdot S)+h(\eta_{e,1},\ldots,\eta_{e,N},\widetilde{\eta}_{e,1},\ldots,\widetilde{\eta}_{e,N}|S)\notag \\
		&\overset{(d)}{=}h(S)-h(\eta_{e,1}+\lambda S,\ldots,\eta_{e,N}-\lambda \cdot\kappa^{\frac{N-1}{2}}\cdot S,\widetilde{\eta}_{e,1}+\lambda S,\ldots,
		 \widetilde{\eta}_{e,N}-\lambda \cdot\kappa^{\frac{N-1}{2}}\cdot S) +h(\eta_{e,1},\ldots,\eta_{e,N},\widetilde{\eta}_{e,1},\ldots,\widetilde{\eta}_{e,N})\notag \\
		&\overset{(e)}{=}h(S)+\sum_{i=1}^{N}h(\eta_{e,i})+\sum_{i=1}^{N}h(\widetilde{\eta}_{e,i})-h(\eta_{e,1}+\lambda S,\ldots, \eta_{e,N}
		 -\lambda \cdot\kappa^{\frac{N-1}{2}}\cdot S, \widetilde{\eta}_{e,1}+\lambda S,\ldots,\widetilde{\eta}_{e,N}-\lambda \cdot\kappa^{\frac{N-1}{2}}\cdot S)\notag \\
		&\overset{(f)}{\geq}h(S)+\sum_{i=1}^{N}h(\eta_{e,i})+\sum_{i=1}^{N}h(\widetilde{\eta}_{e,i})-h(\eta_{e,1}+\lambda S)-h(\widetilde{\eta}_{e,1}+\lambda S)
		 -\sum_{i=2}^{N}h(-\lambda \cdot\kappa^{\frac{i-1}{2}}\cdot S+\eta_{e,i})-\sum_{i=2}^{N}h(-\lambda \cdot\kappa^{\frac{i-1}{2}}\cdot S+\widetilde{\eta}_{e,i})\notag \\
		&\overset{(g)}{\geq}h(S)+\frac{N}{2}\ln(2\pi e \sigma_{e}^{2})-\frac{1}{2}\sum_{i=1}^{N}\ln(2 \pi e(P\cdot\kappa^{i-1}+\sigma_{e}^{2}))
		 +\frac{N}{2}\ln(2\pi e \widetilde{\sigma}_{e}^{2})-\frac{1}{2}\sum_{i=1}^{N}\ln(2 \pi e(P\cdot\kappa^{i-1}+\widetilde{\sigma}_{e}^{2}))\notag \\
		&=h(S)-\frac{1}{2}\sum_{i=1}^{N}\left(\ln(\frac{P}{\sigma_{e}^{2}}\cdot\kappa^{i-1}+1)+\ln(\frac{P}{\widetilde{\sigma}_{e}^{2}}\cdot\kappa^{i-1}+1)\right)\notag \\
		&\overset{(h)}{\geq}h(S)-\frac{1}{2}\sum_{i=1}^{N}\frac{P}{\sigma_{e}^{2}}\cdot\kappa^{i-1}-\frac{1}{2}\sum_{i=1}^{N}\frac{P}{\widetilde{\sigma}_{e}^{2}}\cdot\kappa^{i-1}\notag \\
		&=h(S)-\frac{1}{2}\left(\frac{P(\sigma_{e}^{2}+\widetilde{\sigma}_{e}^{2})}{\sigma_{e}^{2}\widetilde{\sigma}_{e}^{2}}\right)\cdot\sum_{i=1}^{N}\kappa^{i-1}\notag \\
        &\overset{(i)}{=}h(S)-\frac{1}{2}\left(\frac{P(\sigma_{e}^{2}+\widetilde{\sigma}_{e}^{2})}{\sigma_{e}^{2}\widetilde{\sigma}_{e}^{2}}\right)\cdot\frac{P+\sigma_{\eta}^{2}}{P}\cdot\left(1-\kappa^{N}\right)\notag \\
		&=h(S)-\frac{1}{2}\left(\frac{(P+\sigma_{\eta}^{2})(\sigma_{e}^{2}+\widetilde{\sigma}_{e}^{2})}{\sigma_{e}^{2}\widetilde{\sigma}_{e}^{2}}\right)\left(1-\kappa^{N}\right),
\end{align}}%
where $\lambda \triangleq \sqrt{\frac{P}{\sigma_{s}^{2}}}$, $\kappa \triangleq \frac{\sigma_{\eta}^{2}}{P+\sigma_{\eta}^{2}}$, $\tau \triangleq \frac{\sqrt{P\sigma_{s}^{2}}}{P+\sigma_{\eta}^{2}}$,\\
$(a)$ follows from conditions reduce entropy;\\
$(b)$ follows from the fact that, given the coding scheme in Section \ref{sec4-2}, it is not difficult to check that
{\small \begin{equation} \label{1to1-Xn}
	X_{n}=
	\left\{
		\begin{array}{lr}
			\lambda S,\,\, n=1, \, \vspace{8pt}&\\
			\sqrt{\frac{P}{\alpha_{n-1}}}\epsilon_{n-1}, \, \, n\in\{2,\ldots,N\},&
		\end{array}
	\right.
\end{equation}}%
where
{\small \begin{equation} \label{1to1-An}
	\begin{aligned}
	\sqrt{\frac{P}{\alpha_{n-1}}}\epsilon_{n-1}=&-\lambda \cdot\kappa^{\frac{n-1}{2}}\cdot S +\left(1-\kappa \right)\cdot\kappa^{\frac{n-3}{2}}\cdot\eta_{1}
	-\sum_{k=2}^{n-1}\left(1-\kappa \right)\cdot\kappa^{\frac{n-k-2}{2}}\cdot\eta_{k};
	\end{aligned}
\end{equation}}
$(c)$ follows from the fact that $\eta_{1}, \ldots, \eta_{N}$ are independent of $S,\eta_{e,1}+\lambda S,\ldots, \eta_{e,N}-\lambda \cdot\kappa^{\frac{N-1}{2}}\cdot S,\widetilde{\eta}_{e,1}+\lambda S,\ldots, \widetilde{\eta}_{e,N}-\lambda \cdot\kappa^{\frac{N-1}{2}}\cdot S$;\\
$(d)$ follows from the fact that $S$ is independent of $\eta_{e,1},\ldots,\eta_{e,N},\widetilde{\eta}_{e,1},\ldots,\widetilde{\eta}_{e,N}$;\\
$(e)$ follows from the fact that $\eta_{e,i},\widetilde{\eta}_{e,i}$ ($i \in\{1,\ldots,N\}$) are i.i.d. random variables and are independent of each other;\\
$(f)$ follows from conditions reduce entropy;\\
$(g)$ follows from for $i\in\{1,\ldots,N\}$, $\eta_{e,i}\sim \mathcal{N}(0, \sigma_{e}^2)$, $\widetilde{\eta}_{e,i}\sim \mathcal{N}(0, \widetilde{\sigma}_{e}^2)$, and the fact that $S$ is independent of $\eta_{e,1},\ldots,\eta_{e,N},\widetilde{\eta}_{e,1},\ldots,\widetilde{\eta}_{e,N}$, hence we have\\
{\small \begin{equation}
	\begin{split}
		h&(\eta_{e,1}+\lambda S)+\sum_{i=2}^{N}h(-\lambda \cdot\kappa^{\frac{i-1}{2}}\cdot S+\eta_{e,i})
		\leq\frac{1}{2}\sum_{i=1}^{N}\ln(2 \pi e(P\cdot\kappa^{i-1}+\sigma_{e}^{2})),
		\end{split}
\end{equation}
\begin{equation}
	\begin{split}
		h&(\widetilde{\eta}_{e,1}+\lambda S)+\sum_{i=2}^{N}h(-\lambda \cdot\kappa^{\frac{i-1}{2}}\cdot S+\widetilde{\eta}_{e,i})
		\leq\frac{1}{2}\sum_{i=1}^{N}\ln(2 \pi e(P\cdot\kappa^{i-1}+\widetilde{\sigma}_{e}^{2}));
		\end{split}
\end{equation}
}%
$(h)$ follows from the inequality $\ln(1+x)\leq x$;\\
$(i)$ follows from the sum of a geometric sequence.\\
Finally, substituting (\ref{1to1-h(s|Z)}) into (\ref{1to1-Ln-fl}), the upper bound of the information leakage rate is given by
{\small \begin{equation}\label{1to1-Ln-f-fl}
	\begin{aligned}
		L_{N}&=\frac{I(S;Z^{N},\widetilde{Z}^{N})}{N}=\frac{h(S)-h(S|Z^{N},\widetilde{Z}^{N})}{N}\\
		&\leq \frac{1}{2N}\left(\frac{(P+\sigma_{\eta}^{2})(\sigma_{e}^{2}+\widetilde{\sigma}_{e}^{2})}{\sigma_{e}^{2}\widetilde{\sigma}_{e}^{2}}\right)\left(1-\kappa^{N}\right),
	\end{aligned}
\end{equation}}%
which completes the proof.

\newpage

\end{document}